\documentclass{article}

\usepackage[a4paper]{geometry}
\usepackage[english]{babel}
\usepackage{authblk}

\usepackage[dvipsnames]{xcolor}
\usepackage{amsmath}
\usepackage{amsthm}
\usepackage[font=small, labelfont=bf, labelsep=space]{caption}

\usepackage{aliascnt}

\theoremstyle{plain}
\newtheorem{theorem}{Theorem}

\newaliascnt{lemma}{theorem}
\newtheorem{lemma}[lemma]{Lemma}
\aliascntresetthe{lemma}

\newaliascnt{corollary}{theorem}

\aliascntresetthe{corollary}

\newaliascnt{proposition}{theorem}
\newtheorem{proposition}[proposition]{Proposition}
\aliascntresetthe{proposition}

\newaliascnt{exercise}{theorem}

\aliascntresetthe{exercise}

\newaliascnt{definition}{theorem}
\newtheorem{definition}[definition]{Definition}
\aliascntresetthe{definition}

\newaliascnt{conjecture}{theorem}

\aliascntresetthe{conjecture}

\newaliascnt{observation}{theorem}

\aliascntresetthe{observation}

\theoremstyle{definition}
\newaliascnt{example}{theorem}
\newtheorem{example}[example]{Example}
\aliascntresetthe{example}

\theoremstyle{remark}
\newaliascnt{note}{theorem}

\aliascntresetthe{note}
\newtheorem*{note*}{Note}

\newaliascnt{remark}{theorem}

\aliascntresetthe{remark}
\newtheorem*{remark*}{Remark}

\usepackage[%
  colorlinks = true,
  citecolor  = RoyalBlue,
  linkcolor  = RoyalBlue,
  urlcolor   = RoyalBlue,
  unicode,
]{hyperref}
\usepackage[nameinlink]{cleveref}

\crefname{lstlisting}{listing}{listings}
\Crefname{lstlisting}{Listing}{Listings}

\usepackage{listings}
\usepackage[T1]{fontenc}
\usepackage{url}
\usepackage[hyphenbreaks]{breakurl}
\usepackage[utf8]{inputenc}

\usepackage{graphicx} %
\usepackage{xparse}
\usepackage{xstring}
\usepackage{xspace}

\usepackage{bookmark}
\usepackage{metalogo}
\usepackage{etoolbox}

\usepackage[belowskip=0.4\baselineskip]{caption}
\usepackage{subcaption}

\usepackage{appendix}

\usepackage{algorithm}
\usepackage[noend]{algorithmic}

\usepackage{proof} %
\usepackage{amsmath} %
\usepackage{amssymb}
\usepackage{etoolbox}
\usepackage{mathtools}
\usepackage{stmaryrd}
\usepackage{mathpartir}

\usepackage{tikz}
\usetikzlibrary{positioning, quotes}
\usetikzlibrary{calc,intersections,through,backgrounds,matrix,patterns}
\usetikzlibrary{arrows, positioning, fit, shapes, backgrounds, patterns,
shapes.misc,shapes.geometric,decorations.markings}
\usetikzlibrary{tikzmark}

\newcommand{\ie}{i.e.\xspace}
\newcommand{\eg}{e.g.\xspace}
\newcommand{\cf}{cf.\xspace}

\newcommand{\etal}{et\,al.\xspace}

\makeatletter
\@ifundefined{st}{%
  
}{}
\@ifundefined{nd}{%
  
}{}
\@ifundefined{rd}{%
  
}{}
\makeatother

\newcommand{\mref}[2]{\hyperref[#2]{#1~\ref*{#2}}}
\definecolor{ultralightgray}{HTML}{eeeeee}

\newtoggle{REPORT}
\newtoggle{DEBUG}
\toggletrue{REPORT}
\toggletrue{DEBUG}

\definecolor{reportcolor}{HTML}{188781}
\definecolor{papercolor}{HTML}{536872}

\newtoggle{ANONYMOUS}
\togglefalse{ANONYMOUS}

\newtoggle{COMMENTS}
\toggletrue{COMMENTS}

\newcommand{\sts}{S2S\xspace}

\newcommand{\sccqpattern}{P}
\newcommand{\sccqtemplate}{H}

\newcommand{\vconcept}[1]{V_{#1}}

\newcommand{\rdftypea}{\texttt{a}}
\newcommand{\construct}{\texttt{CONSTRUCT}\xspace}
\newcommand{\select}{\texttt{SELECT}\xspace}

\newcommand{\var}{\operatorname{var}}

\newcommand{\prefixiri}[2]{\mathtt{#1{:}#2}}
\newcommand{\rdftype}{\prefixiri{rdf}{type}}
\newcommand{\metaprefixiri}[1]{\prefixiri{m}{#1}}
\newcommand{\metn}{\metaprefixiri{etn}}
\newcommand{\mnte}{\metaprefixiri{nte}}
\newcommand{\mnode}{\metaprefixiri{node}}
\newcommand{\medge}{\metaprefixiri{edge}}

\newcommand{\allshapes}{\mathcal{S}}
\newcommand{\shapesin}{\mathcal{S}_{\mathrm{in}}}
\newcommand{\shapesout}{\mathcal{S}_{\text{out}}}

\newcommand{\shaclvalid}[2]{\operatorname{valid}(#1,#2)}
\newcommand{\sparqleval}[2]{\llbracket #1 \rrbracket_{#2}}

\newcommand{\DLogics}{\mathcal{ALCHOI}}

\newcommand{\atomC}[2]{#1\,{:}\,#2}

\newcommand{\atomP}[3]{\atomC{(#1,#2)}{#3}}

\newcommand{\atomGenN}[1]{\atomC{#1}{\mnode}}
\newcommand{\atomGenE}[1]{\atomC{#1}{\medge}}
\newcommand{\atomPin}[2]{\atomC{(#1,#2)}{\mnte}}
\newcommand{\atomPout}[2]{\atomC{(#1,#2)}{\metn}}

\newcommand{\cvar}[1]{\overline{#1}^{\ConceptNames}}
\newcommand{\rvar}[1]{\overline{#1}^{\RoleNames}}
\newcommand{\ivar}[1]{\overline{#1}^{\IndividualNames}}

\newcommand{\atomGenp}[1]{\atomC{(#1,\ivar{#1})}{\rvar{#1}}}
\newcommand{\atomGenc}[1]{\atomC{#1}{\cvar{#1}}}
\newcommand{\atomCNot}[2]{\cvar{#1} \neq #2}
\newcommand{\atomPNot}[2]{\rvar{#1} \neq #2}

\newcommand{\ABox}{\mathcal{A}}
\newcommand{\TBox}{\mathcal{T}}
\newcommand{\Int}{{\mathcal{I}}}
\newcommand{\KB}{\mathcal{K}}

\newcommand{\ConceptNames}{\mathbf{C}}
\newcommand{\IndividualNames}{\mathbf{I}}
\newcommand{\RoleNames}{\mathbf{R}}
\newcommand{\Variables}{\mathbf{V}}

\newcommand{\concept}[1]{\texttt{#1}}
\newcommand{\prop}[1]{\texttt{#1}}

\newcommand{\voc}{\operatorname{voc}}
\newcommand{\graphout}{G_{\mathrm{out}}}
\newcommand{\graphin}{G_{\mathrm{in}}}

\newcommand{\Concept}[1]{\operatorname{C}_{#1}}

\newcommand{\vcg}{\operatorname{vcg}}

\newcommand{\graphmed}{G_{\mathrm{med}}}

\newcommand{\graphext}{G_{\mathrm{ext}}}
\newcommand{\graphvar}{G_{\Variables}}

\newenvironment{proofsketch}
{%
  \begin{proof}}
  {\end{proof}}

\newcommand{\propgraph}{PG\xspace}

\newcommand{\labelname}[1]{\texttt{#1}} %

\newcommand{\idname}[1]{\textbf{\texttt{#1}}}
\newcommand{\varname}[1]{\texttt{#1}}

\newcommand{\propname}[1]{\texttt{#1}} %
\newcommand{\stringvalue}[1]{\texttt{"#1"}} %

\newcommand{\gcoreEdge}[3]{(#1,#2){:}#3}
\newcommand{\gcoreNode}[1]{#1}

\newcommand{\gcoreEdgeS}[4]{(#1,#2){:}(#3, #4)}
\newcommand{\gcoreNodeS}[2]{(#1, #2)}

\newcommand{\gcoreEdgeF}[4]{(#1,#2){:}(#3, #4)}
\newcommand{\gcoreNodeF}[2]{(#1, #2)}

\newcommand{\whereLabelRelative}[1]{\ {:}#1}
\newcommand{\wherePropertyRelative}[1]{\ .#1}
\newcommand{\wherePropertyEqualsRelative}[2]{\ .#1 = #2}

\newcommand{\shapeassignment}{\ensuremath\Sigma\xspace}

\NewDocumentCommand{\iversonleft}{}%
{\ensuremath [\,}

\NewDocumentCommand{\iversonright}{}%
{\ensuremath \,]}

\NewDocumentCommand{\iverson}{m}%
{\ensuremath \iversonleft #1 \iversonright}

\NewDocumentCommand{\iversonerr}{O{(\phi,x)} m}%
{\ensuremath [\,#2\,]_{#1}}

\newcommand{\names}[1]{%
  \ifthenelse{\isempty{#1}}%
  {\ensuremath\mathrm{Names}\xspace}%
  {\ensuremath\mathrm{Names}(#1)}%
}%

\NewDocumentCommand{\evalpath}{O{\shapeassignment} O{n} O{G} m}%
{\ensuremath\llbracket #4 \rrbracket^{#1, #2, #3}}%

\NewDocumentCommand{\evalany}{O{\shapeassignment} O{x} O{G} m}%
{\ensuremath\llbracket #4 \rrbracket^{#1, #2, #3}}%

\NewDocumentCommand{\evaln}{O{\shapeassignment} O{n} O{G} m}%
{\ensuremath\llbracket #4 \rrbracket^{#1, #2, #3}}%

\NewDocumentCommand{\evalsn}{O{G} O{S} m}%
{\ensuremath\llbracket #3 \rrbracket^{#1, #2}}%

\NewDocumentCommand{\evalse}{O{G} O{S} m}%
{\ensuremath\llbracket #3 \rrbracket^{#1, #2}}%

\NewDocumentCommand{\evale}{O{\shapeassignment} O{e} O{G} m}%
{\ensuremath\llbracket #4 \rrbracket^{#1, #2, #3}}%

\NewDocumentCommand{\evalp}{O{\shapeassignment} O{(x,k)} O{G} m}%
{\ensuremath\llbracket #4 \rrbracket^{#1, #2, #3}}%

\NewDocumentCommand{\evalv}{O{\shapeassignment} O{v} O{G} m}%
{\ensuremath\llbracket #4 \rrbracket^{#1, #2, #3}}%

\NewDocumentCommand{\evalq}{O{G} m}%
{\ensuremath\llbracket #2 \rrbracket_{#1}}%

\NewDocumentCommand{\evali}{m}%
{\ensuremath\llbracket #1 \rrbracket_{\textrm{init}}}%

\NewDocumentCommand{\sevaln}{O{n} O{G} m}%
{\ensuremath\llbracket #3 \rrbracket_{#1, #2}}%

\NewDocumentCommand{\sevale}{O{e} O{G} m}%
{\ensuremath\llbracket #3 \rrbracket_{#1, #2}}%

\NewDocumentCommand{\nodeshape}{O{s_N} O{\phi_N} O{q_N}}%
{\ensuremath _N \langle #1,#2,#3 \rangle}

\NewDocumentCommand{\nodeshapeanon}{O{q_N} O{\phi_N}}%
{\ensuremath _N \langle #1,#2 \rangle}

\NewDocumentCommand{\nodeshapel}{O{s_N} O{\phi_N} O{q_N}}%
{\begin{align*}
_N \langle &#1,\\
           &#2,\\
           &#3 \rangle
\end{align*}}

\newcommand{\existsleft}[1]{\exists^{\leftarrow} #1}
\newcommand{\existsright}[1]{\exists^{\rightarrow} #1}

\newcommand\utimes{\mathbin{\ooalign{$\cup$\cr%
   \hfil\raise0.42ex\hbox{$\scriptscriptstyle\times$}\hfil\cr}}}
\newcommand\bigutimes{\mathop{\ooalign{$\bigcup$\cr%
   \hfil\raise0.36ex\hbox{$\scriptscriptstyle\boldsymbol{\times}$}\hfil\cr}}}

\NewDocumentCommand{\edgeshape}{O{s_E} O{\phi_E} O{q_E}}%
{\ensuremath _E \langle #1,#2,#3 \rangle}

\NewDocumentCommand{\edgeshapeanon}{O{q_E} O{\phi_E}}%
{\ensuremath _E \langle #1,#2 \rangle}

\NewDocumentCommand{\propertyshape}{O{s_P} O{\phi_P} O{q_P}}%
{\ensuremath _P \langle #1,#2,#3 \rangle \ }

\NewDocumentCommand{\csrule}{m m m}%
{$\rightarrow_{#1}$ & #2 & #3\\}

\definecolor{originalworld}{RGB}{86,180,233}
\definecolor{mirrorworld}{RGB}{213,94,0}
\definecolor{integrationhighlight}{RGB}{213,0,119}

\newcommand{\coremappingname}[2]{{_{#1 \mapsto #2}}}
\newcommand{\coremappingnameannotated}[3]{{_{#1 \mapsto #2}^{\mathit{#3}}}}

\newcommand{\coreevalt}[4]{\llbracket #3 \rrbracket_{#4}^{#2\text{-}#1}}
\newcommand{\coreevall}[3]{\llbracket #2 \rrbracket_{#3}^{#1}}

\newcommand{\cpart}[2]{\concept{V}_{\varname{#1}}^{\concept{#2}}}

\newcommand{\gcorename}{G\babelhyphen{nobreak}CORE\xspace}

\newcommand{\pgnames}{\mathit{pg}\xspace}

\newcommand{\pgexa}[1]{G_{#1}^{\pgnames}}

\newcommand{\rdfnames}{\mathit{rdf}\xspace}

\newcommand{\rdfexa}[1]{G_{#1}^{\rdfnames}}

\newcommand{\pref}[2]{\texttt{#1:#2}}
\newcommand{\triple}[3]{(#1,\ \allowbreak #2,\ \allowbreak #3)}
\newcommand{\triplei}[3]{(\concept{#1},\allowbreak\prop{#2},\allowbreak\concept{#3})}

\newcommand{\pgtordfname}{\coremappingname{\pgnames}{\rdfnames}}
\newcommand{\pgtordfnameannotated}[1]{\coremappingnameannotated{\pgnames}{\rdfnames}{#1}}
\newcommand{\pgtordf}[1]{\pgtordfname(#1)}
\newcommand{\pgtordfn}[1]{\pgtordfnameannotated{node}(#1)}
\newcommand{\pgtordfe}[1]{\pgtordfnameannotated{edge}(#1)}
\newcommand{\rdftopgname}{\pgtordfname^{-1}}
\newcommand{\rdftopg}[1]{\rdftopgname(#1)}

\newcommand{\toiriname}{\widehat{\_}}
\newcommand{\toiri}[1]{\widehat{#1}}

\newcommand{\simpleprogs}{sProGS\xspace}
\newcommand{\simpleprogss}{\mathit{pro}\xspace}

\newcommand{\progsexa}[1]{s_{#1}^{\simpleprogss}}
\newcommand{\progsexac}[2]{#1_{#2}^{\simpleprogss}}

\NewDocumentCommand{\progsevaln}{O{n} O{G} m}%
{\coreevalt{n}{\simpleprogss}{#3}{#1,#2}}%

\NewDocumentCommand{\progsevale}{O{e} O{G} m}%
{\coreevalt{e}{\simpleprogss}{#3}{#1,#2}}%

\NewDocumentCommand{\progsevalq}{O{G} m}%
{\coreevalt{q}{\simpleprogss}{#2}{#1}}%

\newcommand{\shaclname}{SHACL\xspace}
\newcommand{\shaclnames}{\mathit{shacl}\xspace}

\newcommand{\shaclexa}[1]{s_{#1}^{\shaclnames}}

\newcommand{\sprogstoshaclname}{\coremappingname{\simpleprogss}{\shaclnames}}
\newcommand{\sprogstoshaclnameannotated}[1]{\coremappingnameannotated{\simpleprogss}{\shaclnames}{#1}}
\newcommand{\sprogstoshacl}[1]{\sprogstoshaclname(#1)}
\newcommand{\sprogstoshaclnc}[1]{\sprogstoshaclnameannotated{nconstr}(#1)}
\newcommand{\sprogstoshaclnt}[1]{\sprogstoshaclnameannotated{ntarget}(#1)}
\newcommand{\sprogstoshaclec}[1]{\sprogstoshaclnameannotated{econstr}(#1)}
\newcommand{\sprogstoshaclet}[1]{\sprogstoshaclnameannotated{etarget}(#1)}

\newcommand{\eccqname}{ECCQ\xspace}
\newcommand{\eccqnames}{\mathit{eccq}\xspace}
\newcommand{\eccquery}[3]{\{ #1 \} \gets (\{ #2 \}, \{ #3\})}
\newcommand{\eccqpattern}{P}
\newcommand{\eccqtemplate}{H}
\newcommand{\eccqfilter}{F}
\newcommand{\eccqformal}{\eccqtemplate \gets (\eccqpattern, \eccqfilter)}

\newcommand{\eccqexa}[1]{q_{#1}^{\eccqnames}}

\newcommand{\eccqeval}[2]{\coreevall{\eccqnames}{#1}{#2}}
\newcommand{\eccqpatterneval}[2]{\coreevalt{p}{\eccqnames}{#1}{#2}}
\newcommand{\eccqtripleeval}[2]{\coreevalt{a}{\eccqnames}{#1}{#2}}

\newcommand{\gcname}{SGCQ\xspace}
\newcommand{\gcnames}{\mathit{sgcq}\xspace}

\newcommand{\gcformal}{\gctemplate \Leftarrow \gcpattern}
\newcommand{\gcformalname}{\gctemplate \Leftarrow \gcpattern}
\newcommand{\gcpattern}{\Gamma}
\newcommand{\gctemplate}{\mathcal{B}}

\newcommand{\gctemplatecomponent}{\beta}
\newcommand{\gcpatterncomponent}{\gamma}

\newcommand{\gcwhere}{\xi}
\newcommand{\gcset}{\oplus}
\newcommand{\gcremove}{\ominus}
\newcommand{\gcsr}{S}

\newcommand{\gcexa}[1]{q_{#1}^{\gcnames}}

\newcommand{\gceval}[2]{\coreevall{\gcnames}{#1}{#2}}
\newcommand{\gcceval}[2]{\coreevalt{t}{\gcnames}{#1}{#2}}
\newcommand{\gcweval}[2]{\coreevalt{e}{\gcnames}{#1}{#2}}
\newcommand{\gcmeval}[2]{\coreevalt{p}{\gcnames}{#1}{#2}}
\newcommand{\gcseval}[2]{\coreevalt{s}{\gcnames}{#1}{#2}}

\newcommand{\iqltemplate}{I_{T}}
\newcommand{\iqlpattern}{I_{P}}
\newcommand{\iqlformal}{\iqltemplate \leftarrow \iqlpattern}
\newcommand{\iqlname}{IQL\xspace}
\newcommand{\iqlnames}{\mathit{iql}\xspace}

\newcommand{\sgcoretoiqlname}{\coremappingname{\gcnames}{\iqlnames}}
\newcommand{\sgcoretoiqlnameannotated}[1]{\coremappingnameannotated{\gcnames}{\iqlnames}{#1}}
\newcommand{\sgcoretoiql}[1]{\sgcoretoiqlname(#1)}
\newcommand{\sgcoretoiqlc}[1]{\sgcoretoiqlnameannotated{template}(#1)}
\newcommand{\sgcoretoiqlm}[1]{\sgcoretoiqlnameannotated{pattern}(#1)}

\newcommand{\iqltoeccqname}{\coremappingname{\iqlnames}{\eccqnames}}
\newcommand{\iqltoeccqnameannotated}[1]{\coremappingnameannotated{\iqlnames}{\eccqnames}{#1}}
\newcommand{\iqltoeccq}[1]{\iqltoeccqname(#1)}
\newcommand{\iqltoeccqc}[1]{\iqltoeccqnameannotated{template}(#1)}
\newcommand{\iqltoeccqm}[1]{\iqltoeccqnameannotated{pattern}(#1)}

\newcommand{\sgcoretoeccqname}{\coremappingname{\gcnames}{\eccqnames}}

\newcommand{\sgcoretoeccq}[1]{\sgcoretoeccqname(#1)}

\usepackage{stmaryrd}
\usepackage{trimclip}
\makeatletter
\DeclareRobustCommand{\shortmaparrow}{%
  \mathrel{\mathpalette\short@to\relax}%
}
\newcommand{\short@to}[2]{%
  \mkern2mu
  \hspace{-0.08cm}\clipbox{{.5\width} 0 0 0}{$\m@th#1\vphantom{+}{\rightarrow}$}\hspace{-0.05cm}%
  }

\newcommand{\eA}{\concept{A}} %
\newcommand{\eB}{\concept{B}} %
\newcommand{\eE}{\concept{E}} %

\newcommand{\ea}{\prop{a}} %
\newcommand{\eb}{\prop{b}} %

\newcommand{\eAi}{\namein{\eA}}
\newcommand{\eBi}{\namein{\eB}}
\newcommand{\eEi}{\namein{\eE}}

\newcommand{\eAo}{\nameout{\eA}}
\newcommand{\eBo}{\nameout{\eB}}
\newcommand{\eEo}{\nameout{\eE}}

\newcommand{\exvb}{\concept{V}_\varname{x}}
\newcommand{\eyvb}{\concept{V}_\varname{y}}

\newcommand{\eevb}{\concept{V}_\varname{e}}

\newcommand{\exv}{\namevar{\exvb}}
\newcommand{\eyv}{\namevar{\eyvb}}

\lstdefinelanguage{SHACL}{
    keywords = {NodeShape,targetClass,property,path,class,minCount,targetSubjectsOf,a,xsd,sh,string}
}

\lstdefinelanguage{SPARQL}{
    keywords = {CONSTRUCT,WHERE,a,FILTER}
}

\lstdefinelanguage{GRAPHQL}{
    keywords = {person}
}

\lstdefinelanguage{GREMLIN}{
    keywords = {hasLabel,as,outE,inV,select}
}

\lstdefinelanguage{PSEUDO}{
    keywords = {CONSTRUCT,MATCH,WHERE,for,in,print}
}

\lstdefinelanguage{GCORE}{
    keywords = {CONSTRUCT,WHERE,FROM,MATCH,AND,SET,REMOVE,RETURN}
}

\lstdefinestyle{progs}{
  showstringspaces=false,
  basicstyle=\footnotesize\ttfamily,
  keywordstyle=\bfseries\color{black},
  commentstyle=\itshape\color{white!40!black},
  stringstyle=\color{white!40!black},
  morecomment=[l]{//},
  morestring=[b]",
  morekeywords={NODE,EDGE},
}

\lstdefinestyle{asp}{
  showstringspaces=false,
  basicstyle=\footnotesize\ttfamily,
  keywordstyle=\bfseries\color{black},
  commentstyle=\itshape\color{white!40!black},
  stringstyle=\color{white!40!black},
  morecomment=[l]{//},
  morestring=[b]",
  morekeywords={edge,label,property,constraint,path,nodeshape,greaterEq,assignN,assignE,satisfiesN,satisfiesE,edgeshape,not,targetN,targetE,node,min},
}

\tikzset{new node/.style={circle,minimum size=1.0em,inner sep=0.2em}}
\tikzset{corenode/.style={circle,minimum size=1.0em,inner sep=0.2em}}
\tikzset{outernode/.style={inner sep=0.3em}}

\SetSymbolFont{stmry}{bold}{U}{stmry}{m}{n}

\definecolor{colorgin}{HTML}{f58231}
\definecolor{colorgmed}{HTML}{800000}
\definecolor{colorgout}{HTML}{e6194B}
\definecolor{colorgvar}{HTML}{4363d8}

\newcommand{\namein}[1]{\textcolor{colorgin}{#1}}

\newcommand{\nameout}[1]{\textcolor{colorgout}{\ddot{#1}}}

\newcommand{\namevar}[1]{\textcolor{colorgvar}{#1}}

\usepackage{booktabs}

\tikzset{
  pglabel/.style={
     text centered,
     anchor=center,
     fill=white,
     text opacity=1,
     inner sep=0.2ex,
     font=\sffamily\footnotesize
  },
  arrout/.style={
    draw=black,
    text=black,
    arrows={-stealth},
    thick,
	pos=0.45
  },
}

\newcommand{\fullproofsref}{Full proofs are available in the appendix.}

\title{Transforming Shape Schemas with Composable Property-Graph Queries (Extended Version)}

\author[1]{Philipp Seifer}
\author[2]{Daniel Hernández}
\author[1]{Ralf Lämmel}
\author[2,3]{Steffen Staab}

\affil[1]{The Software Languages Team\\University of Koblenz, Germany\\\newline\texttt{\{pseifer, laemmel\}@uni-koblenz.de}}
\affil[2]{Institute for Artificial Intelligence\\University of Stuttgart, Germany\\\texttt{\{daniel.hernandez,steffen.staab\}@ki.uni-stuttgart.de}}
\affil[3]{University of Southampton, UK}

\date{\today}

\begin{document}

\maketitle
\thispagestyle{empty}

\begin{abstract}
  Property graphs may be constrained by schemas that inform both query engines and human users
  about the shape of valid data, enforcing a contract between data provider and consumer. 
  Composable property-graph queries transform input graphs into output graphs. 
  Then, the question arises of which schema can be expected after one (or several) transformation steps.
  We investigate how schema constraints can be inferred 
  given an input schema and a transforming query.
  Specifically, we propose a reasoning procedure that, 
  given an input schema in ProGS and a query in \gcorename infers an output schema.
  Since graph updates will happen frequently, our inference procedure does not rely on graph instances, 
  such that the computed output schema applies to all graphs originating from any input graph 
  complying with the input schema.
  Related work has addressed this problem for SPARQL \construct queries, 
  encoding it in Description Logics (DLs) so that the output schema is entailed 
  by axioms inferred from input schema and queries.
  Property graphs and their queries, however, complicate the matter, 
  as property graphs feature label and property annotations as well as first-class edges.
  Thus, reification has to be used in one way or another, though 
  available DLs lack the means to encode such features directly.
  We approach this novel challenge via a family of mappings for 
  i) property graphs reified in RDF, aligned with 
  ii) a mapping from ProGS to SHACL and 
  iii) a mapping from \gcorename to SPARQL \construct queries.
  In this manner, schema inference for property graphs becomes manageable, as we break apart the problem
  through the extra mapping layer and utilize efficient DL reasoners.
  We develop the metatheory regarding the soundness of inferred schema constraints 
  and the semantic equivalence of mapped schemas and queries.
\end{abstract}

\newpage

\section{Introduction}

Composable property-graph queries transform graphs into graphs.
In this paper, we are interested in schema inference for the results of such composable queries. 
For RDF graphs, the W3C proposes a composable query language, namely SPARQL~\cite{sparql} 
with its \construct~\cite{DBLP:conf/icdt/KostylevRU15} clause.
For \emph{property graphs} (PGs), which we aim at in this paper, 
the recent GQL~\cite{ISO39075} standard is not composable (as yet~\cite{gqlshort}).
Accordingly, we leverage \gcorename~\cite{DBLP:conf/sigmod/AnglesABBFGLPPS18}, an 
earlier \emph{composable} query language for property graphs that inspired GQL.

Property-graph schemas are used to constrain PGs; the constraints can be enforced by validation. 
There are schema languages, sometimes also called \emph{shape} languages, for PGs, such as the 
\emph{Property Graph Shapes Language} (ProGS)~\cite{DBLP:conf/semweb/SeiferLS21} 
or PG-Schema~\cite{DBLP:journals/pacmmod/AnglesBD0GHLLMM23}.
As far as schema constraints of the inference approach in this paper are concerned, 
both ProGS and PG-Schema are viable candidates. 
Pragmatically, we commit to ProGS, as it is more directly aligned with the established schema language for RDF, 
\ie SHACL~\cite{bibshacl}, which is readily covered by the foundation we rely on.
In particular, the correspondence between SHACL and description logics (DL) is well 
understood~\cite{DBLP:conf/www/Seifer0LS24,DBLP:conf/lpnmr/BogaertsJB22}. 
Thus, we can leverage efficient reasoner implementations for inference.
We follow the terminology of ProGS and SHACL and refer to schemas as \emph{shapes}.

We refer to shapes validating all possible result graphs of a query as \emph{output shapes}.
Similarly, we use the term \emph{input shapes} to refer to the set of shapes over
the input graphs of a query.
The inference of output shapes is a challenging aspect of the composable query notion:
input shapes do not carry over directly to the output but do indeed impact it.
Additionally, to obtain sound shapes for all possible query results,
inference must rely on the \emph{input shapes} and the \emph{query} as inputs
rather than individual result graphs.
Access to these statically inferred shapes facilitates numerous use cases,
even when considering compositions of multiple queries:
they enable tools such as type systems for language-embedded queries
(compare~\cite{DBLP:conf/semweb/LeinbergerSSLS19}) among others
(\eg,~\cite{DBLP:conf/semweb/ThapaG23,DBLP:conf/mtsr/PacielloTBVS22})
that rely on shapes.
For query pipelines that are executed repeatedly, such shapes need
only be inferred once, which is particularly useful in cases where some level
of manual review or selection of shapes is desired.
Finally, even in data transformation or integration use cases, where
result graphs are stored in databases, they complement shapes inferred from graph
instances and can provide insights to developers without query execution.

\begin{figure}[ht]
   \centering
   \begin{tikzpicture}[
    every label/.style={inner sep=0.0em},
    x=3.0cm,
    y=2.0cm
]
    \newcommand{\mdim}{0.3cm}

    \node[label={90:{\color{originalworld}\texttt{ProGS}}}] (a) at (0,1) {$\shapesin$};
    \node[label={90:{\color{originalworld}\texttt{PG}}}] (b) at (1,1)  {\color{gray} $G_{i}^{1}$\ldots$G_{i}^{n}$};
    \node[label={90:{\color{originalworld}\texttt{G-CORE}}}] (c) at (2,1)  {$q$};
    \node[label={90:{\color{originalworld}\texttt{PG}}}] (d) at (3,1) {\color{gray} $G_{o}^{1}$\ldots$G_{o}^{n}$};
    \node[label={90:{\color{originalworld}\texttt{ProGS}}}] (e) at (4,1) {$\shapesout$};

    \path[->]
        (b) edge [] node [below] {} (c)
        (c) edge [] node [below] {} (d);

    \path[-]
        (b) edge [draw=gray, double] node [below] {} (a)
        (d) edge [draw=gray, double] node [below] {} (e);

    \node (am) [below = \mdim of a] {};

    \node[] (ki) at (2.0,0.5)  {$k$};

    \path[->]
        (a) edge [draw=originalworld, line width=0.3mm] node [right] {} (ki)
        (c) edge [draw=originalworld, line width=0.3mm] node [right] {} (ki)
        (ki) edge [draw=originalworld, line width=0.3mm] node [right] {} (e);

\end{tikzpicture}%
\caption{Schematic overview of the problem: Input shapes $\shapesin$ validate a (possibly infinite) 
         set of input graphs $G_{i}^{1}$\ldots$G_{i}^{n}$ which are valid inputs for the query $q$.
         The query produces, as its result, a set of corresponding result graphs 
         $G_{o}^{1}$\ldots$G_{o}^{n}$.
         Just as $\shapesin$ validates all input graphs, $\shapesout$ validates 
         all output graphs. 
         The set of $\shapesout$ is constructed from some formalization $k$, 
         which is derived from $\shapesin$ and $q$.}
\label{trfm:fig:overviewone}
\end{figure}

\Cref{trfm:fig:overviewone} shows the input (shapes $\shapesin$ and query $q$) 
and the output (shapes $\shapesout$) of shape inference, 
with the respective languages annotated in {\color{originalworld}blue}. 
For these languages, we will later define the relevant subsets we consider in this work.
The figure also hints in {\color{gray}gray} at the set of input graphs 
(conforming to shapes $\shapesin$, indicated by {\color{gray}gray} double lines) 
and corresponding output graphs (conforming to $\shapesout$), 
with black arrows representing input and output of query execution.
These graphs are included in the figure only for the sake of clarity; 
our approach does not depend on any concrete graphs.
Our approach is rendered using {\color{originalworld}blue} arrows: 
we encode in $k$ the information that allows us to deduce which shapes are in $\shapesout$.

To introduce $k$, we establish a common language that can express the constraints
of the input shapes $\shapesin$ and how inputs and outputs relate
to each other over the query $q$.
Given such a language, we then define an algorithm for constructing $k$ 
from both $\shapesin$ and the query $q$
and for inferring the new shapes in $\shapesout$ from $k$.

\begin{figure}[ht]
   \centering
   \begin{tikzpicture}[
    every label/.style={inner sep=0.0em},
    x=3.0cm,
    y=2.0cm
]
    \newcommand{\mdim}{0.3cm}

    \node[label={90:{\color{originalworld}\texttt{ProGS}}}] (a) at (0,1) {$\shapesin$};
    \node[label={90:{\color{originalworld}\texttt{PG}}}] (b) at (1,1)  {\color{gray} $G_{i}^{1}$\ldots$G_{i}^{n}$};
    \node[label={90:{\color{originalworld}\texttt{G-CORE}}}] (c) at (2,1)  {$q$};
    \node[label={90:{\color{originalworld}\texttt{PG}}}] (d) at (3,1) {\color{gray} $G_{o}^{1}$\ldots$G_{o}^{n}$};
    \node[label={90:{\color{originalworld}\texttt{ProGS}}}] (e) at (4,1) {$\shapesout$};

    \path[->]
        (b) edge [] node [below] {} (c)
        (c) edge [] node [below] {} (d);

    \path[-]
        (b) edge [draw=gray, double] node [below] {} (a)
        (d) edge [draw=gray, double] node [below] {} (e);

    \node (am) [below = \mdim of a] {};

    \node[label={-90:{\color{mirrorworld}\texttt{DL}}}] (ki) at (1.5,0.5)  {$k_1$};
    \node[label={-90:{\color{mirrorworld}\texttt{DL}}}] (ko) at (2.5,0.5)  {$k_2$};

    \path[->]
        (ki) edge [draw=mirrorworld, line width=0.3mm] node [right] {} (ko)
        (a) edge [draw=mirrorworld, line width=0.3mm] node [right] {} (ki)
        (ko) edge [draw=mirrorworld, line width=0.3mm] node [right] {} (e)
        (c) edge [draw=mirrorworld, line width=0.3mm] node [right] {} (ko);

\end{tikzpicture}%
\caption{Schematic overview of a possible solution where constraints are encoded in description logics.}
\label{trfm:fig:overviewtwo}
\end{figure}

One possibility for encoding these constraints is \emph{description logics} (DL).
\Cref{trfm:fig:overviewtwo} indicates where DL is used with {\color{mirrorworld}red} arrows:
we first encode the input shapes $\shapesin$ as a set of DL axioms 
and then combine them with a second set of axioms inferred from the query $q$.
This approach benefits from relying on an existing formalism and,
in practical applications, from established and efficient reasoners
for implementing the required decision procedures.
In prior work~\cite{DBLP:conf/www/Seifer0LS24}, we also relied on a DL to solve
a similar inference problem for RDF graphs and SPARQL \construct queries.
However, there is a drawback: Common DLs do not support properties with key-value annotations.
Thus, we would have to either define a new variant\,---\,for which no efficient reasoners 
exist\,---\,or we have to use some form of reification to encode properties in an existing DL.

Since the first option is not desirable, we choose the second.
However, instead of encoding reification directly in DL, we introduce an intermediate RDF layer.
This provides the following benefits:
foremost, it clearly separates the \emph{reification} from the \emph{DL encoding and inference} steps
of our solution, making each part easier to formalize, prove, and implement.
Secondly, we can build on the aforementioned prior work for the encoding and inference step;
we still need to modify and extend the algorithm.
Finally, a mapping between the involved shape and query languages has not been formally explored,
which we achieve in this work.

\begin{figure}[ht]
   \centering
   \begin{tikzpicture}[
    every label/.style={inner sep=0.0em},
    x=3.0cm,
    y=2.0cm
]
    \newcommand{\mdim}{0.3cm}

    \node[label={90:{\color{originalworld}\texttt{ProGS}}}] (a) at (0,1) {$\shapesin$};
    \node[label={90:{\color{originalworld}\texttt{PG}}}] (b) at (1,1)  {\color{gray} $G_{i}^{1}$\ldots$G_{i}^{n}$};
    \node[label={90:{\color{originalworld}\texttt{G-CORE}}}] (c) at (2,1)  {$q$};
    \node[label={90:{\color{originalworld}\texttt{PG}}}] (d) at (3,1) {\color{gray} $G_{o}^{1}$\ldots$G_{o}^{n}$};
    \node[label={90:{\color{originalworld}\texttt{ProGS}}}] (e) at (4,1) {$\shapesout$};

    \path[->]
        (b) edge [] node [below] {} (c)
        (c) edge [] node [below] {} (d);

    \path[-]
        (b) edge [draw=gray, double] node [below] {} (a)
        (d) edge [draw=gray, double] node [below] {} (e);

    \node[label={-90:{\color{mirrorworld}\texttt{SHACL}}}] (f) at (0,0) {$\shapesin'$};
    \node[label={-90:{\color{mirrorworld}\texttt{RDF}}}] (g) at (1,0) {\color{gray} ${G_i'}^{1}$\ldots${G_i'}^{n}$};
    \node[label={-90:{\color{mirrorworld}\texttt{SPARQL}}}] (h) at (2,0) {$q'$};
    \node[label={-90:{\color{mirrorworld}\texttt{RDF}}}] (i) at (3,0) {\color{gray} ${G_o'}^{1}$\ldots${G_o'}^{n}$};
    \node[label={-90:{\color{mirrorworld}\texttt{SHACL}}}] (j) at (4,0) {$\shapesout'$};

    \path[->]
        (g) edge [] node [above] {} (h)
        (h) edge [] node [above] {} (i);

    \path[-]
        (g) edge [draw=gray, double] node [above] {} (f)
        (i) edge [draw=gray, double] node [above] {} (j);

    \path[-]
        (a) edge [dotted] node [right] {} (f)
        (b) edge [dotted] node [right] {} (g)
        (c) edge [dotted] node [right] {} (h)
        (d) edge [dotted] node [right] {} (i)
        (e) edge [dotted] node [right] {} (j);

    \node (am) [below = \mdim of a] {};

    \node[label={-90:{\color{mirrorworld}\texttt{DL}}}] (ki) at (1.5,0.5)  {$k_1$};
    \node[label={-90:{\color{mirrorworld}\texttt{DL}}}] (ko) at (2.5,0.5)  {$k_2$};

    \path[->]
        (ki) edge [draw=mirrorworld, line width=0.3mm] node [right] {} (ko)
        (a) edge [draw=originalworld, bend left, line width=0.3mm] node [right] {} (f)
        (c) edge [draw=originalworld, bend left, line width=0.3mm] node [right] {} (h)
        (f) edge [draw=mirrorworld, line width=0.3mm] node [right] {} (ki)
        (ko) edge [draw=mirrorworld, line width=0.3mm] node [right] {} (j)
        (j) edge [draw=originalworld, bend left, line width=0.3mm] node [right] {} (e)
        (h) edge [draw=mirrorworld, line width=0.3mm] node [right] {} (ko);

\end{tikzpicture}%
\caption{Schematic overview of the involved languages, mappings, and algorithms.
         The upper PG layer is mapped to the bottom RDF layer, 
         where individual mappings are indicated using dotted lines.
         Inference and reasoning exclusively operate on RDF.
         The {\color{originalworld}blue} arrows show how mappings are utilized.}
\label{trfm:fig:overview}
\end{figure}

\Cref{trfm:fig:overview} shows the full solution we propose.
Utilizing reification (indicated with dotted lines) at all levels allows us to step down from a 
PG-based language family ({\color{originalworld}upper half} of \Cref{trfm:fig:overview}) 
to an RDF-based one ({\color{mirrorworld}lower half} of \Cref{trfm:fig:overview}).
We use RDF for representing the data by reifying first-class edges with identities
and properties in PG as ordinary RDF triples.
We use SHACL as the shape language, accounting for this reification.
This allows us to rely on DL for the formalization and implementation of shape inference,
since round tripping between DL axioms and SHACL shapes is possible
(\cf~\cite{DBLP:conf/www/Seifer0LS24,DBLP:conf/lpnmr/BogaertsJB22}).
Finally, we use SPARQL \construct as the composable query language
to which we map.
Here, the definition of a sound mapping is more complex, since in addition to the reification, we must
also account for the differences in query language semantics:
where labels and properties are copied by default in \gcorename and explicitly removed, SPARQL relies
on explicit triple patterns and filters.
The data reification, shape-to-DL mapping, and shape inference approaches have been 
studied~\cite{DBLP:conf/grades/KhayatbashiFH22,DBLP:journals/pacmmod/AnglesBD0GHLLMM23,DBLP:conf/www/Seifer0LS24},
whereas the query mapping has not.

We build on and extend our existing algorithm~\cite{DBLP:conf/www/Seifer0LS24} that infers SHACL shapes 
for results of SPARQL \construct queries on RDF graphs by constructing 
a DL knowledge base entailing valid output shapes. 
To define our query mapping, we must extend the SPARQL fragment considered in~\cite{DBLP:conf/www/Seifer0LS24} with more
\emph{generic} triple patterns where variables extend over concept or role names
to capture the semantic properties of \gcorename.
This leads to a number of required adaptations of the method presented in this work,
since the knowledge base must encode the potential presence of previously unexpected 
concept or role names.
The dataflow of the extension to~\cite{DBLP:conf/www/Seifer0LS24} is indicated by {\color{mirrorworld}red} arrows in \Cref{trfm:fig:overview}.
It is important to note that our mapping approach serves the formalization and 
implementation of schema inference; whether the queries are executed on a PG database 
or an RDF store is a separate question.

\subsection{Contributions}

In summary, the contributions of this paper are as follows:

\begin{enumerate}
   \item We determine the impact of PG-specific expressiveness when it comes to shape inference for 
         composable graph queries. Thus, we handle edges that have identities, as well as nodes and edges 
         annotated with labels and key-value properties that persist, implicitly, in results.
   \item Despite common DLs lacking the corresponding language constructs, we resolve the mismatch through 
         reification, which carries over to structures of shapes and queries. 
   \item Instead of reification at the DL level, we define an intermediate layer,
         introducing mappings from ProGS to SHACL and \gcorename to SPARQL \construct over 
         PGs reified in RDF.
   \item We develop the metatheory regarding the soundness of inferred shapes 
         and the semantic equivalence of mapped shapes and queries.
\end{enumerate}

\subsection{Roadmap of the Paper}\label{trfm:p:roadmap}
\Cref{trfm:sec:foundation} introduces the underlying formalisms, as well as our running examples.
In \Cref{trfm:sec:formal}, we formalize the essential problem addressed in this work. 
\Cref{trfm:sec:map} and \Cref{trfm:sec:extend} solve the problem in two steps: 
first, we develop a family of mappings from property graph abstractions to RDF graph abstractions (\Cref{trfm:sec:map}).
On these grounds, we develop an inference method, which is an extension of~\cite{DBLP:conf/www/Seifer0LS24} (\Cref{trfm:sec:extend}).
\Cref{trfm:sec:meta} develops the corresponding metatheory.
In \Cref{trfm:app:impl}, we briefly outline our prototype implementation of this approach.
Finally, \Cref{trfm:sec:related} discusses related work while \Cref{trfm:sec:conclusion} concludes the paper.

Several appendices are referenced by the paper for additional details, 
such as extended examples and full proofs.

\section{Foundations}
\label{trfm:sec:foundation}

We first introduce the graph data models, query languages, and shape-based validation 
languages relevant to our work; they can be separated into two groups, 
namely \emph{PG-related} languages and 
\emph{RDF-related} languages 
pertaining to the upper and lower layers of \Cref{trfm:fig:overview}.

\subsection{RDF Graphs}

We define RDF graphs based on the W3C specification~\cite{rdf}, excluding blank nodes and literals;
we consider the latter as ordinary IRIs to simplify the formalization of the data model.
We assume four pairwise disjoint, finite\footnote{This is not a restriction to the problem, since the sets are arbitrary,
but it simplifies a few definitions.} but arbitrary, and sufficiently large sets of IRIs, 
namely concept names $\ConceptNames$, individual names $\IndividualNames$, role names $\RoleNames$, 
and the special role name $\{\rdftype\}$.
For the sake of consistency, we use description-logic terminology throughout all abstractions
(\eg, \emph{concept name} instead of \emph{class}).
Commonly, RDF uses so-called prefixes to abbreviate (parts of) IRIs.
In formal definitions, we express most IRIs through ordinary symbols, 
with a few exceptions, such as $\rdftype$.
In examples, we also omit prefixes for the example domain;
that is, we simply write $\concept{Agent}$ instead of $\prefixiri{ex}{Agent}$.
Similarly, we sometimes write $\rdftypea$ instead of $\rdftype$.

\begin{definition}[Simple RDF Graph]
  An \emph{RDF graph} $G$ is a finite set of triples of the form 
  $\triple{a}{p}{b}$ or $\triple{a}{\rdftype}{A}$
  where $a,b \in \IndividualNames$, 
  $A \in \ConceptNames$, and 
  $p \in \RoleNames$.
\end{definition}

\noindent
\begin{minipage}{0.50\textwidth}
\begin{example}
  \label{trfm:ex:basicrdf}
  Consider the following set\\
  of triples as an example RDF graph
  $\rdfexa{1} = $\\
  $\{\triplei{s}{\rdftypea}{Agent}, \triplei{s}{name}{Smith}, \triplei{p}{\rdftypea}{Person},$\\
  $\triplei{s}{obs}{p}\}$.
\end{example}
\end{minipage}%
\begin{minipage}{0.50\textwidth}
  \centering
  \begin{tikzpicture}[
    every label/.style={inner sep=0.0em}
]
    \node[new node] (s) {\concept{s}};
    \node[new node] (a) [left = 1.0cm of s] {\concept{Agent}};
    \node[new node] (p) [right = 1.0cm of s]  {\concept{p}};
    \node[new node] (pp) [right = 1.0cm of p] {\concept{Person}};
    \node[new node] (n) [below = 0.2cm of p]  {\concept{Smith}};

    \draw [-] (s) edge[arrout, bend left=20] node[pglabel]{\prop{obs}} (p);
    \draw [-] (s) edge[arrout] node[pglabel]{\rdftypea} (a);
    \draw [-] (p) edge[arrout] node[pglabel]{\rdftypea} (pp);
    \draw [-] (s) edge[arrout, bend right=20] node[pglabel]{\prop{name}} (n);

\end{tikzpicture} 
\end{minipage}

\subsubsection{SPARQL Queries}
\label{trfm:ss:sccq}

We define a subset of SPARQL queries referred to as \emph{Extended Conjunctive \construct Queries} (\eccqname) 
in analogy to \emph{Simple Conjunctive \construct Queries} (as defined in~\cite{DBLP:conf/www/Seifer0LS24}) 
in \Cref{trfm:def:eccq:syntax}.
We choose this conjunctive fragment, as is commonly done, to keep the scope of our algorithms manageable;
we extend\footnote{\eccqname is not a true extension of the language presented 
  in~\cite{DBLP:conf/www/Seifer0LS24}, since we omit certain patterns, \eg, $\atomC{a}{C}$. 
  While they could be easily included, making \eccqname a true extension of this language, 
  they are not required for the remainder of this work.}
  the fragment of SPARQL chosen in~\cite{DBLP:conf/www/Seifer0LS24} with the language features required 
  for our intended mappings.

To this end, we introduce \emph{generic} patterns such as $\atomC{x}{y}$, 
written \texttt{?x a ?y} in concrete SPARQL syntax, where both $x$ and $y$ are variables.
Here, we need to introduce an additional syntactic restriction for generic variables, as we will
discuss in more depth later:
if the query includes the atomic query pattern $\atomC{x}{y}$, then $y$ must not occur again in 
another atomic query pattern.
That is, we can copy all concept names of $x$ generically but cannot introduce restrictions on this meta level;
for example, a query with patterns $\{\atomC{x}{y}, \atomC{z}{y}\}$ is not allowed,
since it would constrain the bindings of $x$ and $z$ through $y$.
Thus, we write $\atomGenc{x}$ to indicate that the generic variable $\cvar{x}$ is syntactically 
dependent on $x$ in this way.
We also introduce certain filter expressions for only these generic variables written $\cvar{x}$,
filtering certain concept or role names from their bindings.

\begin{definition}[Syntax of \eccqname]
  \label{trfm:def:eccq:syntax}
  We define atomic patterns $t$ and filter expressions $f$ as follows:
  \begin{equation*}
    t \Coloneqq\ \atomC{x}{A} \mid \atomP{x}{y}{p} \mid \atomP{x}{a}{p} \mid \atomGenc{x} \mid \atomGenp{x} \quad\quad\quad
    f \Coloneqq\ \atomCNot{x}{A} \mid \atomPNot{x}{p}
  \end{equation*}
  with $A \in \ConceptNames$, $a \in \IndividualNames$, $p \in \RoleNames$, and
  variables $x, y, \cvar{x}, \rvar{x}, \ivar{x} \in \Variables$.
  Let the function \emph{vars} denote the set of all variables 
  that occur in a set of atomic patterns or filter expressions.
  An \eccqname $q$ is defined as $\eccqformal$
  where \emph{template} $\eccqtemplate$ and \emph{pattern} $\eccqpattern$ consist of finite sets
  of atomic patterns $t$ 
  and the \emph{filters} $\eccqfilter$ of a finite set of filter expressions $f$ such that 
  $\operatorname{vars}(\eccqfilter) \subseteq \operatorname{vars}(\eccqpattern)$. 
  For each atomic pattern $\atomGenc{x}$ (and similarly $\atomGenp{x}$), there is a unique variable 
  $\cvar{x}$ (similarly $\rvar{x}$ and $\ivar{x}$) per ordinary variable $x$, 
  which does not occur again in the query, except
  in filter conditions.
  If a generic atomic pattern of this form occurs in $\eccqtemplate$, the same pattern must 
  also occur in $\eccqpattern$.
\end{definition}

To formalize the semantics of \eccqname in \Cref{trfm:def:eccq:semantics}, we define valuations 
$\mu : \Variables \rightarrow \IndividualNames \cup \ConceptNames \cup \RoleNames$ 
as mappings from variables to individual, concept, and role names.
Note that the syntactic restrictions on $\cvar{x}$ and $\rvar{x}$ ensure that the respective
types of these bindings cannot be mixed up, since these variables can only occur in the correct context.
Two valuations $\mu_1$ and $\mu_2$ are \emph{compatible}, denoted $\mu_1 \sim \mu_2$, if for every 
variable $x$ occurring in both $\mu_1$ and $\mu_2$, $\mu_1(x) = \mu_2(x)$.
A finite set of valuations is denoted $\Omega$.
We define the \emph{join} of two sets of valuations $\Omega_1$ and $\Omega_2$ as 
$\Omega_1 \bowtie \Omega_2 = \{\mu_1 \cup \mu_2 \mid \mu_1 \in \Omega_1, \mu_2 \in \Omega_2, \mu_1 \sim \mu_2 \}$.

Note that in \Cref{trfm:def:eccq:syntax} we do not include the usual syntactic restriction 
$\operatorname{vars}(\eccqtemplate) \subseteq \operatorname{vars}(\eccqpattern)$.
We therefore need the final clause in \Cref{trfm:def:eccq:semantics} about variables from atomic
patterns $t \in \eccqtemplate$ that do not occur in $\operatorname{vars}(\eccqpattern)$.
Here, we assume the generation of fresh IRIs for such variables, which will be required by our mappings later.
While this is a deviation from ordinary SPARQL semantics\footnote{Usually, SPARQL \construct templates
can feature blank node identifiers (\url{https://www.w3.org/TR/sparql12-query/\#templatesWithBNodes})
to this effect.}, our variant simplifies the formalization by avoiding the need to represent
blank nodes in our RDF graph model and throughout the paper.
In practical implementations, ordinary blank nodes should be used to comply with standard SPARQL
semantics; indeed, our implementation generates SPARQL queries featuring blank node identifiers 
in their templates as well.

\begin{definition}[Semantics of \eccqname]
  \label{trfm:def:eccq:semantics}
  The result of evaluating an \eccqname $\eccqformal$ over an RDF graph $G$ is the RDF graph denoted 
  $\eccqeval{\eccqformal}{G}$ and defined as
  \[
    \eccqeval{\eccqformal}{G} = \{\mu(t) \mid \mu \in \eccqpatterneval{\eccqpattern, F}{G}, t \in \eccqtemplate \}
  \]
  where $\eccqpatterneval{P, F}{G} =\{\mu \mid \mu \in\ \bowtie_{t \in P} \eccqtripleeval{t}{G},\, \text{for all}\, f \in F : \mu \vdash f\}$ and
  $\eccqtripleeval{t}{G} = \{\mu \mid \mu(t) \in G\}$.
  $f$ is a filter condition
  such that $\mu \vdash (\atomCNot{x}{A})$ is true iff $\mu(\cvar{x})$ is not $A$,
  and $\mu \vdash (\atomPNot{x}{p})$ is true iff $\mu(\rvar{x})$ is not $p$.
  In a slight abuse of notation, we write $\mu(t)$ 
  to mean $\mu(\atomC{u}{A}) = \triple{\mu(u)}{\rdftype}{A}$,
  and similarly for the remaining cases.
  If a variable $x$ in $t \in \eccqtemplate$ does not occur in $\operatorname{vars}(\eccqpattern)$,
  then $\mu(x)$ produces a fresh IRI. %
  This IRI is unique for each variable $x$ and each unique mapping $\mu$.
\end{definition}

\noindent
\begin{minipage}{0.70\textwidth}
  \begin{example}
    \label{trfm:ex:basicsccq}
    Consider the 
    \eccqname $\eccqexa{1} = \eccquery{\atomC{\varname{x}}{\concept{Agent}},%
    \atomC{\varname{y}}{\concept{POI} }}{\atomP{\varname{x}}{\varname{y}}{\prop{obs}}}$,
    selecting pairs of observers and observed before constructing a new graph labeling observers 
    as $\concept{Agent}$ and observed as $\concept{POI}$.
    When evaluating this query on graph $\rdfexa{1}$ from \Cref{trfm:ex:basicrdf},
    written $\sparqleval{\eccqexa{1}}{\rdfexa{1}}$,
    we first match the pattern $\eccqpattern$ on the graph, resulting in the single mapping
    $\mu$ where $\mu(\varname{x}) = s$ and $\mu(\varname{y}) = p$.
    Thus, the template $\eccqtemplate$ is instantiated only once, namely for $\mu$,
    resulting in the graph $\{\triplei{s}{\rdftypea}{Agent}, \triplei{p}{\rdftypea}{POI}\}$.
    
    \hspace*{4mm} Using generic patterns, we could, in this particular case, rewrite the query as
    $\eccquery{\atomGenc{\varname{x}},\atomC{\varname{y}}{\concept{POI} }}{\atomGenc{\varname{x}},\atomP{\varname{x}}{\varname{y}}{\prop{obs}}}$,
    copying all concepts for bindings for $\varname{x}$ instead of explicitly adding $\concept{Agent}$.
    Since $s$ has only the concept $\concept{Agent}$ in $\eccqexa{1}$, evaluation would
    yield the same result graph.
  \end{example}

\end{minipage}%
\begin{minipage}{0.30\textwidth}
  \centering
  \begin{tikzpicture}[
    every label/.style={inner sep=0.0em}
]
    \node[new node] (s) {\concept{s}};
    \node[new node] (a) [right = 1.0cm of s] {\concept{Agent}};
    \node[new node] (p) [below = 1.0cm of s]  {\concept{p}};
    \node[new node] (pp) [right = 1.0cm of p] {\concept{POI}};

    \draw [-] (s) edge[arrout] node[pglabel]{\rdftypea} (a);
    \draw [-] (p) edge[arrout] node[pglabel]{\rdftypea} (pp);

\end{tikzpicture} 
\end{minipage}

\subsubsection{SHACL Shapes}

SHACL is a validation language for RDF graphs, where a shape consists of two components:
the target query selecting a subset of nodes and a constraint these nodes must conform to.
We present here the Seifer \etal~\cite{DBLP:conf/www/Seifer0LS24} definition of SHACL shapes,
utilizing description logics to define the semantics of target queries and constraints
based on work by Bogaerts \etal~\cite{DBLP:conf/lpnmr/BogaertsJB22}.

According to this definition, SHACL shapes are $\DLogics$ axioms, \ie target query subsumed by the constraint, 
both encoded as concept descriptions~(\cref{sts:def:dlogics}), and their semantics can be reduced to 
checking consistency with a \emph{validation knowledge base} constructed for a given RDF graph $G$ by 
its unique-name, closed-world, and domain-closure assumptions.
We refer to~\cite{DBLP:conf/dlog/2003handbook} for the definition of
the semantics of $\DLogics$.
We give here the syntax of $\DLogics$ axioms, which can express a subset of SHACL, and
the core idea of the validation semantics introduced in~\cite{DBLP:conf/www/Seifer0LS24};
we refer to that paper for the full derivation and proofs.

\begin{definition}[$\DLogics$ Axioms]
  \label{sts:def:dlogics}
  \emph{Concept descriptions} are defined as follows:
  \begin{align*} 
    C &\Coloneqq
     \top \mid \bot \mid A \mid \{a\} \mid C \sqcap C \mid C \sqcup C
     \mid \neg C \mid \exists \rho.C
     \mid \forall \rho.C\\
     \rho &\Coloneqq p \mid p^-
  \end{align*} 
  where $A$, $a$, and $p$ stand for concept names, individual names, and role names, respectively,
  and $\top$ and $\bot$ are two special concept names.
  Given two concept descriptions $C$ and $D$, and two role descriptions $\rho_1$ and $\rho_2$,
  $C \sqsubseteq D$ and $\rho_1 \sqsubseteq \rho_2$ are $\DLogics$ \emph{axioms}, which we also call \emph{shapes}.
  The left-hand side ($C$ and $\rho_1$) we also call the \emph{target} and the right-hand side ($D$ and $\rho_2$) 
  the \emph{constraint} of the shape.
\end{definition}

Semantically, SHACL shapes target specific nodes of a graph, 
relying on constraints to validate these selected nodes.
A consistency check for the subsumption axioms defined in \Cref{sts:def:dlogics}
with the validation knowledge base constructed for a given RDF graph $G$ 
(\Cref{sts:def:validation-rdf-semantics}) encodes these semantics.
\Cref{trfm:ex:basicshacl} demonstrates the intuition.

\begin{definition}[Validation Semantics (Adapted from~\cite{DBLP:conf/www/Seifer0LS24})]%
  \label{sts:def:validation-rdf-semantics}
  The axioms of a simple RDF graph $G$, 
  denoted $\TBox_G$, are the TBox consisting 
  of the following $\DLogics$ axioms:
  \begin{enumerate}
  \item \emph{Domain Closure Assumption (DCA)}:
    $\top \equiv \bigsqcup_{a \in \IndividualNames}\{a\}$.
  \item \emph{Unique Name Assumption (UNA)}:
   $\{a\} \sqcap \{b\} \equiv \bot$, for each pair of distinct\\individual names $a, b \in \IndividualNames$.
  \item \emph{Closed-World Assumption (CWA)}: 
    \begin{itemize}
      \item $A \equiv \bigsqcup_{\atomC{a}{A}\in G}\{a\}$, for each concept name $A \in \ConceptNames$, 
      \item $\exists p.\{a\} \equiv \bigsqcup_{\atomP{b}{a}{p}\in G} \{b\}$, and
      \item $\exists p^-.\{a\} \equiv \bigsqcup_{\atomP{a}{b}{p}\in G} \{b\}$, for each role name
        $p \in \RoleNames$ and each individual name $a \in \IndividualNames$.
    \end{itemize}
  \end{enumerate}
  $(\TBox_G, G)$ is the \emph{validation knowledge base} of $G$. 
  A graph $G$ is \emph{proof-valid} according to a set $\Sigma$ of $\DLogics$ axioms if and only if $\Sigma$ is consistent with the validation knowledge base of $G$, \ie the knowledge base $(\TBox_G \cup \Sigma, G)$ admits a model.
\end{definition}

We write $\shaclvalid{G}{S}$ to indicate that $G$ is proof-valid regarding the set of shapes $S$
and $\shaclvalid{G}{s}$ to indicate that $G$ is proof-valid regarding a single shape $s$.
We use $S$ and $\Sigma$ interchangeably.
In \Cref{trfm:ex:basicshacl} we give an example shape and outline both the intuitive
reason why this shape validates the running example graph and the formal reason
considering the validation knowledge base.

\begin{example}
  \label{trfm:ex:basicshacl}
  Consider $\shaclexa{1} = \concept{Agent} \sqsubseteq \exists \prop{obs}.\concept{Person}$ as an example.
  Intuitively, the shape expresses that all agents must observe at least one person.
  The graph $\rdfexa{1}$ (\Cref{trfm:ex:basicrdf}) is valid regarding $\shaclexa{1}$ since all targets, \ie $s$, which is
  an instance of $\concept{Agent}$, conform to the constraint, given both
  $\triplei{s}{\prop{obs}}{p} \in \rdfexa{1}$ and 
  $\triplei{p}{\rdftypea}{\concept{Person}} \in \rdfexa{1}$.

  The validation knowledge base, constructed from the 
  closed-world 
  ($\concept{Agent} \equiv \{\concept{s}\}$,
  $\concept{Person} \equiv \{\concept{p}\}$, 
  $\exists\prop{obs}.\{p\} \equiv \{\concept{s}\}$, and 
  $\exists\prop{obs}^-.\{s\} \equiv \{\concept{p}\}$),
  domain-closure ($\{\concept{s}\} \sqcup \{\concept{p}\} \equiv \top$), 
  and unique-name assumptions ($\{\concept{s}\} \sqcap \{\concept{p}\} \equiv \bot$) 
  for $\rdfexa{1}$ is consistent with this subsumption axiom, too:
  essentially, there is the concept $\{\concept{s}\}$ which is equal to $\concept{Agent}$, 
  and also the range of $\concept{obs}$;
  in turn, the domain of $\prop{obs}$ is $\{\concept{p}\}$, which is also equal to $\concept{Person}$.
\end{example}

\subsection{Property Graphs}

Property graphs (\Cref{trfm:def:pg}) differ from RDF graphs in that edges have identities, 
and both nodes and edges can be annotated with sets of labels and sets of key-value pairs, 
the so-called properties.
There are various notions of property graphs; we base our work largely
on~\cite{DBLP:conf/sigmod/AnglesABBFGLPPS18}, that is, the graph model used for G-CORE,
which is very similar to the graph model used in the context of 
Cypher\footnote{\url{https://opencypher.org/}} as well.
We do not include first-class paths, however.

We define a set of labels $L=L_N \cup L_E$ where $L_N$ is a finite set of node labels 
and $L_E$ a finite set of edge labels;
a finite set of property names (or keys) $K=K_N \cup K_E$ where $K_N$ is a finite set of node keys 
and $K_E$ a finite set of edge keys;
and a finite set of literal values $V$.
Without loss of generality, we assume that literal values are strings.
Finally, let $\textrm{FS}(s)$ denote all finite subsets of a set $s$, including
the empty set.
See \Cref{trfm:ex:basicpg} for the running example PG.

\begin{definition}[Property Graph]
  \label{trfm:def:pg}
  A \emph{property graph} (\propgraph) is a tuple $G=(N,E,\rho,\lambda,\sigma)$.
  $N$ and $E$ denote two disjoint, finite sets of node and edge identifiers, respectively.
  We write $n \in N$, $e \in E$, and $u \in N \cup E$.
  $\rho : E \rightarrow (N \times N)$ is a total function assigning edges to nodes;
  $\lambda : (N \cup E) \rightarrow \textrm{FS}(L)$ is a total function assigning labels to nodes or edges; 
  and $\sigma : (N \cup E)\times K\rightarrow \textrm{FS}(V)$ is a total function assigning properties 
  to nodes or edges and for which a set of tuples $(x,k)\in(N\cup E) \times K$ exists 
  such that $\sigma(x,k)\ne\emptyset$.
\end{definition}

\noindent
\begin{minipage}{0.70\textwidth}
  \begin{example}
    \label{trfm:ex:basicpg}
    Consider $\pgexa{1} = (N_1, E_1, \rho_1, \lambda_1, \sigma_1)$, where
    $N_1 = \{\idname{1},\idname{2}\}$ and $E_1 = \{\idname{10}\}$ are the sets of node and edge IDs.
    A single edge is defined as $\rho_1(\idname{10}) = (\idname{1},\idname{2})$.
    Labels are defined as
    $\lambda_1(\idname{1}) = \{\labelname{Agent\,}\}$,
    $\lambda_1(\idname{2}) = \{\labelname{Person\,}\}$, and
    $\lambda_1(\idname{10}) = \{\labelname{obs\,}\}$.
    Finally, we give the non-empty cases for the property function as
    $\sigma_1(\idname{1}, \labelname{name}) = \{\stringvalue{Smith}\}$ and
    $\sigma_1(\idname{10}, \labelname{since}) = \{\stringvalue{2002}\}$.

    $\quad$ Compare this graph to the RDF graph $\rdfexa{1}$ in \Cref{trfm:ex:basicrdf}:
    RDF graphs do not support properties on edges; thus, there is no direct equivalent for 
    $\sigma_1(10, \labelname{since})= \{\stringvalue{2002}\}$ in RDF.
    Otherwise, the graphs are equivalent.
  \end{example}
\end{minipage}%
\begin{minipage}{0.30\linewidth}
  \centering
  \begin{tikzpicture}[
    every label/.style={inner sep=0.0em}
    ]

    \node[new node, label={0:\ :\concept{Agent}}] (s) {\idname{1}};
    \node[new node, label={0:\ :\concept{Person}}] (p) [below = 3.0cm of s]  {\idname{2}};

    \node[draw,minimum height = 0.5cm, minimum width = 1cm] (pp1) [below right = 0.1cm of s] {\propname{name}: \stringvalue{Smith}};
    \node[draw,minimum height = 0.5cm, minimum width = 1cm] (pp2) [below = 1.25cm of pp1] {\propname{since}: \stringvalue{2002}};

    \path[->]
        (s) edge [arrout] node [right] {\idname{10}\ :obs} (p);
\end{tikzpicture}
\end{minipage}

\subsubsection{Property Graph Queries}
Composable property graph queries return graphs and function similarly to \eccqname queries.
We define \emph{Simple \gcorename Queries} (\gcname) as a subset 
of \gcorename~\cite{DBLP:conf/sigmod/AnglesABBFGLPPS18}.
Our conjunctive fragment (\cref{trfm:def:g:syntax}) supports matching multiple atomic patterns,
including the most common constraints on labels and properties,
and constructing a new graph from multiple atomic patterns
while explicitly adding or removing labels and properties.
In short, the fragment allows for selecting a subgraph, reshaping it, and adding constant elements.
It does not support whole-graph operations, such as the union of a constructed graph with the original graph.

The basic semantics of \gcname are similar to \eccqname:
a \emph{pattern} $\gcpattern$ defines variables and constrains their
bindings through clauses $\gcwhere$.
These bindings are used to construct a new graph, specified in the \emph{template} $\gctemplate$,
together with \emph{set} ($\gcset$) and \emph{remove} ($\gcremove$) clauses for explicitly adding
or removing labels and properties.
Unlike in the case of \eccqname, labels and properties persist in the result graph by default.

We give here in short the syntax and semantics of \gcname adapted from~\cite{DBLP:conf/sigmod/AnglesABBFGLPPS18}, 
with the following change:
since we allow only for a subset of filter expressions, we push these filters into the node (or edge) pattern.
This is equivalent to the concrete syntax of \gcorename where we would write, \eg, \texttt{(x:Person)} to filter all nodes \texttt{x} for the label \texttt{Person}.

\begin{definition}[Syntax of \gcname]
  \label{trfm:def:g:syntax}
  A \gcname $\gcformalname$ consists of a set $\gcpattern$ of atomic components 
  $\gcpatterncomponent$ (\emph{pattern}) and a set $\gctemplate$ of atomic components $\gctemplatecomponent$
  (\emph{template}), defined by the following grammar:
  \begin{align*}
    \gcpatterncomponent & \Coloneqq \gcoreNodeF{x_n}{\gcwhere} \mid \gcoreEdgeF{x_n}{x_n}{x_e}{\gcwhere} &
    \gctemplatecomponent & \Coloneqq \gcoreNodeS{x_n}{\gcsr} \mid \gcoreEdgeS{x_n}{x_n}{x_e}{\gcsr} 
  \end{align*}
  Here, $x_n$ is a node variable, and $x_e$ is an edge variable.
  $\gcwhere$ is a set of expressions of the form
  $\whereLabelRelative{l}$, $\wherePropertyRelative{k}$, or $\wherePropertyEqualsRelative{k}{v}$,
  where $l \in L$ is a label, $k \in K$ is a property name, and $v \in V$ is a property value.
  $\gcsr$ is a set of assignments for setting
  ($\gcset l$ and $\gcset k = v$) or removing ($\gcremove l$ and $\gcremove k$) labels or property names.
  We also write $\gcsr_L$ for referring to operations involving labels and $\gcsr_P$ for referring to operations involving properties.
  Furthermore, we write, \eg, $\gcremove_x l \in \gctemplate$ to indicate that 
  $\gcoreNodeS{x}{\gcsr} \in \gctemplate$ and $\gcremove l \in \gcsr$.
  As an additional syntactical constraint of the template, edge variables that occur in the pattern 
  may only occur when their associated node variables also occur in the template.
\end{definition}

For the semantics of \gcname, we again consider valuations $\mu$ on node or edge variables $x_n$ or $x_e$, 
so that $\mu(x_n) \in N$ and $\mu(x_e) \in E$, and similar definitions for compatibility $\sim$ and the \emph{join} operation $\bowtie$ as for $\eccqname$.
We also define the union of two PGs $G_1 \cup G_2$ as $(N_1 \cup N_2, E_1 \cup E_2, \rho, \lambda, \sigma)$
where 
$\forall e \in E_1 \cup E_2 : \rho(e) = \rho_1(e)\ \textrm{if}\ e \in E_1\ \textrm{else}\ \rho_2(e)$,
$\forall x \in N_1 \cup N_2 \cup E_1 \cup E_2, k \in K: \lambda(x) = \lambda_1(x) \cup \lambda_2(x) $,
and $\sigma(x, k) = \sigma_1(x,k) \cup \sigma_2(x,k)$.
The semantics for evaluating \gcorename queries is given in \Cref{trfm:def:g:semantics},
with the intuition outlined in \Cref{trfm:ex:basicsgc}.

\begin{definition}[Semantics of \gcname]
  \label{trfm:def:g:semantics}
  The result of evaluating a \gcname $g = \gcformalname$ over a PG $G$ is defined as the PG constructed by
  \[
    \gceval{q}{G} = \gcceval{\gctemplate}{\Omega, G}\ \textrm{where}\ \Omega = \gcmeval{\gcpattern}{G}.
  \]
  That is, similarly to \eccqname queries, sets of bindings $\Omega$ are constructed from the pattern
  and fill out the template.
  $\gcmeval{\gcpattern}{G}$ and $\gcceval{\gctemplate}{\Omega, G}$ are
  specified in \Cref{trfm:def:g:semantics:match} and \Cref{trfm:def:g:semantics:construct} 
  based on~\cite{DBLP:conf/sigmod/AnglesABBFGLPPS18}, respectively.
\end{definition}

\begin{definition}[\gcname Pattern Semantics]
  \label{trfm:def:g:semantics:match}
  The set of bindings $\Omega$ resulting from evaluation of a \gcname pattern $\gcpattern$ over a graph $G$, 
  written $\gcmeval{\gcpattern}{G}$, is defined by the following cases:
  \begin{minipage}{0.55\textwidth}
    \begin{align*}
      \gcmeval{\gcoreNodeF{x}{\gcwhere}}{G} &= \{ \mu\ |\ \mu(x) \in N,\\
              & \qquad\qquad\forall \gcwhere_i \in \gcwhere : \gcweval{\gcwhere_i}{\mu(x), G} = \top\}\\
      \gcmeval{\gcoreEdgeF{x}{y}{z}{\gcwhere}}{G} &= \{ \mu\ |\ \mu(x), \mu(y) \in N, \mu(z) \in E,\\
              & \qquad\qquad \rho(\mu(z)) = (\mu(x), \mu(y)),\\
              &\qquad\qquad\forall \gcwhere_i \in \gcwhere : \gcweval{\gcwhere_i}{\mu(z), G} = \top\}\\
      \gcmeval{\gcpattern}{G} &=\ \bowtie_{\gcpatterncomponent \in \gcpattern} \gcmeval{\gcpatterncomponent}{G}
    \end{align*}
  \end{minipage}
  \begin{minipage}{0.45\textwidth}
    \begin{align*}
      \gcweval{\whereLabelRelative{l}}{u, G} & = \top\ \mathrm{iff}\ l \in \lambda(u)\\
      \gcweval{\wherePropertyRelative{k}}{u, G} & = \top\ \mathrm{iff}\ \sigma(u, k) \ne \emptyset\\
      \gcweval{\wherePropertyEqualsRelative{k}{v}}{u, G} & = \top\ \mathrm{iff}\ v \in \sigma(u, k)
    \end{align*}
  \end{minipage}
\end{definition}

\begin{definition}[\gcname Template Semantics]
  \label{trfm:def:g:semantics:construct}
  Evaluation of a \gcname template $\gctemplate$ over a graph $G$, with bindings $\Omega$, 
  written $\gcceval{\gctemplate}{\Omega, G}$, is defined by the following cases:
  \begin{align*}
    \gcceval{\gcoreNodeS{x}{\gcsr}}{\mu, G} &= \{\{v\}, \emptyset, \emptyset, \gcseval{\gcsr_L}{v,G}, \gcseval{\gcsr_P}{v,G}\}\\
    \gcceval{\gcoreEdgeS{x}{y}{z}{\gcsr}}{\mu, G} &= \{\{v, u\}, \{e\}, \{e \shortmaparrow (v,u)\}, \gcseval{\gcsr_L}{e,G}, \gcseval{\gcsr_P}{e,G}\}\\
    \gcceval{\gctemplate}{\Omega, G} &=\ \cup_{\gctemplatecomponent \in \gctemplate, \mu \in \Omega} \gcceval{\gctemplatecomponent}{\mu, G} 
  \end{align*}
  where $v = \mu(x)$, $u = \mu(y)$, and $e = \mu(z)$
  if the variables are in the domain of $\mu$, or fresh identities otherwise.
  Evaluation of set and remove clauses is defined as
  \begin{align*}
    \gcseval{\gcsr_L}{o_N,G} & = (\lambda_{o_N} \cup \{o_N \shortmaparrow l_N \mid \gcset l_N \in \gcsr_L\}) \backslash \{o_N \shortmaparrow l_N \mid \gcremove l_N \in \gcsr_L\} \\
    \gcseval{\gcsr_P}{o_N,G} & = (\sigma' \cup \sigma_{\gcset}) \backslash \sigma_{\gcremove}
  \end{align*}
  where
  \begin{align*}
    \sigma' & = \{(o_N, k_N) \shortmaparrow v \mid (o_N, k_N) \shortmaparrow v \in \sigma \wedge (o_N, k_N) \shortmaparrow v' \not\in \sigma_{\gcset}\}\\
    \sigma_{\gcset} & = \{(o_N, k_N) \shortmaparrow v \mid \gcset k_N = v \in \gcsr_P\}\\
    \sigma_{\gcremove} & = \{(o_N, k_N) \shortmaparrow v \mid (o_N, k_N) \shortmaparrow v \in \sigma \wedge \gcremove k_N \in \gcsr_P\}
  \end{align*}
\end{definition}

\noindent
\begin{minipage}{0.75\textwidth}
  \begin{example}
    \label{trfm:ex:basicsgc}
    Recall the \eccqname $\eccqexa{1}$ from \Cref{trfm:ex:basicsccq}.
    A \emph{seemingly} similar \gcname could be defined as
    \[
      \gcexa{0} = \{
        \gcoreNodeS{\varname{x}}{\{\gcset \labelname{Agent}\}}, 
      \gcoreNodeS{\varname{y}}{\{\gcset \labelname{POI}\}}\}
        \Leftarrow 
        \{\gcoreEdge{\varname{x}}{\varname{y}}{(\varname{e}, \{\whereLabelRelative{\labelname{obs}}\})}\}
    \]

    Evaluating $\gceval{\gcexa{0}}{\pgexa{1}}$ (\cf~\Cref{trfm:ex:basicpg}) produces first
    the single mapping $\mu$ with $\mu(\varname{x}) = \idname{1}$, $\mu(\varname{y}) = \idname{2}$,
    and $\mu(\varname{e}) = \idname{10}$;
    then, the template constructs the graph
    $\pgexa{0} = (\{\idname{1},\idname{2}\}, \emptyset, \emptyset, \lambda_0, \sigma_0)$ with
    $\lambda_0(\idname{1}) = \{\labelname{Agent}\}$,
    $\lambda_0(\idname{2}) = \{\labelname{Person}, \labelname{POI}\}$, and
    $\sigma_0(\idname{1}, \labelname{name}) = \{\stringvalue{Smith}\}$.
    That is, all labels and properties persist in the output graph.
    The following update to $\gcexa{0}$ has equivalent semantics to the \eccqname query:
    \[
      \gcexa{1} = \{\gcoreNodeS{\varname{x}}{\{\gcremove \labelname{name}\}}, \gcoreNodeS{\varname{y}}{\{\gcset \labelname{POI}, \gcremove \labelname{Person}\}}\} \Leftarrow 
      \{\gcoreEdge{\varname{x}}{\varname{y}}{(\varname{e}, \{\whereLabelRelative{\labelname{obs}}\})}\}
    \]
    Evaluation produces the graph
    $(\{\idname{1},\idname{2}\}, \emptyset, \emptyset, \lambda_1, \emptyset)$ with
    $\lambda_1(\idname{1}) = \{\labelname{Agent}\}$ and
    $\lambda_1(\idname{2}) = \{\labelname{POI}\}$, as we would expect (\cf~\Cref{trfm:ex:basicsccq}).
  \end{example}
\end{minipage}%
\begin{minipage}{0.25\textwidth}
  \centering
  \begin{tikzpicture}[
    every label/.style={inner sep=0.0em}
    ]

    \node[new node, label={0:\ :\labelname{Agent}}] (s) {\idname{1}};
    \node[new node, label={0:\ :\labelname{POI}}] (p) [below = 1.0cm of s]  {\idname{2}};
\end{tikzpicture}

\end{minipage}

\subsubsection{Simple Property Graph Shapes (\simpleprogs)}

We define the syntax and semantics of a subset 
of the \emph{Property Graphs Shapes Language}~\cite{DBLP:conf/semweb/SeiferLS21} (ProGS)
we call \emph{Simple Property Graph Shapes Language} (\simpleprogs).
The shape language is heavily inspired by SHACL and adapted for property graphs
by differentiating between node and edge shapes, introducing constraints for property annotations,
and for constraining the nodes (or edges, respectively) reachable from edges (or nodes, respectively).

We omit named shapes, simplifying shape semantics by avoiding recursive shapes (non-recursive named shapes are merely syntactic sugar), 
as well as some other features such as qualified number restrictions or complex path expressions.
A \simpleprogs shape is either a node shape $\nodeshapeanon$ or an edge shape $\edgeshapeanon$,
where $q_N$ and $q_E$ are the \emph{target queries}, 
and $\phi_N$ and $\phi_E$ are the respective \emph{constraints}
(see \Cref{trfm:def:progs:target} and \Cref{trfm:def:progs:constraint}).
Each node (or edge, respectively) of the graph that matches the target must satisfy the constraint.

\begin{definition}[Node and Edge Targets]
  \label{trfm:def:progs:target}
  Given target node or edge IDs $n$ or $e$, node or edge labels $l_N$ or $l_E$, 
	and node or edge properties $k_N$ or $k_E$,
  node or edge targets are defined as
	\begin{align*}
		q_N &\Coloneqq n \mid l_N \mid k_N\\
		q_E &\Coloneqq e \mid l_E \mid k_E
	\end{align*}
\end{definition}

\begin{definition}[Node and Edge Constraints]
  \label{trfm:def:progs:constraint}
  Node and edge constraints are defined by
  \begin{align*}
    \phi_N &\Coloneqq
      \top
      \mid n
      \mid l_N
      \mid \neg\phi_N
      \mid \phi_N \wedge \phi_N
      \mid \exists k_N.(= v)
      \mid \exists k_N.\top
      \mid \existsright{\phi_E}
      \mid \existsleft{\phi_E}
      \\
    \phi_E & \Coloneqq
      \top
      \mid e
      \mid l_E
      \mid \neg\phi_E
      \mid \phi_E \wedge \phi_E
      \mid \exists k_E.(= v)
      \mid \exists k_E.\top
      \mid \ \Rightarrow \phi_N
      \mid \ \Leftarrow \phi_N
  \end{align*}
  where in particular
      $\exists k_N.(= v)$ and $\exists k_N.\top$ require properties with certain ($v$) or any ($\top$) values,
      $\existsright{\phi_E}$ requires an outgoing edge conforming to $\phi_E$, and
      $\Rightarrow \phi_N$ requires the target node to conform to $\phi_N$.
      The remaining cases are similar.
\end{definition}

\Cref{trfm:def:sprogssemantics} introduces the validation semantics for a PG under a set of \simpleprogs shapes.
\Cref{trfm:fig:nodetargetqueries} defines the semantics of target node and target edge queries.
\Cref{trfm:fig:nconstrainteval} defines the evaluation semantics of node and edge constraints.
These definitions differ from~\cite{DBLP:conf/semweb/SeiferLS21}, using direct evaluation of constraints on target nodes instead of a semantics of faithful assignments, which is not needed since we omit recursion.

\begin{figure}[t]
  \centering
  \begin{align*}
    \progsevalq{n} &= \{n\} & \progsevalq{e} &= \{e\} \\
    \progsevalq{l_N} &= \{ n \mid n \in N \land l_N \in \lambda(n) \} &
    \progsevalq{l_E} &= \{ e \mid e \in E \land l_E \in \lambda(e) \} \\
    \progsevalq{k_N} &= \{ n \mid n \in N \land \sigma(n, k_N) \neq \emptyset\} &
    \progsevalq{k_E} &= \{ e \mid e \in E \land \sigma(e, k_E) \neq \emptyset\}
  \end{align*}
  \caption{Evaluation of target node and target edge queries.}
  \label{trfm:fig:nodetargetqueries}
\end{figure}

\newcommand{\literaltrue}{\emph{true}}
\newcommand{\trueif}{\literaltrue\ \mathrm{if}\ }
\newcommand{\trueifnot}{\trueif\ \mathrm{not}\ }
\newcommand{\logicaland}{\ \mathrm{and}\ }
\newcommand{\suchthat}{\ \mathrm{such}\ \mathrm{that}\ }

\begin{figure}[t]
  \begin{align*}
    \progsevaln{\top} &= \literaltrue \\
    \progsevaln{n'} &= \trueif n' = n\\
    \progsevaln{l_N} & = \trueif\ l_N \in \lambda(n)\\
    \progsevaln{\neg \phi_N} &= \trueifnot \progsevaln{\phi_N}\\
  \progsevaln{\phi_N^1 \land \phi_N^2} &= \trueif \progsevaln{\phi_N^1}\logicaland\progsevaln{\phi_N^2}\\
    \progsevaln{\exists k.(= v)} &\ = \trueif v \in \sigma(n,k)\\
    \progsevaln{\exists k.\top} &\ = \trueif \sigma(n,k) \ne \emptyset \\
    \progsevaln{\existsright{\phi_E}} &\ = \trueif \exists e \in E, n'\in N \suchthat \rho(e) = (n, n') \logicaland \progsevale{\phi_E}\\
    \progsevaln{\existsleft{\phi_E}} & \ = \trueif \exists e \in E, n'\in N \suchthat \rho(e) = (n', n) \logicaland \progsevale{\phi_E}\\
    \progsevale{\Rightarrow \phi_N} & \ = \progsevaln[n_2]{\phi_N} \textrm{ where } (n_1,n_2) = \rho(e)\\
    \progsevale{\Leftarrow \phi_N} & \ = \progsevaln[n_1]{\phi_N} \textrm{ where } (n_1,n_2) = \rho(e)
  \end{align*}
  \caption{Evaluation of node and edge constraints over a property graph $G$, omitting some cases for edges that are analogous to the node variants.}
  \label{trfm:fig:nconstrainteval}
\end{figure}

\begin{definition}[\simpleprogs Validation]
  \label{trfm:def:sprogssemantics}
  A PG $G$ is valid regarding a set of \simpleprogs shapes $S$ 
  if it is valid regarding all shapes $s \in S$;
  validity regarding a single shape $s =\,\nodeshapeanon$ (or $s =\,\edgeshapeanon$, respectively), 
  denoted $\shaclvalid{G}{s}$, is defined as follows:
  \begin{align*}
    &\shaclvalid{G}{\,\nodeshapeanon} = \forall n \in \progsevalq{q_N} : \progsevaln{\phi_N} = \literaltrue\\
    &\shaclvalid{G}{\,\edgeshapeanon} = \forall e \in \progsevalq{q_E} : \progsevale{\phi_E} = \literaltrue
  \end{align*}
\end{definition}

In a slight abuse of notation, we also write $\shaclvalid{G}{S}$ to indicate that 
$G$ is valid regarding each $s \in S$.
\Cref{trfm:ex:basicprogs} shows two example shapes.

\begin{example}
  \label{trfm:ex:basicprogs}
  Recall $\shaclexa{1}$ from \Cref{trfm:ex:basicshacl}.
  We can express an equivalent shape using \simpleprogs with 
	$\progsexa{1} =\,\nodeshapeanon[\labelname{Agent}][\existsright{(\labelname{obs}\ \wedge \ \Rightarrow \labelname{Person})}]$.
	This is a node shape that targets all nodes with the node label $\labelname{Agent}$;
	it requires for validity that these nodes have an outgoing edge ($\existsright{\phi_E}$)
	where the constraint $\phi_E$\,---\,an edge constraint\,---\,must validate at least one edge.
	For this edge, we require the edge label $\labelname{obs}$, and then, expressed with $\Rightarrow \phi_N$,
	that the corresponding destination node conforms to $\phi_N$, \ie, has the node label $\labelname{Person}$.

  We can also express shapes that target edges, such as $\progsexa{2} =\,\edgeshapeanon[\labelname{obs}][\ \Leftarrow \labelname{Agent}\ \wedge \ \Rightarrow \labelname{Person}]$, which states that edges labeled with $\labelname{obs}$ must have an origin node labeled $\labelname{Agent}$ ($\Leftarrow \labelname{Agent}$) 
	and the destination node labeled $\labelname{Person}$ ($\Rightarrow \labelname{Person}$).
  The example graph $\pgexa{1}$ introduced in \Cref{trfm:ex:basicpg} is valid regarding both 
	$\progsexa{1}$ and $\progsexa{2}$.
\end{example}

\section{Formalizing the Problem}
\label{trfm:sec:formal}

The essential problem statement of this work is formalized in the signature of \Cref{trfm:algorithm}:
We aim to decide, given a \simpleprogs shape $s$, whether all possible output graphs of a \gcname query $q$
are valid regarding this shape.
The set of \emph{possible output graphs} is defined as the set of graphs produced by $q$ 
from any graph that is valid regarding a given set of \simpleprogs shapes $\shapesin$, which
we call the \emph{input shapes} of our problem.
With this decision procedure, we can find the set of all shapes characterizing the possible
output graphs by enumerating a finite set of candidates and testing each candidate.

We refer to our prior work~\cite{DBLP:conf/www/Seifer0LS24} for a detailed discussion on when 
such a finite set exists or how to deal with cases where it does not exist.
In short, for restricted classes of shapes, the set of relevant candidates that need to be considered 
can be shown to be finite (\eg, $\DLogics$-based shapes); for other cases, heuristics can be employed.
Note how this allows our approach to work for compositions of queries: the output
of a prior step becomes the input of the next\footnote{
Indeed, since, as we will later explore in detail, the input shapes are encoded as $\DLogics$ axioms, 
compositions of multiple queries do \emph{not} require full exploration of the explicit set of output shapes
but can rely on the inferred axioms directly.}.

\begin{algorithm}[ht]
  \caption{High-level pseudocode outlining our core approach.}
  \label{trfm:algorithm}
  \begin{algorithmic}[1]
      \REQUIRE A finite set of shapes \simpleprogs $\shapesin$, a \gcname $q = \gcformal$, and a \simpleprogs shape $s$.
      \ENSURE Does $\shaclvalid{\sparqleval{q}{\graphin}}{s}$ hold for every graph $\graphin$ where $\shaclvalid{\graphin}{\shapesin}$? 
      \STATE $s' \gets \sprogstoshacl{s}$
      \STATE $\shapesin' \gets \{ \sprogstoshacl{s_i} \mid s_i \in \shapesin \}$
      \STATE $q' \gets \sgcoretoeccq{q}$
      \STATE $\Sigma \gets \operatorname{infer}(\shapesin', q')$
      \RETURN \textbf{if} $\operatorname{test}(\Sigma, s')$ \textbf{then} $\textsc{yes}$ \textbf{else} $\textsc{unknown}$
  \end{algorithmic}
\end{algorithm}

As motivated in the introduction (\Cref{trfm:fig:overview}), 
we solve this problem by first mapping all components (that is, graphs, shapes, and queries)
to an intermediate RDF layer and then utilizing an extended version of our algorithm presented 
in~\cite{DBLP:conf/www/Seifer0LS24}.
In \Cref{trfm:algorithm}, we outline this approach in pseudocode: 
here, the functions $\sprogstoshacl{s}$ and $\sgcoretoeccq{q}$ are the mapping functions for encoding
\simpleprogs shapes and \gcname queries in \shaclname and \eccqname, respectively.
A final function for mapping the graphs themselves is not actually utilized\,---\,since our approach
does not depend on concrete graph instances\,---\,but is required for the sake of formalizing the remaining
mappings.

The functions \emph{infer} and \emph{test} refer to the extension of~\cite{DBLP:conf/www/Seifer0LS24}:
the function \emph{infer} takes the RDF-mapped query and shapes, producing a set of DL axioms $\Sigma$;
this step extends the algorithm from~\cite{DBLP:conf/www/Seifer0LS24}, accounting for the additional features included 
in \eccqname (\cf \Cref{trfm:def:eccq:syntax}).
The function \emph{test} takes these axioms and checks whether it can be concluded that $s'$ is guaranteed 
to hold for all possible output graphs; this step utilizes entailment.

Based on this formal problem signature, we define \Cref{trfm:maintheorem}. 
We will construct the missing components of \Cref{trfm:algorithm} in the following two sections
(the family of mappings in \Cref{trfm:sec:map} and the extended inference approach in \Cref{trfm:sec:extend}).
Finally, in \Cref{trfm:sec:meta}, we will show that the theorem holds.

\begin{theorem}[Soundness of \Cref{trfm:algorithm}]
   \label{trfm:maintheorem}
   Given a finite set of shapes $\shapesin$, a \gcname $q = \gcformal$, and a shape $s$,
   then $\shaclvalid{\sparqleval{q}{\graphin}}{s}$ holds for every graph $\graphin$ 
   where $\shaclvalid{\graphin}{\shapesin}$ if
   \[\operatorname{test}(\operatorname{infer}(\{\sprogstoshacl{s_i} \mid s_i \in \shapesin\}, \sgcoretoeccq{q}),\ \sprogstoshacl{s}).\]
\end{theorem}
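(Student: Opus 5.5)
The plan is to decompose the soundness claim along the commuting diagram of \Cref{trfm:fig:overview}. We need a graph mapping $\pgtordfname$ from PGs to RDF graphs that reifies edges as individuals, with $\mnode$/$\medge$ typing, $\metn$/$\mnte$ roles for edge endpoints, and properties as roles. We then establish three lemmas about it.

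\textbf{(L1) Shape equivalence.} For every PG $G$ and \simpleprogs shape $s$, $\shaclvalid{G}{s}$ holds iff $\shaclvalid{\pgtordf{G}}{\sprogstoshacl{s}}$. This is proved by structural induction on the constraint $\phi_N$/$\phi_E$, with a case analysis on the target $q_N$/$q_E$. The validation knowledge base of $\pgtordf{G}$ under CWA/UNA/DCA pins the interpretation of every concept and role to the graph, so each $\DLogics$ constructor can be matched against a clause of \Cref{trfm:fig:nconstrainteval}.

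\textbf{(L2) Query equivalence.} For every PG $G$ and \gcname $q$, $\pgtordf{\gceval{q}{G}} = \eccqeval{\sgcoretoeccq{q}}{\pgtordf{G}}$, up to renaming of fresh IRIs. We first show that valuations of $\gcpattern$ over $G$ correspond bijectively to valuations of the mapped pattern over $\pgtordf{G}$, restricted to the ordinary variables. Node, edge and where-clauses translate atom-wise, and the generic variables $\cvar{x}$, $\rvar{x}$, $\ivar{x}$ range over the copied labels and properties. We then show, template component by component, that the generic copy patterns together with the filters for $\gcremove$ and the explicit atoms for $\gcset$ produce exactly the triples of $\gcseval{\gcsr_L}{v,G}$ and $\gcseval{\gcsr_P}{v,G}$.

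\textbf{(L3) Soundness of the extended inference.} For every RDF graph $G'$ with $\shaclvalid{G'}{\shapesin'}$, every axiom in $\operatorname{infer}(\shapesin',q')$ is satisfied by $\eccqeval{q'}{G'}$. Moreover, $\operatorname{test}(\Sigma,s')$ implies that $\shaclvalid{H}{s'}$ for every graph $H$ validating $\Sigma$. This extends the soundness result of~\cite{DBLP:conf/www/Seifer0LS24}, and the only new cases are the generic patterns.

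Given the lemmas, the theorem follows directly. Take $\graphin$ with $\shaclvalid{\graphin}{\shapesin}$. By (L1), $\pgtordf{\graphin}$ validates $\shapesin'$. By (L3), $\eccqeval{q'}{\pgtordf{\graphin}}$ validates $s'$. By (L2), this graph equals $\pgtordf{\sparqleval{q}{\graphin}}$, and applying (L1) in the reverse direction gives $\shaclvalid{\sparqleval{q}{\graphin}}{s}$. Only one direction of (L3) is needed, which matches the algorithm's $\textsc{unknown}$ answer in the other case.

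The main obstacle is (L3) for the generic patterns $\atomGenc{x}$ and $\atomGenp{x}$. The inferred knowledge base must account for output concept and role names that are copied from arbitrary input names rather than fixed by the template. My first attempt would be to introduce, per variable $x$, a fresh concept $\vconcept{x}$ for its bindings. For each name $A$ in the finite signature not excluded by a filter $\atomCNot{x}{A}$, I would add axioms stating that output $A$ over $\vconcept{x}$ coincides with input $A$, and the analogous axioms for roles. The key step is then to rebuild the proof of~\cite{DBLP:conf/www/Seifer0LS24}: from a validation model of the input, construct a model of the inferred axioms together with the validation knowledge base of the output, and show that entailment of $s'$ transfers.

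A secondary difficulty is (L2), specifically the freshness of identities. The mapping must send fresh PG identities to fresh IRIs in a $\mu$-deterministic way, and a \gcname valuation can bind a node variable without its edge. I would handle this by proving the equality only up to a bijective renaming of fresh names, and noting that validity is invariant under such renamings.
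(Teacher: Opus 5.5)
Your decomposition is the paper's: (L1), (L2) and (L3) are its propositions on shape mapping, query mapping and shape entailment, and your closing chain is exactly how the paper derives the theorem. The gap is in how you state (L3).

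\textbf{The statement of (L3).} The claim that every axiom of $\operatorname{infer}(\shapesin',q')$ is satisfied by $\eccqeval{q'}{G'}$ is false. $\Sigma$ contains $\shapesin'$ itself and the variable-concept axioms of rule~2, which speak about the input graph and the bindings, not the output. For instance, an input shape $A \sqsubseteq \exists p.B$ fails on an output that copies $A$ but drops $p$. Moreover, $\operatorname{test}$ is entailment of the renamed shape $\ddot{s}'$ in the output namespace, not of $s'$.

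The missing device is the extended graph $\graphext = \graphin \cup \dot{G}_{\mathrm{med}} \cup \graphvar \cup \ddot{G}_{\mathrm{out}}$. It combines, over pairwise disjoint vocabularies, the input graph, the matched subgraph with singly dotted names, an assertion $\atomC{a}{\vconcept{x}}$ for every binding $\mu(x)=a$, and the output graph with doubly dotted names. The argument then runs as follows.
\begin{enumerate}
\item Check, rule by rule of \Cref{trfm:def:cwa}, that $\graphext$ validates $\Sigma$.
\item Validity means consistency with the validation knowledge base, so the model witnessing step 1 satisfies $\ddot{s}'$ whenever $\Sigma \models \ddot{s}'$. Hence $\graphext$ validates $\ddot{s}'$.
\item The renaming proposition taken from the prior work ($\shaclvalid{\graphout}{\varphi}$ iff $\shaclvalid{\graphext}{\ddot{\varphi}}$) then gives $\shaclvalid{\graphout}{s'}$.
\end{enumerate}
Your last sentence, about a model of the inferred axioms together with the output's validation knowledge base, points at this construction. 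It must replace the (L3) statement rather than accompany it.

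\textbf{Generic patterns.} The theorem is about the fixed $\operatorname{infer}$ of \Cref{trfm:def:cwa}, so you must verify its axioms on $\graphext$ rather than design new ones. The axiom you sketch would in fact be unsound. The equivalence $\ddot{A} \sqcap \vconcept{x} \equiv A \sqcap \vconcept{x}$ fails whenever a binding of $x$ receives $A$ through another template atom. Take pattern $\{\atomC{x}{\eB}, \atomGenc{x}, \atomC{y}{\eB}\}$, template $\{\atomGenc{x}, \atomC{y}{\labelname{POI}}\}$ and input $\{\atomC{a}{\eB}\}$: the individual $a$ lies in $\ddot{\labelname{POI}} \sqcap \vconcept{x}$ but not in $\labelname{POI}$.

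This is why the paper lets generic copies contribute only as disjuncts. The concept components $\vconcept{x}^A \equiv \vconcept{x} \sqcap A$ (or $\bot$ when $x$ has no generic copy or $\atomCNot{x}{A}$ is a filter) enter the definition $\ddot{A} \equiv \bigsqcup_{\atomC{u}{A} \in H} \Concept{u} \sqcup \bigsqcup_{x} \vconcept{x}^A$. For roles the analogue is $\vconcept{x}^p \equiv \vconcept{x} \sqcap \exists p.\vconcept{x}^{p,o}$.

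\textbf{Generic atoms must not eliminate bindings.} Both rule~2 and your bijection claim in (L2) need an explicit argument that generic atoms and their filters never eliminate a binding of the ordinary variable. The paper isolates this as a lemma. It rests on every mapped variable carrying an $\mnode$ or $\medge$ typing atom, which is why the paper's binding equivalence has fallback clauses with $\mnode$.

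For $\atomGenp{x}$ this only works if the generic role variable may also match the $\rdftype$ triple, as in the concrete SPARQL of the appendix. Otherwise every node or edge without properties loses all its bindings, because its only outgoing role triples are the filtered $\mnte$ or $\metn$ ones.
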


\section{A Family of Mappings on Reified Property Graphs}
\label{trfm:sec:map}

As outlined in \Cref{trfm:fig:overview} and the previous section, we introduce a family of mappings: 
we map PGs $G$ to RDF graphs $\pgtordf{G}$,
utilizing reification for first-class edges (\Cref{trfm:sub:pgtordf}); 
this mapping is not directly utilized by our approach
but instead comprises the formal basis on which the remaining mappings are defined.
We map \simpleprogs shapes $s$ to $\DLogics$-encoded SHACL shapes $\sprogstoshacl{s}$,
accounting for the reification introduced with the graph mapping (\Cref{trfm:sub:progstoshacl}).
And finally, we map \gcname $q$ to \eccqname $\sgcoretoeccq{q}$,
accounting for both reification and various
semantic differences between the languages (\Cref{trfm:sub:gcoretosparql}).

\subsection{PG to RDF Mapping}
\label{trfm:sub:pgtordf}

We first assume a utility function $\toiriname$ that maps the individual elements of PGs to unique IRIs.
The details of this function are left abstract; however, in real-world implementations, such a function
could, for example, utilize prefixes to differentiate between PG constructs, including
node and edge IDs, labels, properties, and values. 
We also introduce $\mnte$ (\emph{node-to-edge}) 
and $\metn$ (\emph{edge-to-node}) as role names,
and $\mnode$ and $\medge$ as special concept names distinct from $\ConceptNames$.
The former are used for reification of first-class edges,
the latter for differentiating nodes and edges explicitly%
\footnote{For the sake of readability, we assign concrete names, using \texttt{m} to hint at some unique \emph{meta} prefix
that could be used in practice; they can be arbitrary but unique names.}.

The straightforward mapping $\pgtordf{G}$ is specified in \Cref{trfm:def:graphmap} and demonstrated 
in \Cref{trfm:ex:maprdf}.
It retains all information and can be reversed by simply inverting the presented mapping function.
It utilizes $\toiriname$ to map all nodes and edges, including their associated labels and properties,
to RDF triples; IRIs for both PG nodes and PG edges are derived from their IDs.
Edges are reified using two triples and the special role names $\mnte$ and $\metn$.

\begin{definition}[Mapping from PG to RDF Graph]
  \label{trfm:def:graphmap}
  The corresponding RDF graph for a PG $G = (N,E,\rho,\lambda,\sigma)$ is the graph constructed by
  \[
    \pgtordf{G} \coloneq \{\pgtordfn{n} \mid n \in N\} \cup \{\pgtordfe{e} \mid e \in E\}
  \]
  where $\pgtordfname$ for nodes is defined as
  \begin{align*}
    \pgtordfn{n} \coloneq
        \ & \{\triple{\toiri{n}}{\rdftype}{\toiri{l_N}} \mid l_N \in \lambda(n)\} \\
          & \cup \{\triple{\toiri{n}}{\toiri{k_{N}}}{\toiri{v}} \mid k_{N} \in K_{N}, v \in \sigma(n, k_{N})\}\\
        & \cup \{  \triple{\toiri{n}}{\rdftype}{\mnode}  \}
  \end{align*}
  and for edges (with $\rho(e) = (n_1, n_2)$) as
  \begin{align*}
    \pgtordfe{e} \coloneq
        \ & \{\triple{\toiri{n_1}}{\mnte}{\toiri{e}}, \triple{\toiri{e}}{\metn}{\toiri{n_2}}\} \\
        & \cup \{ \triple{\toiri{e}}{\rdftype}{\toiri{l_E}} \mid l_E \in \lambda(e)  \}\\
        & \cup \{ \triple{\toiri{e}}{\toiri{k_{E}}}{\toiri{v}} \mid k_{E} \in K_{E}, v \in \sigma(e, k_{E}) \}\\
        & \cup \{ \triple{\toiri{n}}{\rdftype}{\medge}  \}.
  \end{align*}
\end{definition}

\begin{example}
  \label{trfm:ex:maprdf}
  We map the graph $\pgexa{1}$ from \Cref{trfm:ex:basicpg} to RDF,
  assuming a reasonable definition for $\toiriname$ and omitting the display of any prefixes,
  to obtain the following set of triples for $\pgtordf{\pgexa{1}}$,
  as visualized side-by-side in \Cref{trfm:fig:sidebyside}:
  \begin{align*}
    \{ & \triplei{\idname{1}}{\rdftypea}{$\mnode$}, \triplei{\idname{1}}{\rdftypea}{Agent}, \triplei{\idname{1}}{name}{Smith}, \triplei{\idname{2}}{\rdftypea}{$\mnode$}, \triplei{\idname{2}}{\rdftypea}{Person},\\
       & \triplei{\idname{10}}{\rdftypea}{$\medge$}, \triplei{\idname{10}}{\rdftypea}{obs}, \triplei{\idname{10}}{since}{2002}, \triplei{\idname{1}}{$\mnte$}{\idname{10}}, \triplei{\idname{10}}{$\metn$}{\idname{2}}\}.
  \end{align*}
\end{example}

\begin{figure}[t]
\begin{minipage}{0.73\textwidth}
  \begin{tikzpicture}[
    every label/.style={inner sep=0.0em}
]
    \node[new node, inner sep=0.1cm] (s) {\idname{1}};
    \node[new node] (a) [left = 1.5cm of s] {\concept{Agent}};
    \node[new node, inner sep=0.0cm] (o) [right = 1.5cm of s]  {\idname{10}};
    \node[new node, inner sep=0.1cm] (p) [right = 1.5cm of o]  {\idname{2}};
    \node[new node] (pp) [right = 1.5cm of p] {\concept{Person}};
    \node[new node] (n) [below = 0.1cm of a]  {\concept{Smith}};
    \node[new node, xshift=0.2cm] (obs) [below = 0.1cm of pp]  {\concept{2002}};
    \node[new node] (b) [left = 5.0cm of obs] {\prop{obs}};
    \node[new node] (sn) [above = 1.0cm of s]  {$\mnode$};
    \node[new node] (pn) [above = 1.0cm of p]  {$\mnode$};
    \node[new node] (oe) [above = 1.0cm of o]  {$\medge$};

    \draw [-] (s) edge[arrout] node[pglabel]{$\mnte$} (o);
    \draw [-] (s) edge[arrout] node[pglabel]{$\rdftypea$} (sn);
    \draw [-] (p) edge[arrout] node[pglabel]{$\rdftypea$} (pn);
    \draw [-] (o) edge[arrout] node[pglabel]{$\rdftypea$} (oe);
    \draw [-] (o) edge[arrout] node[pglabel]{$\rdftypea$} (b);
    \draw [-] (o) edge[arrout] node[pglabel]{$\metn$} (p);
    \draw [-] (s) edge[arrout] node[pglabel]{$\rdftypea$} (a);
    \draw [-] (p) edge[arrout] node[pglabel]{$\rdftypea$} (pp);
    \draw [-] (s) edge[arrout] node[pglabel]{\prop{name}} (n);
    \draw [-] (o) edge[arrout] node[pglabel]{\prop{since}} (obs);

\end{tikzpicture} 
\end{minipage}
\hfill
\begin{minipage}{0.23\textwidth}
  \begin{tikzpicture}[
    every label/.style={inner sep=0.0em}
    ]

    \node[new node, label={0:\ :\concept{Agent}}] (s) {\idname{1}};
    \node[new node, label={0:\ :\concept{Person}}] (p) [below = 3.0cm of s]  {\idname{2}};

    \node[draw,minimum height = 0.5cm, minimum width = 1cm] (pp1) [below right = 0.1cm of s] {\propname{name}: \stringvalue{Smith}};
    \node[draw,minimum height = 0.5cm, minimum width = 1cm] (pp2) [below = 1.25cm of pp1] {\propname{since}: \stringvalue{2002}};

    \path[->]
        (s) edge [arrout] node [right] {\idname{10}\ :obs} (p);
\end{tikzpicture}
\end{minipage}
\caption{The reified RDF encoding $\pgtordf{\pgexa{1}}$ (left) and the original graph $\pgexa{1}$ (right). 
  Note how the edge with ID $10$ (right) becomes a normal RDF node via reification (left).}
\label{trfm:fig:sidebyside}
\end{figure}

\subsection[\simpleprogs to SHACL (ALCHOI) Mapping]{\simpleprogs to SHACL ($\DLogics$) Mapping}
\label{trfm:sub:progstoshacl}

We define a function $\sprogstoshaclname$ for mapping \simpleprogs shapes to SHACL shapes.
Since we follow our prior work~\cite{DBLP:conf/www/Seifer0LS24} in representing SHACL shapes as 
$\DLogics$ axioms, we map to $\DLogics$ directly.
That is, we map both target queries and constraints to concept descriptions
and shapes to subsumption axioms (\Cref{trfm:def:shapemap}).

\begin{definition}[Mapping sProGS to $\DLogics$]
  \label{trfm:def:shapemap}
  A node shape $\nodeshapeanon$ is mapped to the axiom $\sprogstoshacl{\nodeshapeanon} = \sprogstoshaclnt{q_N} \sqsubseteq \sprogstoshaclnc{\phi_N}$, 
  where $\sprogstoshaclnt{q_N}$ and $\sprogstoshaclnc{\phi_N}$ are defined in \Cref{trfm:fig:maptargetn} and \Cref{trfm:fig:mapconstraintn}, respectively.
  With equivalent definitions (\Cref{trfm:fig:maptargete} and \Cref{trfm:fig:mapconstraintn}), we can map edge shapes as 
  $\sprogstoshacl{\edgeshapeanon} = \sprogstoshaclet{q_E} \sqsubseteq \sprogstoshaclec{\phi_E}$.
\end{definition}

We write $\sprogstoshacl{S}$ as a shorthand for mapping each $s \in S$.
Most constructs from the syntax of \simpleprogs translate directly to $\DLogics$, 
similarly to how SHACL is defined in terms of $\DLogics$ in~\cite{DBLP:conf/www/Seifer0LS24}.
Interesting cases include the constraint $\Rightarrow \phi_N$, which maps to 
$\exists\ \metn\ .\ \sprogstoshaclnc{\phi_N}$.
This node constraint requires the special role name $\metn$ 
according to the reification defined for PGs,
since the original \simpleprogs constraint refers to the target node of an edge.
Given that this edge is an ordinary node in the RDF encoding, we must invoke $\exists \metn$
to navigate to the node representing that target and then use the mapped constraint $\phi_N$.
Similarly, $\existsright{\phi_E}$ is mapped to $\exists\ \mnte\ .\ \sprogstoshaclec{\phi_E}$.
\Cref{trfm:ex:mapshacl} shows the full mappings for the running example.

\begin{figure}[t]
    \centering
    \begin{minipage}[t]{0.25\textwidth}
        \centering
        \vspace{-3cm}
        \begin{tabular}{ l l }
            \simpleprogs& $\DLogics$\\
            \hline
            $\sprogstoshaclnt{n}$ & $\{\toiri{n}\}$\\
            $\sprogstoshaclnt{l_N}$ & $\toiri{l_N}$\\
            $\sprogstoshaclnt{k_N}$ & $\exists\ \toiri{k_N}\ .\ \top$
        \end{tabular}
        \caption{Mapping for node target queries.}
        \label{trfm:fig:maptargetn}
        \vfill
        \vspace{1cm} 
        \begin{tabular}{ l l }
            \simpleprogs & $\DLogics$\\
            \hline
            $\sprogstoshaclet{e}$ & $\{\toiri{e}\}$\\
            $\sprogstoshaclet{l_E}$ & $\toiri{l_E}$\\
            $\sprogstoshaclet{k_E}$ & $\exists\ \toiri{k_E}\ .\ \top$
        \end{tabular}
        \caption{Mapping for edge target queries.}
        \label{trfm:fig:maptargete}
    \end{minipage}
    \hfill
    \begin{minipage}[t]{0.6\textwidth}
        \centering
        \begin{tabular}{ l l }
            \simpleprogs & $\DLogics$\\
            \hline
            $\sprogstoshaclnc{\top}$                         & $\top$\\
            $\sprogstoshaclnc{n}$                            & $\toiri{n}$\\
            $\sprogstoshaclnc{l_N}$                          & $\toiri{l_N}$\\
            $\sprogstoshaclnc{\neg \phi_N}$                  & $\neg \sprogstoshaclnc{\phi_N}$\\
            $\sprogstoshaclnc{\phi_N^1 \wedge \phi_N^2}$     & $\sprogstoshaclnc{\phi_N^1} \sqcap \sprogstoshaclnc{\phi_N^2}$ \\
            $\sprogstoshaclnc{\exists k_N.(= v)}$              & $\exists\ \toiri{k_N}\ .\ v $ \\
            $\sprogstoshaclnc{\exists k_N.\top}$              & $\exists\ \toiri{k_N}\ .\ \top $ \\
            $\sprogstoshaclnc{\exists^{\leftarrow} \phi_E}$     & $\exists\ \metn^-\ .\ \sprogstoshaclec{\phi_E}$ \\
            $\sprogstoshaclnc{\exists^{\rightarrow} \phi_E}$ & $\exists\ \mnte\ .\ \sprogstoshaclec{\phi_E}$ \\
            \ldots & \ldots \\
            $\sprogstoshaclec{\Rightarrow \phi_N}$ & $\exists\ \metn\ .\ \sprogstoshaclnc{\phi_N}$ \\
            $\sprogstoshaclec{\Leftarrow \phi_N}$  & $\exists\ \mnte^-\ .\ \sprogstoshaclnc{\phi_N}$ \\
        \end{tabular}
        \caption{Mapping \simpleprogs node and edge constraints to $\DLogics$. 
          We omit some cases for edge constraints that are equivalent to node constraint cases.}
        \label{trfm:fig:mapconstraintn}
    \end{minipage}
\end{figure}

\begin{example}
  \label{trfm:ex:mapshacl}
  We map the shapes $\progsexa{1}$ and $\progsexa{2}$ from \Cref{trfm:ex:basicprogs} to 
  equivalent $\DLogics$ subsumption axioms.
  For the node shape $\progsexa{1}$, this results in the following shape:
  \begin{align*}
    \sprogstoshacl{\progsexa{1}} &= \sprogstoshacl{\nodeshapeanon[\labelname{Agent}][\existsright{(\labelname{obs}\ \wedge \ \Rightarrow \labelname{Person})}]} \\
    &= \labelname{Agent} \sqsubseteq \exists \mnte.(\labelname{obs} \sqcap \exists \metn.\labelname{Person})
  \end{align*}
  Note how the reification is expressed as two nested existential quantifications: we first
  navigate to the reified edge (\emph{node}) via $\mnte$, where not only a label ($\labelname{obs}$)
  but also a particular node constraint holds, to which we navigate via $\metn$; 
  we ensure $\idname{1}\labelname{:Agent} \rightarrow_{\mnte} \idname{10}\labelname{:obs} \rightarrow_{\metn} \idname{2}\labelname{:Person}$
  (\cf \Cref{trfm:fig:sidebyside}).
  Similarly, for the edge shape $\progsexa{2}$ we want $\idname{1}\labelname{:Agent} \leftarrow_{\mnte} \idname{10}\labelname{:obs} \rightarrow_{\metn} \idname{2}\labelname{:Person}$, but from the perspective of all reified edges, instead of nodes, such that in
  \begin{align*}
    \sprogstoshacl{\progsexa{2}} &= \sprogstoshacl{\edgeshapeanon[\labelname{obs}][\ \Leftarrow \labelname{Agent}\ \wedge \ \Rightarrow \labelname{Person}]} \\
        &= \labelname{obs} \sqsubseteq \exists \mnte^- . \labelname{Agent} \sqcap \exists \metn . \labelname{Person}
  \end{align*}
  we target an edge to navigate back to the origin (inverse $\mnte$) as well as
  the destination via $\metn$, requiring the node labels $\labelname{Agent}$ and $\labelname{Person}$, respectively.
\end{example}

\subsection{\gcname to \eccqname Mapping}
\label{trfm:sub:gcoretosparql}

Finally, we map \gcname to \eccqname.
The principal challenge for this mapping\,---\,aside from accounting for reification\,---\, is 
that in \gcname the semantics for constructing labels and properties are different from \eccqname 
in that all properties and labels are kept for constructed nodes or edges
unless there are remove clauses.
In \eccqname, on the other hand, all triples in the result graph must be explicitly constructed
by atomic patterns in the query template.

We solve this issue by utilizing generic concept and role variables with appropriate filter conditions 
to define a mapping $\sgcoretoeccq{q}$ for a \gcname $q$ (\Cref{trfm:def:querymap}).
To this end, we introduce an intermediate representation referred to as IQL (\emph{Intermediate Query Language}) 
in \Cref{trfm:def:iql} that aligns with the two types of patterns of \gcname,
\ie matching nodes or edges, on the \gcname side
with full query patterns or templates on the \eccqname side.
This aligns parts of queries on both sides that compose naturally.
Note how the decision to syntactically include where clauses in node or edge patterns 
of \gcname aids this composability. 

Via this intermediate layer, we associate \gcname variables with generic copy patterns 
expressed in \eccqname that explicitly construct the required labels and properties.
\emph{Set} clauses translate to explicit construction in \eccqname, 
while \emph{remove} clauses translate to filter conditions on generic variables in the pattern
of \eccqname queries, removing the offending labels from the bindings for these generic variables.
We demonstrate this mapping in \Cref{trfm:ex:querymap}.

\begin{definition}[Mapping $\gcname$ to $\eccqname$]
  \label{trfm:def:querymap}
  Given a \gcname $q$, we define the mapping function $\sgcoretoeccqname(q)$ by introducing 
  the intermediate query language IQL (\Cref{trfm:def:iql}) 
  and several utility functions for mapping to and from IQL, as 
  \begin{align*}
    \sgcoretoeccq{q} := \iqltoeccq{\sgcoretoiql{q}}
  \end{align*}
  where the following utility functions (defined in \Cref{trfm:fig:gciqlboth}) are utilized:
  \begin{align*}
    \sgcoretoiql{\gcformalname} &= \bigcup_{\gctemplatecomponent \in \gctemplate} \sgcoretoiqlc{\gctemplatecomponent, \operatorname{vars}(\gcpattern)} \cup \bigcup_{\gcpatterncomponent \in \gcpattern} \sgcoretoiqlm{\gcpatterncomponent}\\
    \iqltoeccq{\iqlformal} &= \bigcup_{t \in \iqltemplate} \iqltoeccqc{t} \cup \bigcup_{p \in \iqlpattern} \iqltoeccqm{p}
  \end{align*}
\end{definition}

\begin{definition}[Intermediate Query Language]
  \label{trfm:def:iql}
  We define the syntax of the intermediate query language (IQL) as follows:
  \begin{align*}
    q &:= \iqlformal\\
    p &:= \operatorname{M_N}(x_n, W_L, W_K, W_V, R_L, R_K) 
          \mid \operatorname{M_E}(x_n, x_n, x_e, W_L, W_K, W_V, R_L, R_K) \\
    t &:= \operatorname{C_N}(x_n, A_L, A_K, V) 
          \mid \operatorname{C_E}(x_n, x_n, x_e, A_L, A_K, V)
  \end{align*}
  where $W_L$ is a set of label patterns $\whereLabelRelative{l}$,
  $W_K$ is a set of key patterns $\wherePropertyRelative{k}$,
  and $W_V$ is a set of patterns $\wherePropertyEqualsRelative{k}{v}$.
  Similarly, $A_L$ is a set of \emph{set label} patterns $\gcset l$,
  $A_K$ is a set of \emph{set key-value} patterns $\gcset k = v$,
  and $V$ is a set of variables.
  Finally, $R_L$ is a set of \emph{remove label} patterns $\gcremove l$, and
  $R_K$ is a set of \emph{remove key} patterns $\gcremove k$.
  The semantics of IQL are implicitly defined through the mappings
  in \Cref{trfm:fig:gciqlboth}.
\end{definition}

\begin{figure*}
    \centering
    \input{figures/gciqleccq}
\caption{
    Component-wise mappings from \gcname to \iqlname and \iqlname to \eccqname, both for 
    query templates and patterns.
    We write $\gcsr_x$ to refer to the union of all sets of add or remove patterns $\gcsr$, 
    such that $\gcoreNodeS{x}{\gcsr} \in \gctemplate$ or $\gcoreEdgeS{x_1}{x_2}{x}{\gcsr} \in \gctemplate$.}
\label{trfm:fig:gciqlboth}
\end{figure*}

\begin{example}
  \label{trfm:ex:querymap}
  The example \gcname $\gcexa{2}$ in \Cref{trfm:fig:ex:querymapping} (left) includes the major \gcname features,
  such as matching on an edge pattern, setting and removing labels, and restricting results with constraints.
  The resulting \eccqname $\sgcoretoeccq{\gcexa{2}}$ is shown in \Cref{trfm:fig:ex:querymapping} (right).
  The query is suitable for application to $\pgexa{1}$ from \Cref{trfm:ex:basicpg}.
  We focus here on the single variable $\varname{y}$, which occurs in two atomic patterns:
  $\gcoreNode{(\varname{y}, \whereLabelRelative{\labelname{Person}})}$ (in $\gcpattern$) and 
  $\gcoreNodeS{\varname{y}}{\{\gcset \labelname{POI}, \gcremove \labelname{Person}\}}$ (in $\gctemplate$).

$\operatorname{M_N}(\varname{y}, \{:\labelname{Person}\}, \emptyset, \emptyset, \{\gcremove\labelname{Person}\}, \emptyset\})$
and $\operatorname{C_N}(\varname{y}, \{\gcset\labelname{POI}\}, \emptyset, \{\varname{x},\varname{\varname{y}},\varname{e}\})$ are the 
  corresponding IQL terms.
  The former maps to the \eccqname pattern including the generic pattern $\atomGenc{\varname{y}}$, 
  copying all labels for $\varname{y}$, 
  the explicit pattern $\atomC{\varname{y}}{\labelname{Person}}$ selecting instances of $\labelname{Person}$,
  but also the filter $\atomCNot{\varname{y}}{\labelname{Person}}$, which prevents bindings,
  and thus also the construction, for label $\labelname{Person}$ for the generic variable $\cvar{\varname{y}}$,
  included because of the remove term in the original \gcname query.
  The \eccqname template includes, along with the generic patterns, the explicit construction of
  $\atomC{\varname{y}}{\labelname{POI}}$.
\end{example}

\begin{figure*}[t!]
    \begin{subfigure}[t]{0.18\textwidth}
        \centering
        \begin{align*}
          \{&\gcoreNodeS{\varname{x}}{\emptyset},\\
            &\gcoreEdgeS{\varname{x}}{\varname{y}}{\varname{e}}{\emptyset},\\
            &\gcoreNodeS{\varname{y}}{\{\tikzmarknode{c}{\gcset \labelname{POI}}, \tikzmarknode{a}{\gcremove \labelname{Person}\}}}%
          \}\\
            &\Leftarrow \\
           \{%
            &\gcoreNode{(\varname{x}, \{\wherePropertyEqualsRelative{\propname{name}}{\stringvalue{Smith}}\})},\\
            &\gcoreEdge{\varname{x}}{\varname{y}}{(\varname{e}, \{\whereLabelRelative{\labelname{obs}}\})},\\
            &\gcoreNode{(\varname{y}, \{\whereLabelRelative{\labelname{Person}}\})}%
         \}
        \end{align*}
    \end{subfigure}%
    \hfill
    \begin{subfigure}[t]{0.78\textwidth}
        \centering
        \begin{align*}
          \{&
            \atomGenN{\varname{x}},
            \atomGenc{\varname{x}},
            \atomGenp{\varname{x}},\\
            &
            \atomGenE{\varname{e}},
            \atomGenc{\varname{e}},
            \atomGenp{\varname{e}},
            \atomPin{\varname{x}}{\varname{e}},
            \atomPout{\varname{e}}{\varname{y}},\\
            &
            \atomGenN{\varname{y}},
            \atomGenc{\varname{y}},
            \atomGenp{\varname{y}},
            \tikzmarknode{d}{\atomC{\varname{y}}{\labelname{POI}}}
              \}\\
            & \leftarrow \\
          (\{&
            \atomGenN{\varname{x}},
            \atomGenc{\varname{x}},
            \atomGenp{\varname{x}},
            \atomP{\varname{x}}{\labelname{Smith}}{\labelname{name}},\\
            &
            \atomGenE{\varname{e}},
            \atomGenc{\varname{e}},
            \atomGenp{\varname{e}},
            \atomC{\varname{e}}{\labelname{obs}},
            \atomPin{x\varname{}}{\varname{e}},
            \atomPout{\varname{e}}{\varname{y}},\\
            &
            \atomGenN{\varname{y}},
            \atomGenc{\varname{y}},
            \atomGenp{\varname{y}},
            \atomC{\varname{y}}{\labelname{Person}}
              \},\\
          \{ &
            \tikzmarknode{b}{\atomCNot{\varname{y}}{\labelname{Person}}},
            \atomPNot{\varname{x}}{\mnte},
            \atomPNot{\varname{e}}{\metn},
              \atomPNot{\varname{y}}{\mnte}
              \})
        \end{align*}
    \end{subfigure}
  \caption{\gcname $\gcexa{2}$ on the left, and \eccqname $\sgcoretoeccq{\gcexa{2}}$ on the right.
    Note how we visually align the lines of the left and right columns for their common variables, 
    hinting at the IQL rules, \eg, the $\labelname{POI}$ set label and respective triple pattern (blue).
    The only exception is the inclusion of $\atomCNot{\varname{y}}{\labelname{Person}}$ (yellow)
    in the filter patterns of the \eccqname (right column),
    highlighting why the IQL pattern $M_N$ needs the remove clauses as part of their arguments:
    remove clauses from the template of \gcname affect the pattern (filters) of \eccqname.}
  \label{trfm:fig:ex:querymapping}
  \begin{tikzpicture}[overlay, remember picture]
    \node[fill=yellow, opacity=0.2, fit=(a), inner sep=2pt, rounded corners] {};
    \node[fill=yellow, opacity=0.2, fit=(b), inner sep=2pt, rounded corners] {};
    \node[fill=blue, opacity=0.1, fit=(c), inner sep=2pt, rounded corners] {};
    \node[fill=blue, opacity=0.1, fit=(d), inner sep=2pt, rounded corners] {};
  \end{tikzpicture}
\end{figure*}

\section{Shape Inference for \eccqname}
\label{trfm:sec:extend}

As introduced in the formalization of our core problem in \Cref{trfm:sec:formal},
we aim to infer a set of $\DLogics$ axioms $\Sigma$ from an \eccqname $q$ 
and a set of shapes $\shapesin$ (written $\Sigma = \operatorname{infer}(\shapesin, q)$), 
such that $\Sigma \models s_o$ ($\operatorname{test}(\Sigma,s_o)$ in \Cref{trfm:sec:formal}) 
implies that $s_o$ is a shape validating all possible output graphs of $q$, 
under the precondition that the input graphs of $q$ are guaranteed to be valid regarding $\shapesin$.

\subsection{Encoding Constraints in DL}

\begin{figure*}[t]
  \centering
  \begin{subfigure}[b]{.18\textwidth}
    \centering
    \begin{tikzpicture}[
    every label/.style={inner sep=0.0em}
]
    \node[new node,label={0:$\eAi,\eBi$}] (a) {$\ea$};
    \node[new node] (c) [below = 0.4cm of a] {};
    \node[new node,label={0:$\eEi,\eAi$}] (b) [below = 0.2cm of a]  {$\eb$};
    \node[new node] (d) [below = 0.4cm of b] {};
    \node[new node,label={0:$\eAi$}] (e) [below = 0.2cm of b]  {$c$};
\end{tikzpicture}
    \caption{An example input graph.}
    \label{trfm:fig:input}
  \end{subfigure}%
  \hfill%
  \begin{subfigure}[b]{.13\textwidth}
    \centering
    \begin{tikzpicture}[
    every label/.style={inner sep=0.0em}
]
    \node[new node,label={0:$\exv$}] (a) {$\ea$};
    \node[new node] (c) [below = 0.4cm of a] {};
    \node[new node,label={0:$\eyv$}] (b) [below = 0.2cm of a]  {$\eb$};
\end{tikzpicture}
    \caption{Variable concepts.}
    \label{trfm:fig:variables}
  \end{subfigure}%
  \hfill%
  \begin{subfigure}[b]{.14\textwidth}
    \centering
    \begin{tikzpicture}[
    every label/.style={inner sep=0.0em}
]
    \node[new node,label={0:$\eAo,\eBo$}] (a) {$\ea$};
    \node[new node] (c) [below = 0.4cm of a] {};
    \node[new node,label={0:$\eEo$}] (b) [below = 0.2cm of a]  {$\eb$};
\end{tikzpicture}
    \caption{Output graph of $q$.}
    \label{trfm:fig:out}
  \end{subfigure}
  \hfill%
  \begin{subfigure}[b]{.14\textwidth}
    \centering
    \begin{tikzpicture}[
    every label/.style={inner sep=0.0em}
]
    \node[new node,label={0:$\eAo,\eBo$}] (a) {$\ea$};
    \node[new node] (c) [below = 0.4cm of a] {};
    \node[new node,label={0:$\eEo,\eAo$}] (b) [below = 0.2cm of a]  {$\eb$};
\end{tikzpicture}
    \caption{Output graph of $q'$.}
    \label{trfm:fig:outprime}
  \end{subfigure}
  \caption{Example for graphs involved in constructing the validation knowledge base (\cf~\cite{DBLP:conf/www/Seifer0LS24}) 
    for the query $q$ with $\eccqpattern = \eccqtemplate = \{\atomC{x}{\eA}, \atomC{x}{\eB}, \atomC{y}{\eE}\}$ 
    from \Cref{trfm:ex:broken}. This case demonstrates the potential issue arising when adding 
    the atomic pattern $\atomGenc{y}$ in $q'$, since variable $y$ matches 
    $\eb$ which is an instance of $\eA$; this violates the assumptions about the 
    relationship of $\ddot{\eA}$ and $\ddot{\eB}$ in the output graph.
    Colors visualize different namespaces, while floating labels indicate concept names (\eg, $a$ is both $\eA$ and $\eB$).}
  \label{trfm:fig:matchingexample}
\end{figure*}

We build on and extend \cite{DBLP:conf/www/Seifer0LS24},
which essentially approximates axioms of the 
closed-world assumption from sub-expressions of the query 
and joins them with the input shapes and a few additional axioms.
To this end, three name\-spaces need to be distinguished:
those of the input graphs, 
of the subgraphs matched by the query, 
and of the new output graphs constructed as a result.
The extensions of concepts involved clearly differ between these three name\-spaces; 
thus, we follow~\cite{DBLP:conf/www/Seifer0LS24} in renaming any concept name $A$ in the input name\-spaces as
$\dot{A}$ and $\ddot{A}$ in the matched and output name\-spaces.
We do the same for each role name $p$ (with $\dot{p}$ and $\ddot{p}$).

Query variables can be represented with \emph{variable concepts},
that is, fresh concept names $V_{x}$ for each variable $x$.
The extensions of these concepts are equal to the set of all of their bindings.
Thus, variable concepts remain the same in all name\-spaces
and are not subject to any renaming.
They are constrained by the basic graph patterns of the query pattern $\eccqpattern$ 
they occur in and therefore play a key role in encoding the input-output relationship of queries.

The essential extension of our SPARQL subset \eccqname over the subset considered in~\cite{DBLP:conf/www/Seifer0LS24}
are generic copy operations through variables for concepts and properties
and certain filter expressions on these variables.
The minimal \Cref{trfm:ex:broken} shows how this invalidates the axioms inferred by~\cite{DBLP:conf/www/Seifer0LS24},
while also giving the intuition of this method.

\begin{example}
  \label{trfm:ex:broken}
  Assume a query $q$ with 
  $\eccqpattern = \eccqtemplate = \{\atomC{\varname{x}}{\eA}, \atomC{\varname{x}}{\eB}, \atomC{\varname{y}}{\eE}\}$ 
  and no input shapes.
  The axioms inferred by~\cite{DBLP:conf/www/Seifer0LS24} would include
  $\{\exvb \equiv \eA \sqcap \eB, \eyvb \equiv \eE, \ddot{\eA} \equiv \exvb, \ddot{\eB} \equiv \exvb, \ddot{\eE} \equiv \eyvb\}$.
  We demonstrate these for an example input graph (\Cref{trfm:fig:input}) in \Cref{trfm:fig:matchingexample}.
  That is, we define the variable concept $\exvb$ (visualized as a graph in \Cref{trfm:fig:variables}) 
  in terms of its bindings,
  namely the intersection of $\eA$ and $\eB$, and the new concept names in the output 
  name\-space $\ddot{\eA}$ and $\ddot{\eB}$ in terms of these $\exvb$.
  Since these axioms entail, \eg, $\ddot{\eA} \sqsubseteq \ddot{\eB}$, just as in the example \Cref{trfm:fig:out}, 
  we can conclude that the shape $\eA \sqsubseteq \eB$ holds on all output graphs.

  If we were to extend query $q$ as $q'$ by adding
  $\{\atomGenc{\varname{y}}\}$ to both $\eccqpattern$ and $\eccqtemplate$,
  the axiom $\ddot{\eA} \sqsubseteq \ddot{\eB}$ \emph{should} no longer be entailed;
  indeed, variable $\varname{y}$ might now also include\,---\,such as in 
  \Cref{trfm:fig:outprime}\,---\,bindings 
  that are of type $\eA$ exclusive or $\eB$ and modify $\ddot{\eA}$ or $\ddot{\eB}$ 
  through the generic pattern $\atomGenc{\varname{y}}$, thereby invalidating this subsumption.
\end{example}

We can remedy this broken inference introduced in \Cref{trfm:ex:broken} 
by including \emph{components} of all variables that have generic copy operations 
in the definition of concepts like $\ddot{\concept{\eA}}$.
Such components are defined in \Cref{trfm:def:utilcc} and included in axioms constructed 
by the adapted method outlined in \Cref{trfm:def:cwa}.
We write $\vcg(q)$ (\emph{variable connectivity graph}) to mean a graph where variables $\var(q)$ are nodes
and there exists an edge between two variables 
if and only if they both occur in a pattern (\eg, $x$ and $y$ in $\atomP{x}{y}{p}$) in $\sgcoretoeccq{q}$.
\Cref{trfm:ex:broken} can therefore be fixed by using this extension as shown in \Cref{trfm:ex:fixed}.
We demonstrate a full example of the algorithm in \Cref{trfm:sub:rax}. 

\begin{example}
  \label{trfm:ex:fixed}
  With the extended method, we infer the set of axioms including
  $\{\exvb \equiv \eA \sqcap \eB, \eyvb \equiv E, \ddot{\eA} \equiv \exvb \sqcup \eyvb^\eA \sqcup \exvb^\eA, \ddot{\eB} \equiv \exvb \sqcup \exvb^\eB \sqcup \eyvb^\eB, \ddot{E} \equiv \eyvb \sqcup \exvb^E \sqcup \eyvb^E\}$
   where $\eyvb^\eA \equiv \eyvb \sqcap \eA$, $\eyvb^\eB \equiv \eyvb \sqcap \eB$, $\eyvb^\eE \equiv \eyvb \sqcap E$ and $\exvb^\eA \equiv \exvb^\eB \equiv \exvb^E \equiv \bot$.

  Here, the axioms defined for the output name\-space (\eg, $\ddot{\eA} \equiv \exvb \sqcup \exvb^\eA \sqcup \eyvb^\eA$) include additional component concepts
  (\eg, $\eyvb^\eA$) for all variables in the query.
  Thus, we ensure that the possibility of atomic pattern $\atomGenc{\varname{y}}$ constructing instances of $\ddot{\eA}$ in output graphs is accounted for.
  As a result, we can no longer \emph{erroneously} infer $\eA \sqsubseteq \eB$ without additional knowledge in terms of input shapes
  about the relationship of $\eA$, $\eB$, and $\eE$.
\end{example}

\begin{definition}[Concept Components]\label{trfm:def:utilcc}
  We define the \emph{concept components} $\operatorname{Cc}(q)$ of an \eccqname $q = \eccqformal$ 
  as the set of axioms including:
  \begin{enumerate}
    \item For all $x \in \var(P)$ and $A \in \mathcal{V}$, if $\atomGenc{x} \in \eccqpattern \cap \eccqtemplate$ 
      and $(\atomCNot{x}{A}) \not\in \eccqfilter$, then $\vconcept{x}^{A} \equiv \vconcept{x} \sqcap A$.
      Otherwise, $\vconcept{x}^{A} \equiv \bot$.

    \item For all $x,y \in \var(P)$ and $p \in \mathcal{V}$, if $\atomGenp{x} \in \eccqpattern \cap \eccqtemplate$,
      $(\atomPNot{x}{p}) \not\in \eccqfilter$ and $\atomP{x}{y}{p}\not\in \eccqpattern$,
      then $\vconcept{x}^{p} \equiv \vconcept{x} \sqcap \exists p . \vconcept{x}^{p,o}$.
      Otherwise, $\vconcept{x}^{p} \equiv \bot$ and $\vconcept{x}^{p,o} \equiv \bot$.
  \end{enumerate}
  Here, $\mathcal{V} = \voc(q)\cup\voc(\shapesin)$ is the \emph{vocabulary} of all concept, role, 
  and individual names that occur in the query or in input shapes.
\end{definition}

\begin{definition}[Axiom Encoding]\label{trfm:def:cwa}
  The \emph{axiom encoding} of an \eccqname $q$
  and a set of $\DLogics$ shapes $\shapesin$,
  denoted $\operatorname{infer}(\shapesin, q)$,
  is the minimal set of $\DLogics$ axioms that include $\operatorname{Cc}(q)$ (\Cref{trfm:def:utilcc}) 
  as well as the following axioms:
  \begin{enumerate}
    \item The set of input shapes $\shapesin$.
    \item For each variable $x$ in $\var(q)$, the axiom 
        \begin{align*}
          \vconcept{x} \sqsubseteq \textstyle
          & {\bigsqcap_{\atomC{x}{A} \in P}} A
            \sqcap {\bigsqcap_{\atomP{x}{u}{p} \in P}} \exists p.\Concept{u}
            \sqcap {\bigsqcap_{\atomP{u}{x}{p} \in P}} \exists p^-.\Concept{u},
        \end{align*}
        and if $\vcg(P \cup H)$ is acyclic regarding $x$, then also the axiom
        \begin{align*}
        \vconcept{x} \sqsupseteq \textstyle
          & {\bigsqcap_{\atomC{x}{A} \in P}} A
          \sqcap {\bigsqcap_{\atomP{x}{u}{p} \in P}} \exists p.\Concept{u}
          \sqcap {\bigsqcap_{\atomP{u}{x}{p} \in P}} \exists p^-.\Concept{u}.
        \end{align*}

    \item For each concept name $A$ in $\voc(\eccqtemplate)$,
        $\ddot{A} \equiv \bigsqcup_{\atomC{u}{A} \in H} \Concept{u} \sqcup \bigsqcup_{x\in\var(H)} \vconcept{x}^A$.

    \item For each $p$ in $\voc(\eccqtemplate)$
        \begin{align*}
          \textstyle\bigsqcup_{\atomP{u}{v}{p} \in  \eccqtemplate} \Concept{u} &\sqsubseteq \exists \ddot{p}.\Concept{v} ,\\
          \exists \ddot{p}.\Concept{v} &\sqsubseteq \textstyle\bigsqcup_{\atomP{u}{v}{p} \in  \eccqtemplate} \Concept{u} \sqcup \textstyle\bigsqcup_{\atomGenp{x} \in H} \vconcept{x},\\
          \exists \ddot{p}.\top &\equiv {\textstyle\bigsqcup_{\atomP{u}{v}{p} \in  \eccqtemplate}} (\Concept{u} \sqcap \exists \ddot{p}.\Concept{v}) \sqcup \textstyle\bigsqcup_{x\in\var(H)} \vconcept{x}^p,\\
          \textstyle\bigsqcup_{\atomP{u}{v}{p} \in  \eccqtemplate} \Concept{v} &\sqsubseteq \exists \ddot{p}^-.\Concept{u} , \\
          \exists \ddot{p}^-.\Concept{u} &\sqsubseteq \textstyle\bigsqcup_{\atomP{u}{v}{p} \in  \eccqtemplate} \Concept{v} \sqcup \textstyle\bigsqcup_{\atomGenp{x} \in H} \vconcept{x}^{p,o}, \\
          \exists \ddot{p}^-.\top &\equiv {\textstyle\bigsqcup_{\atomP{u}{v}{p} \in P}} (\Concept{v} \sqcap \exists \ddot{p}^-.\Concept{u}).
        \end{align*}

  \item
    For each $p \in \sccqpattern$, $\dot{p} \sqsubseteq p$.

  \item
    For each $p \in \sccqpattern$, $p \sqsubseteq \dot{p}$
    if all atomic patterns including $p$ in $P$ have the form $\atomP{x}{y}{p}$ where $x$, $y$ occur in no other atomic pattern in $\sccqpattern$ and $x \neq y$.

  \item
    For each pair $p,r$ with $\atomP{x}{y}{p} \in \sccqpattern$ and $\atomP{x}{y}{r} \in \sccqtemplate$ or $\atomP{y}{x}{r} \in \sccqtemplate$
    \begin{enumerate}
        \item $\dot{p} \sqsubseteq \ddot{r}$ (if $\atomP{x}{y}{r} \in \sccqtemplate$)
              or $\dot{p} \sqsubseteq \ddot{r}^-$ (if $\atomP{y}{x}{r} \in \sccqtemplate$) ---
              if $P$ does not contain any other atomic patterns with $p$, and
        \item $\ddot{r} \sqsubseteq \dot{p}$ (if $\atomP{x}{y}{r} \in \sccqtemplate$)
              or $\ddot{r}^- \sqsubseteq \dot{p}$ (if $\atomP{y}{x}{r} \in \sccqtemplate$) ---
              if $H$ does not contain any other atomic patterns with $r$.
    \end{enumerate}
  \end{enumerate}
  \begin{center}
    Cases 1, 2, and 5-7 are taken from, and 3 and 4 are adapted from~\cite{DBLP:conf/www/Seifer0LS24}.
  \end{center}
\end{definition}

\subsection{Enriching the Query}

Generic copy operations can invalidate output shapes in certain cases, as shown in \Cref{trfm:ex:fixed}.
On the other hand, they enable means of semantically enriching the \eccqname query, leading to
additional shapes that \emph{can} be shown to hold.
To this end, we define the extension rules in \Cref{trfm:def:semext} that essentially copy constraints from the 
query pattern to the template in cases where this is allowed by generic patterns.
We include an example in \Cref{trfm:sub:rax}.

\begin{definition}[Extension of \eccqname]\label{trfm:def:semext}
  The semantics-preserving extension of a \eccqname $q = \eccqformal$, written $\operatorname{Ext}(q)$, is defined by inserting the following atomic patterns
  to the template $\eccqtemplate$:
  \begin{enumerate}
    \item If $\atomGenc{x} \in \eccqpattern \cap \eccqtemplate$, then for each 
      $\atomC{x}{A} \in \eccqpattern$ such that $(\atomCNot{x}{A}) \not\in \eccqfilter$, 
      we include the atomic pattern $\atomC{x}{A}$ in $\eccqtemplate$, and
    \item if $\atomGenp{x} \in \eccqpattern \cap \eccqtemplate$, then for each 
      $\atomP{x}{a}{p} \in \eccqpattern$ such that $(\atomPNot{x}{p}) \not\in \eccqfilter$, 
      we include the atomic pattern $\atomP{x}{a}{p}$ in $\eccqtemplate$.
  \end{enumerate}
\end{definition}

\subsection{Example}
\label{trfm:sub:rax}

We now demonstrate some axioms in $\Sigma$ constructed for the problem inputs
$\progsexac{S}{1} = \{\progsexa{1}, \progsexa{2}\}$ (\Cref{trfm:ex:basicprogs}) 
and $\gcexa{2}$ (\Cref{trfm:ex:querymap}).
Again, we will focus on axioms related to concept names, as they are more intuitive,
yet in principle very similar to axioms related to role names.
As a first step, we need to map the query to \eccqname as shown in \Cref{trfm:ex:querymap}
and then extend the mapped query to obtain the
\eccqname input as $\eccqexa{2} = \operatorname{Ext}(\sgcoretoeccq{\gcexa{2}})$.
The extended query template includes the atomic patterns
$\{\atomC{\varname{e}}{\labelname{obs}}, \atomP{\varname{x}}{\stringvalue{Smith}}{\labelname{name}}\}$
in addition to the atomic patterns that are shown in \Cref{trfm:fig:ex:querymapping},
which are copied from the query pattern due to the presence of generic patterns
for $\varname{x}$ and $\varname{e}$.
On the other hand, since the only constraining atomic pattern for $\varname{y}$, 
namely $\atomC{\varname{y}}{\labelname{Person}}$,
is also included in the filter $\atomCNot{\varname{y}}{\labelname{Person}}$, there is no template extension
for variable $\varname{y}$, even though there are generic patterns (\cf~\Cref{trfm:def:semext}).

According to \Cref{trfm:def:cwa}.1 we need to include the input shapes in $\Sigma$ 
after mapping them to $\DLogics$ axioms via $\sprogstoshacl{\progsexac{S}{1}}$.
These axioms encode constraints that are known to hold on the input graph by definition,
since we only consider valid input graphs (\cf~\Cref{trfm:ex:mapshacl}).
According to \Cref{trfm:def:cwa}.2 we next define variable concepts encoding known constraints 
for the bindings of all query variables. 
We include the following axioms in $\Sigma$:
\begin{align*}
  \eevb &\equiv \exists \metn.\eyvb\ \sqcap\ \exists \mnte^{-}. \exvb\ \sqcap\ \labelname{obs}\\
  \exvb &\equiv \exists \labelname{name}.\labelname{Smith}\ \sqcap\ \exists \mnte.\eevb\\
  \eyvb &\equiv \exists \metn^{-}.\eevb\ \sqcap\ \labelname{Person}
\end{align*}

These variable concepts are almost direct translations of the query pattern;
for example, variable $\varname{x}$ occurs in $\atomP{\varname{x}}{\labelname{Smith}}{\labelname{name}}$ 
and $\atomPin{\varname{x}}{\varname{e}}$ (\Cref{trfm:ex:querymap}), which translates to the intersection of 
$\exists \labelname{name}.\labelname{Smith}$ and $\exists \mnte.\eevb$ in the definition of $\exvb$.
Note how reification again plays a role here, with the presence of the role name $\mnte$.

Constraints related to concept names that explicitly occur in the query template are 
defined according to \Cref{trfm:def:cwa}.3
in terms of these variable concepts, including the concept components that were introduced 
in \Cref{trfm:def:utilcc}.
In the following set, we also include one axiom inferred from the rules for properties,
namely $\eevb \sqsubseteq \exists \ddot{\metn}.\eyvb$.
\begin{align*}
  \ddot{\labelname{obs}} & \equiv \eevb \sqcup \cpart{e}{obs} 
                              & \ddot{\labelname{POI}} &\equiv \eyvb \sqcup \cpart{x}{POI} \sqcup \cpart{y}{POI}\\
  \ddot{\labelname{Agent}} &\equiv \cpart{x}{Agent} \sqcup \cpart{y}{Agent} & \cpart{x}{POI} &\equiv \exvb \sqcap \labelname{POI} \\
  \cpart{x}{Agent} &\equiv \exvb \sqcap \labelname{Agent} & \cpart{y}{POI} &\equiv \eyvb \sqcap \labelname{POI} \\
  \cpart{y}{Agent} &\equiv \eyvb \sqcap \labelname{Agent} & \eevb &\sqsubseteq \exists \ddot{\metn}.\eyvb \\
  \cpart{e}{obs} &\equiv \eevb
\end{align*}

There are a few noteworthy aspects to these axioms:
firstly, note how $\ddot{\labelname{POI}}$ (defined with dots, since it differs from the concept $\labelname{POI}$,
which may occur in input graphs) is defined in terms of $\eyvb$, because it exclusively occurs in the atomic
pattern $\atomC{\varname{y}}{\labelname{POI}}$ of the query template, 
and $\eyvb$ encodes the bindings of variable $\varname{y}$.
Furthermore, we must include the concept components $\cpart{x}{POI}$ (and also $\cpart{y}{POI}$, though
this is redundant in this case) in the definition of $\ddot{\labelname{POI}}$ since variable 
$\varname{x}$ occurs with the generic pattern $\atomGenc{\varname{x}}$,
meaning that the new concept $\ddot{\labelname{POI}}$ may extend over some bindings for $\varname{x}$, too.
We treat $\ddot{\labelname{Agent}}$ similarly, with the difference that there is no actual construct
pattern featuring $\labelname{Agent}$; thus, its definition consists only of concept components.

The complete set of inferred axioms entails, \eg,
$\ddot{\labelname{obs}} \sqsubseteq \exists \ddot{\metn} . \ddot{\labelname{POI}}$.
Therefore, we know that $\labelname{obs} \sqsubseteq \exists \metn . \labelname{POI}$ 
validates all output graphs of the query.
For the final shapes, we drop the namespace distinction, since it is not part
of our (or any regular) query semantics.

\section{Metatheory}
\label{trfm:sec:meta}

To encode our inference task in $\DLogics$, 
we first map shapes and queries to $\DLogics$ and \eccqname
then infer a set of axioms $\Sigma$ from these mapped queries and shapes (\cf~\Cref{trfm:algorithm}).
Therefore, we need to prove that any mapped shape $C \sqsubseteq D$ holds on all output graphs of the query
if $\Sigma \models \ddot{C} \sqsubseteq \ddot{D}$.
That is, if our inferred set of axioms $\Sigma$ entails the respective axiom (\ie encoded shape)
in the output graph name\-space annotated with two dots.

The primary outcome of the following considerations 
is the soundness of \Cref{trfm:algorithm}, as expressed in \Cref{trfm:maintheorem}.
To this end, we first prove the semantic equivalence for our mappings in
\Cref{trfm:prop:smap} and \Cref{trfm:prop:qmap}.
\fullproofsref

\begin{proposition}[Soundness of Shape Mapping]
   \label{trfm:prop:smap}
   Given a PG $G$ and a set of \simpleprogs shapes $S$, the graph $G$ is valid regarding each
   shape $s \in S$ if and only if the mapped graph $\pgtordf{G}$ is valid regarding 
   the mapped shapes $\sprogstoshacl{s}$.
\end{proposition}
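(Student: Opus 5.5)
The plan is to reduce the statement to a single shape $s$, since validity for a set is defined shape-wise on both sides. The key tool is the validation knowledge base $(\TBox_G',\pgtordf{G})$ from \Cref{sts:def:validation-rdf-semantics}, with $G' = \pgtordf{G}$. Because of DCA, UNA and CWA, this knowledge base has, up to renaming, a unique model $\Int_{G'}$: the closed-world interpretation whose domain is the individual names, in which each concept name and role name is interpreted exactly by the triples of $G'$. So consistency of $\{C \sqsubseteq D\}$ with the knowledge base holds iff $C^{\Int_{G'}} \subseteq D^{\Int_{G'}}$. I will state this reduction as a first lemma, taken essentially from \cite{DBLP:conf/www/Seifer0LS24}. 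The proposition then becomes a purely model-theoretic statement about $\Int_{G'}$.

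The core is a pair of simultaneous structural inductions relating the two semantics. For every node $n \in N$ and node constraint $\phi_N$,
\[
\progsevaln{\phi_N} = \literaltrue \iff \toiri{n} \in (\sprogstoshaclnc{\phi_N})^{\Int_{G'}},
\]
and likewise for every edge $e$ and edge constraint $\phi_E$. The induction runs jointly over $\phi_N$ and $\phi_E$ because $\existsright{}$, $\existsleft{}$, $\Rightarrow$ and $\Leftarrow$ alternate between the two sorts. The cases are as follows.
\begin{itemize}
\item \emph{Labels and properties.} $l_N$ and $\exists k.(=v)$ follow from the label and property triples of $\pgtordfn{n}$ and the CWA.
\item \emph{Boolean cases.} $\neg$ and $\wedge$ follow from the induction hypothesis. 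Negation needs care: the right side is complemented relative to the whole domain, but we only evaluate at points $\toiri{n}$, so this is harmless.
\item \emph{Navigation cases.} For $\existsright{\phi_E}$, I use the fact that $(\toiri{n},\toiri{e}) \in \mnte^{\Int_{G'}}$ iff $\rho(e) = (n,\cdot)$, since $\mnte$-triples arise only from $\pgtordfe{e}$. For $\Rightarrow\phi_N$, I use that the unique $\metn$-successor of $\toiri{e}$ is $\toiri{n_2}$.
\end{itemize}
Injectivity of $\toiriname$, and its disjointness across sorts, is assumed throughout.

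Targets are handled by an analogous but simpler lemma: $\toiri{n} \in (\sprogstoshaclnt{q_N})^{\Int_{G'}}$ iff $n \in \progsevalq{q_N}$, and the same for edges.

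The main obstacle is that the $\DLogics$ side ranges over \emph{all} domain elements, whereas sProGS validity quantifies only over nodes (respectively edges). I need every element of $(\sprogstoshaclnt{q_N})^{\Int_{G'}}$ to be of the form $\toiri{n}$ with $n \in N$. For label targets this holds only if the IRIs $\toiri{l_N}$ for node labels never appear as types of reified edges, and vice versa. I will take this as a disjointness assumption on $\toiriname$ for $L_N$ versus $L_E$ and $K_N$ versus $K_E$. For $q_N = n$ it is immediate.

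The same sort issue arises for $\top$ inside existentials. For example, $\exists\,\mnte.\top$ could in principle reach a non-edge element, but $\mnte$-successors are always edge IRIs, so this is fine. With these observations, $\shaclvalid{G}{s}$ unfolds to $\forall n\in\progsevalq{q_N}:\progsevaln{\phi_N}$. This is equivalent to $(\sprogstoshaclnt{q_N})^{\Int_{G'}} \subseteq (\sprogstoshaclnc{\phi_N})^{\Int_{G'}}$, which by the first lemma is $\shaclvalid{\pgtordf{G}}{\sprogstoshacl{s}}$.
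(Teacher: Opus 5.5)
Your proposal is correct and follows the same two-step structure as the paper's proof. First you match the targets of $s$ in $G$ with those of $\sprogstoshacl{s}$ in $\pgtordf{G}$ case by case; then you run a structural induction over node and edge constraints, handling the navigation cases via $\mnte$ and $\metn$. Your version is more careful in two respects the paper uses only implicitly: you explicitly reduce proof-validity to inclusion in the unique closed-world model, and you state the sort-disjointness of $\toiriname$ on labels and keys, which the paper assumes when it asserts there are \emph{no other occurrences} of $\toiri{l_N}$.
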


\begin{proofsketch}
   We prove the proposition in two steps.
   First, we show that given a \simpleprogs shape $s$, the targets of $s$ in $G$ 
   correspond to the set of targets of $\sprogstoshacl{s}$ in $\pgtordf{G}$.
   Then, we show that for any node (or edge) in $G$, it conforms to the constraint of a shape $s$ if and only if
   the mapped node in $\pgtordf{G}$ conforms to the constraint of $\sprogstoshacl{s}$.
\end{proofsketch}

\begin{proposition} [Soundness of Query Mapping]
   \label{trfm:prop:qmap}
   Given a PG $G$ and a \gcname $q$, the following holds:
   $\gceval{q}{G} = \rdftopg{\eccqeval{\sgcoretoeccq{q}}{\pgtordf{G}}}$,
   where $\rdftopgname$ is the inverse mapping of $\pgtordfname$.
   That is, evaluating the query $q$ is the same as evaluating the mapped query on the mapped graph
   and then mapping the result graph back.
\end{proposition}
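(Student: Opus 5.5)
We prove the equality by establishing a bijective correspondence between bindings, and then show that the constructed triples agree componentwise. Since the IQL translation in \Cref{trfm:fig:gciqlboth} is compositional, we proceed component by component.

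\emph{Pattern bindings.} For each pattern component $\gcpatterncomponent$ we consider the \eccqname pattern $\iqltoeccqm{\sgcoretoiqlm{\gcpatterncomponent}}$. We show that restriction to the ordinary node and edge variables is a surjection from its solutions over $\pgtordf{G}$ onto $\gcmeval{\gcpatterncomponent}{G}$, modulo the injective renaming $\toiriname$.

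For a node component $\gcoreNodeF{x}{\gcwhere}$, the pattern $\atomGenN{x}$ forces $\mu(x)=\toiri{n}$ for some $n\in N$. This holds because $\pgtordf{G}$ contains $\triple{\toiri{n}}{\rdftype}{\mnode}$ exactly for the nodes, since $\toiriname$ is injective and $\mnode$ is a meta name. Each where clause then corresponds to exactly one triple:
\begin{itemize}
\item $:l$ corresponds to $\atomC{x}{\toiri{l}}$, and this triple lies in the graph iff $l\in\lambda(n)$;
\item $.k=v$ corresponds to $\atomP{x}{\toiri{v}}{\toiri{k}}$, and this triple lies in the graph iff $v\in\sigma(n,k)$;
\item $.k$ is handled by an atom with a fresh object variable.
\end{itemize}
Edge components are treated analogously. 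There, $\atomPin{x}{e}$ and $\atomPout{e}{y}$ together with $\atomGenE{e}$ encode $\rho(\mu(e))=(\mu(x),\mu(y))$. This is where the reification of \Cref{trfm:def:graphmap} is used.

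Joins commute with the translation because shared variables are mapped to shared variables. The generic variables $\cvar{x},\rvar{x},\ivar{x}$ occur nowhere else except in filters. Hence every binding of the ordinary variables extends to the set of all values for the generic variables that are admissible under the filters, and to no others. Taking the union of $\mu(t)$ over these extensions therefore yields exactly the generic copy of all labels and properties of $\mu(x)$, minus the filtered names.

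\emph{Template side.} Fix a binding $\mu$ of the ordinary variables. We show that the union over its extensions of the instantiated \eccqname template equals $\pgtordf{\gcceval{\gctemplatecomponent}{\mu,G}}$, combined over all components. This splits into three checks.
\begin{itemize}
\item Labels: the set $(\lambda(v)\cup\{l\mid\gcset l\})\setminus\{l\mid \gcremove l\}$ is produced by the generic copy $\atomGenc{x}$ with filters $\atomCNot{x}{\toiri{l}}$ for each removed $l$, together with explicit atoms $\atomC{x}{\toiri{l}}$ for each set $l$. We must also check that the meta concepts $\mnode$ and $\medge$ are produced correctly: they come from the explicit $\atomGenN{x}$ and $\atomGenE{x}$ atoms, and the generic copy reproduces nothing extra.
\item Properties: this works similarly, using $\atomGenp{x}$ with role filters. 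The filters $\atomPNot{x}{\mnte}$ and $\atomPNot{x}{\metn}$ prevent edge structure from leaking through the generic role copy, and the edge structure is instead rebuilt by the explicit $\mnte$ and $\metn$ atoms.
\item Fresh identities: template variables that do not occur in the pattern receive fresh IRIs per $\mu$, which matches the fresh identities of \Cref{trfm:def:g:semantics:construct}.
\end{itemize}
The syntactic restriction on edge variables guarantees that endpoint nodes are present in the output, so that the result is in the image of $\pgtordfname$. This makes $\rdftopgname$ well defined on it.

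\emph{Main obstacle.} We expect the hardest step to be the property case of set clauses. \Cref{trfm:def:g:semantics:construct} overrides old values of a key $k$ whenever $\gcset k=v$ is present (through $\sigma'$), so the mapping must add a filter $\atomPNot{x}{\toiri{k}}$ for every set key as well as every removed key. We would verify this against \Cref{trfm:fig:gciqlboth}, noting that the IQL pattern $\operatorname{M_N}$ receives both $R_K$ and the set keys.

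We must also confirm two further points. First, the union of PGs, which merges labels and properties of identical identifiers, corresponds to set union of triples. Second, since $\pgtordfname$ is injective with a partial inverse, applying $\rdftopgname$ to both sides yields the stated equality.
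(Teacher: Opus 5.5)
Your route is the paper's. You build a correspondence between $\gcmeval{\gcpattern}{G}$ and the \eccqname solutions; the relation $\Omega\approx\Omega'$ of \Cref{trfm:def:equibind} plays the role of your surjective restriction. Like the paper, you establish it per component through the where clauses and push it through joins, then compare the constructed triples per component, close under union, and invert with \Cref{trfm:lemma:inverse}.

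\textbf{The gap: lifting a \gcname binding to an \eccqname solution.} Knowing that the extensions of an ordinary binding are exactly the admissible values of $\cvar{x},\rvar{x},\ivar{x}$ does not give surjectivity. Filters act on whole joined solutions. If some generic atom has no witness that survives the filters, the binding does not merely lose a few labels or properties. It drops out of $\Omega'$ altogether, and the node or edge vanishes from the output. You never show that a witness exists.

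The paper's proof does. The triple $\triple{\toiri{n}}{\rdftype}{\mnode}$ (resp.\ $\medge$) is always present and never filtered, so both generic atoms are ``trivially satisfied at least for one $\mu'$''. This is also why \Cref{trfm:def:equibind} carries the escape clauses ``or $\mu'(c_x)=\mnode$'' and ``or $\mu'(p_x)=\mnode$''. For $\atomGenp{x}$ this works only if the atom may match $\rdftype$-triples, as \texttt{?x ?\_1 ?\_2} in \Cref{trfm:lst:stwo} does.

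If $\rvar{x}$ ranges over $\RoleNames$ only, the candidates are property triples and reification triples ($\mnte$ for nodes, $\metn$ for edges), and the reification triples are filtered out. Then node \idname{2} of $\pgexa{1}$, bound to $\varname{y}$ in $\gcexa{2}$, has no witness, and $\sgcoretoeccq{\gcexa{2}}$ returns the empty graph on $\pgtordf{\pgexa{1}}$. The filters you propose for set keys would only create more such cases. So you must adopt the paper's reading, or give the generic atoms an optional-style semantics. Once you do, your label check must also account for label triples copied through $\atomGenp{x}$; the concrete query blocks those with a filter on the object variable.

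\textbf{Two further steps do not go through as written.}

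Fresh IRIs in \eccqname are minted per full valuation, generic variables included. So an ordinary binding with several admissible extensions yields several fresh nodes or edges, whereas \gcname creates one per binding. Any node carrying a non-removed label already has two values for $\cvar{x}$. Your ``fresh IRIs per $\mu$'' therefore does not match \Cref{trfm:def:g:semantics:construct} once you take the union over extensions. The paper gets past this only by explicitly assuming $\toiri{a}_{\mu(x)} = a_{\mu'(x)}$.

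On set keys, you are right that $\sigma'$ makes $\gcset k=v$ an override, and the paper's proof passes over this. Its binding relation excludes only removed keys, and it treats $\gcset k=v$ exactly like $\gcset l$. But you cannot discharge the point by appeal to \Cref{trfm:fig:gciqlboth}: $\operatorname{M_N}$ as defined in \Cref{trfm:def:iql} carries only $R_L$ and $R_K$. You have to show that the translation puts set keys into the filters, or restrict the statement accordingly.
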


\begin{proofsketch}
   We separate the \emph{matching} and \emph{constructing} parts of queries for this proof.
   To this end, we rely on the composable intermediate query language (IQL), 
   making individual node or edge patterns compose neatly with full \eccqname patterns (or templates).
   We define an equivalence relation between sets of bindings $\Omega$ 
   for queries expressed as both \gcname and \eccqname, which is not straightforward, 
   given that \gcname queries bind only nodes and edges (while labels and properties are implicit), but the
   corresponding reified labels and properties in \eccqname are explicitly bound as variables.
   We prove that both \gcname and mapped \eccqname produce equivalent bindings.
   In the second part of the proof, we show 
   that\,---\,given sets of equivalent bindings\,---\,the 
   constructed graphs are again the same, modulo the mapping relation.
\end{proofsketch}

It remains to show that shapes entailed by the axioms $\Sigma$ constructed according to \Cref{trfm:def:cwa} are guaranteed to validate output graphs of the query, modulo renaming in the appropriate name\-space.

\begin{proposition}
   \label{trfm:prop:cwa}
   Given an \eccqname $q$ and a set of $\DLogics$ shapes $\shapesin$ it holds for all RDF graphs $G$ 
   with $\shaclvalid{G}{\shapesin}$ 
   that $\operatorname{infer}(\shapesin, q) \models \ddot{s}$ implies $\shaclvalid{\eccqeval{q}{G}}{s}$.

\end{proposition}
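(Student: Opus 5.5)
The natural route is a model-construction argument, as in the soundness proof of~\cite{DBLP:conf/www/Seifer0LS24}, which we would extend to the generic patterns and filters of \eccqname. Fix an RDF graph $G$ with $\shaclvalid{G}{\shapesin}$ and let $G' = \eccqeval{q}{G}$. We build a single interpretation $\Int$ over the domain $\IndividualNames$ that satisfies three things at once. First, it is the closed-world model of $G$ on the undotted names. Second, it is the closed-world model of $G'$ on the doubly-dotted names $\ddot{A}, \ddot{p}$. Third, on the singly-dotted names it interprets $\dot{p}$ as the set of pairs of $G$ actually used by some match $\mu \in \eccqpatterneval{P,F}{G}$ for a pattern with role $p$. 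Each variable concept is interpreted as the set of bindings, $\vconcept{x}^\Int = \{\mu(x) \mid \mu \in \eccqpatterneval{P,F}{G}\}$. Each component concept $\vconcept{x}^{A}$ is interpreted as the set of $\mu(x)$ such that $\mu(\cvar{x}) = A$ for some match, and analogously $\vconcept{x}^{p}$ and $\vconcept{x}^{p,o}$ use $\mu(\rvar{x}) = p$ and $\mu(\ivar{x})$.

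The proof then reduces to showing $\Int \models \operatorname{infer}(\shapesin, q)$. Given this, if $\operatorname{infer}(\shapesin, q) \models \ddot{s}$, then $\Int \models \ddot{s}$. Since the dotted part of $\Int$ is exactly the validation model of $G'$ (DCA, UNA and CWA all hold there by construction), $\ddot{s}$ renamed back to $s$ is consistent with the validation knowledge base of $G'$, hence $\shaclvalid{G'}{s}$. That last transfer is essentially the renaming argument of the cited work, and we would invoke it directly.

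Checking $\Int \models \operatorname{infer}(\shapesin,q)$ goes axiom by axiom.
\begin{enumerate}
\item The input shapes hold because the undotted part of $\Int$ is the validation model of $G$ and $G$ is proof-valid.
\item The $\sqsubseteq$ direction of the variable-concept axioms holds because every binding satisfies its atomic patterns. The $\sqsupseteq$ direction under acyclicity of $\vcg$ is taken from the prior work. One point needs checking: generic patterns $\atomGenc{x}$ and the filters never eliminate a binding of $x$ that satisfies the other patterns. This holds because $x$ always has at least the type $\mnode$ or $\medge$ (or some property), and the syntactic restriction makes $\cvar{x}$ occur nowhere else. If this fails for isolated variables, we would restrict to the mapped queries, where such typing triples are always present.
\item Cases 5--7 hold as in the prior work, since the generic variables only add fresh variables that do not interact with the patterns over $p$.
\item The component axioms of \Cref{trfm:def:utilcc} hold by the definition of the semantics: $\mu(\cvar{x})$ ranges exactly over the concept names of $\mu(x)$ in $G$ minus the filtered ones.
\end{enumerate}

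The main obstacle is Cases 3 and 4, the closed-world axioms for $\ddot{A}$ and $\ddot{p}$. Here we must show that every triple of $G'$ with concept $A$ arises either from an explicit template atom $\atomC{u}{A}$, which places the individual in $\Concept{u}$, or from a generic atom $\atomGenc{x}$ with $\mu(\cvar{x}) = A$, which places it in $\vconcept{x}^A$. For roles we must also track the object side via $\vconcept{x}^{p,o}$, and handle the restriction to names in $\mathcal{V}$. Names outside the vocabulary have to be argued harmless: such names never occur in $s$ or $\shapesin$, so they can be left uninterpreted freely. The fresh IRIs for template-only variables also need care; we would treat them as interpreting their variable concept by the generated individuals, which is consistent because those variables occur in no pattern.
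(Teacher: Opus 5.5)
Your route is the paper's. The paper shows that the extended graph $\graphext=\graphin\cup\dot{G}_{\mathrm{med}}\cup\graphvar\cup\ddot{G}_{\mathrm{out}}$ is valid with respect to each block $\Sigma_1,\dots,\Sigma_7$ of $\operatorname{infer}(\shapesin,q)$, using one lemma per case plus an auxiliary lemma that generic patterns do not restrict bindings. It then transfers $\ddot{s}$ to $\graphout$ by the renaming property (Proposition $3_{\sts}$). Your $\Int$ is essentially the canonical model of that validation knowledge base, and interpreting the component names and the fresh individuals explicitly makes precise what the paper leaves implicit.

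As written, however, your interpretation of the components makes your item~4 false, so $\Int\not\models\operatorname{infer}(\shapesin,q)$. $\operatorname{Cc}(q)$ contains $\vconcept{x}^{A}\equiv\bot$ whenever $\atomGenc{x}\in\eccqpattern\setminus\eccqtemplate$, for example for a matched but unconstructed variable of a mapped query. Your definition instead gives $\vconcept{x}^{\mnode}=\vconcept{x}^{\Int}\neq\emptyset$ there as soon as the query has a match. Likewise $\vconcept{x}^{p}\equiv\vconcept{x}^{p,o}\equiv\bot$ is required when $\atomGenp{x}\in\eccqpattern\setminus\eccqtemplate$ or when some $\atomP{x}{y}{p}\in\eccqpattern$, although $\mu(\rvar{x})=p$ can still occur. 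The repair is to interpret each component name by the right-hand side of its defining clause. Your description coincides with it in the positive case, where $\vconcept{x}^{p,o}$ is left free and your choice works. Case~3 then goes through as you sketch, since a generic atom absent from $\eccqtemplate$, or one whose value is filtered, contributes nothing to $\graphout$.

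Case~4 must then be rechecked against these definitions, and it is not automatic. Suppose $\atomP{x}{y}{p}\in\eccqpattern$ forces $\vconcept{x}^{p}\equiv\bot$ while $\atomGenp{x}\in\eccqtemplate$ still copies all $p$-edges of the bindings of $x$. Then, for $p\in\voc(\eccqtemplate)$, the $\sqsubseteq$ half of the axiom for $\exists\ddot{p}.\top$ is violated unless those bindings happen to lie in another disjunct (e.g.\ via $\atomP{x}{y}{p}\in\eccqtemplate$). The paper's lemma for this case passes over that condition too.

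Three smaller points:
\begin{enumerate}
\item Your reason why $\atomGenp{x}$ never removes a binding does not cover a node or edge without properties. Its only outgoing role triples are the $\mnte$- resp.\ $\metn$-triples, which the filters $\atomPNot{x}{\mnte}$ resp.\ $\atomPNot{x}{\metn}$ exclude, and type triples do not help here. The paper's auxiliary lemma shares this blind spot, so the $\sqsupseteq$ half of Case~2 rests on an unproved assumption in either argument.
\item Case~2 has to range over pattern variables only, as the paper's lemma assumes. For a template-only variable the conjunction is empty, so the $\sqsupseteq$ axiom (when present) would force $\vconcept{x}\equiv\top$, contradicting your interpretation of $\vconcept{x}$ by generated individuals.
\item Your remark that names outside $\mathcal{V}$ never occur in $s$ is false in general, but you do not need it. $\Int$ interprets every name, and the last step only uses that this one interpretation is a model of $\operatorname{infer}(\shapesin,q)$.
\end{enumerate}
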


\begin{proofsketch}
   For this proof we first formalize the different conceptual graphs involved 
   (according to the name\-spaces separated with one or two dots) 
   as joined \emph{extended graphs} and then show
   for the axioms constructed by the rules of \Cref{trfm:def:cwa}
   that all extended graphs are indeed valid regarding these axioms.
   From this, we then follow (utilizing some auxiliary lemmas) that any other entailed shapes, 
   when renamed accordingly, validate all possible output graphs of the query.
   This proof builds on the proofs in~\cite{DBLP:conf/www/Seifer0LS24}.
\end{proofsketch}

Finally, we show that extending the \eccqname preserves query semantics.

\begin{proposition}
   \label{trfm:prop:semext}
   For any \eccqname $q$ and RDF graph $G$, $\eccqeval{q}{G} = \eccqeval{\operatorname{Ext}(q)}{G}$.
\end{proposition}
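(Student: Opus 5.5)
We prove the two inclusions separately, using the fact that $\operatorname{Ext}(q)$ differs from $q = \eccqformal$ only in its template: $\operatorname{Ext}(q) = H' \gets (\eccqpattern, \eccqfilter)$ with $\eccqtemplate \subseteq H'$, and every pattern in $H' \setminus \eccqtemplate$ is inserted by one of the two rules of \Cref{trfm:def:semext}. Since pattern and filters are unchanged, both queries have the same solution set $\Omega = \eccqpatterneval{\eccqpattern, \eccqfilter}{G}$ (\Cref{trfm:def:eccq:semantics}). Hence $\eccqeval{q}{G} \subseteq \eccqeval{\operatorname{Ext}(q)}{G}$ is immediate, because every triple $\mu(t)$ with $t \in \eccqtemplate$ also arises from $t \in H'$. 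One subtlety must be checked: fresh IRIs are generated per template variable not in $\operatorname{vars}(\eccqpattern)$ and per mapping $\mu$. The inserted patterns only use variables $x \in \operatorname{vars}(\eccqpattern)$ (and constants), so they introduce no new fresh variables, and the fresh IRIs for existing template variables are the same in both queries.

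For the converse inclusion, fix $\mu \in \Omega$ and an inserted pattern $t \in H' \setminus \eccqtemplate$. We show $\mu(t) \in \eccqeval{q}{G}$ by exhibiting a mapping $\mu' \in \Omega$ and a pattern $t' \in \eccqtemplate$ with $\mu'(t') = \mu(t)$.

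\emph{Case 1:} $t = \atomC{x}{A}$, inserted because $\atomGenc{x} \in \eccqpattern \cap \eccqtemplate$, $\atomC{x}{A} \in \eccqpattern$, and $(\atomCNot{x}{A}) \notin \eccqfilter$. Define $\mu' = \mu[\cvar{x} \mapsto A]$. Then:
\begin{enumerate}
  \item $\mu'$ satisfies every atomic pattern of $\eccqpattern$. By \Cref{trfm:def:eccq:syntax}, $\cvar{x}$ occurs in no atomic pattern other than $\atomGenc{x}$, and $\mu'(\atomGenc{x}) = \triple{\mu(x)}{\rdftype}{A} \in G$ because $\mu$ matches $\atomC{x}{A} \in \eccqpattern$.
  \item $\mu'$ satisfies every filter. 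The only filters that mention $\cvar{x}$ have the form $\atomCNot{x}{B}$ with $B \neq A$, since $(\atomCNot{x}{A}) \notin \eccqfilter$ by the side condition; all other filters are unaffected by the change.
  \item So $\mu' \in \Omega$, and $\mu'(\atomGenc{x}) = \mu(t)$ with $\atomGenc{x} \in \eccqtemplate$, as required.
\end{enumerate}

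\emph{Case 2:} $t = \atomP{x}{a}{p}$ is handled in the same way with $\mu' = \mu[\rvar{x} \mapsto p, \ivar{x} \mapsto a]$. This uses the uniqueness of $\rvar{x}$ and $\ivar{x}$ and the side condition $(\atomPNot{x}{p}) \notin \eccqfilter$.

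The main obstacle is the bookkeeping on the modified mappings $\mu'$. We must verify that changing the generic variables does not alter the fresh IRIs produced for $\mu'$ in a way that matters. It does not: we only need the single triple $\mu'(t')$, and $t'$ contains no fresh variables, since generic template patterns must also occur in $\eccqpattern$. We must also confirm that joins are unaffected, which holds because the generic variables are local to their single pattern (apart from filters). Combining both cases with the first inclusion yields $\eccqeval{q}{G} = \eccqeval{\operatorname{Ext}(q)}{G}$.
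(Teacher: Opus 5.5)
Your proof is correct and rests on the same observation as the paper's: every triple produced by an inserted pattern $\atomC{x}{A}$ (resp.\ $\atomP{x}{a}{p}$) is already produced by the generic template pattern $\atomGenc{x}$ (resp.\ $\atomGenp{x}$) under the solution that rebinds $\cvar{x} \mapsto A$ (resp.\ $\rvar{x} \mapsto p$, $\ivar{x} \mapsto a$), and that solution survives precisely because the corresponding filter is absent. The paper argues this only for a w.l.o.g.\ minimal query; your explicit construction of $\mu'$, with the checks on joins, filters, and fresh IRIs, is the same argument made more rigorous rather than a different route.
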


\begin{proofsketch}
   This proposition follows directly from the specified semantics of \eccqname
   since all concept or role names explicitly added to the template of $q$ 
   by rules for $\operatorname{Ext}(q)$ (\Cref{trfm:def:semext})
   are also constructed by the respective generic patterns $\atomGenc{x}$ and $\atomGenp{x}$, 
   except those that occur in filter conditions,
   which are excluded in the definition of $\operatorname{Ext}(q)$.
\end{proofsketch}

With these propositions, we can finally address the main claim of this paper.
We rewrite the theorem initially introduced as \Cref{trfm:maintheorem} in \Cref{trfm:maintheoremadapted},
replacing the function \emph{test} with concrete entailment and including the family of mappings.
The proof follows directly from the propositions discussed in this section.

\begin{theorem}(Reformulation of \Cref{trfm:maintheorem})
   \label{trfm:maintheoremadapted}
   Given a finite set of \simpleprogs shapes $\shapesin$, a \gcname $q = \gcformal$, and a 
   \simpleprogs shape $s$,
   then $\shaclvalid{\sparqleval{q}{\graphin}}{s}$ holds for every graph $\graphin$ 
   where $\shaclvalid{\graphin}{\shapesin}$ if
   $\operatorname{infer}(\sprogstoshacl{\shapesin}, \sgcoretoeccq{q}) \vdash \sprogstoshacl{\ddot{s}}$.
\end{theorem}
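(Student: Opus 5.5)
We prove \Cref{trfm:maintheoremadapted} by chaining the four propositions of this section along the diagram of \Cref{trfm:fig:overview}. Fix $\shapesin$, $q$ and $s$ with $\operatorname{infer}(\sprogstoshacl{\shapesin}, \sgcoretoeccq{q}) \vdash \sprogstoshacl{\ddot{s}}$. Fix an arbitrary PG $\graphin$ with $\shaclvalid{\graphin}{\shapesin}$. We must show $\shaclvalid{\gceval{q}{\graphin}}{s}$.

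\textbf{Step 1 (go down to RDF).} By \Cref{trfm:prop:smap}, applied to each $s_i \in \shapesin$, validity of $\graphin$ gives $\shaclvalid{\pgtordf{\graphin}}{\sprogstoshacl{\shapesin}}$. So the RDF graph $G' = \pgtordf{\graphin}$ satisfies the precondition of \Cref{trfm:prop:cwa} for the input shapes $\sprogstoshacl{\shapesin}$.

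\textbf{Step 2 (infer on RDF).} Let $q' = \sgcoretoeccq{q}$. Since $\vdash$ is sound for $\DLogics$, derivability gives entailment: $\operatorname{infer}(\sprogstoshacl{\shapesin}, q') \models \ddot{s'}$, where $s' = \sprogstoshacl{s}$. Here we must check that renaming commutes with the mapping, i.e.\ $\sprogstoshacl{\ddot{s}}$ is $\ddot{s'}$. This holds because $\sprogstoshaclname$ is compositional and the renaming acts only on concept and role names. One subtlety is the meta names $\mnte$, $\metn$, $\mnode$, $\medge$. These occur in templates of mapped queries, so they are renamed like any other name; the reading of $\ddot{s}$ must be taken to rename them too.

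If the algorithm actually runs \emph{infer} on $\operatorname{Ext}(q')$ rather than $q'$, as in the example of \Cref{trfm:sub:rax}, we handle it as follows. Apply \Cref{trfm:prop:cwa} to $\operatorname{Ext}(q')$, then use \Cref{trfm:prop:semext} to replace $\eccqeval{\operatorname{Ext}(q')}{G'}$ by $\eccqeval{q'}{G'}$. Either way, \Cref{trfm:prop:cwa} yields $\shaclvalid{\eccqeval{q'}{G'}}{s'}$.

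\textbf{Step 3 (come back up).} Let $G'_o = \eccqeval{q'}{G'}$. By \Cref{trfm:prop:qmap}, $\gceval{q}{\graphin} = \rdftopg{G'_o}$. Because $\pgtordfname$ is invertible, $\pgtordf{\gceval{q}{\graphin}} = G'_o$. This needs $G'_o$ to lie in the image of $\pgtordfname$, which is implicit in \Cref{trfm:prop:qmap}. Now apply the ``if'' direction of \Cref{trfm:prop:smap} to the PG $\gceval{q}{\graphin}$ and the shape $s$. From $\shaclvalid{\pgtordf{\gceval{q}{\graphin}}}{s'}$ we conclude $\shaclvalid{\gceval{q}{\graphin}}{s}$.

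\textbf{Main obstacle.} The delicate point is the bookkeeping in Steps 2--3, not any deep argument. First, the syntactic identification $\sprogstoshacl{\ddot{s}} = \ddot{\sprogstoshacl{s}}$ must hold. Second, the vocabulary $\mathcal{V}$ used in \Cref{trfm:def:utilcc} must cover every name in $s'$, which may require adding $\voc(s')$ to $\mathcal{V}$ or arguing that names outside it are handled uniformly. Third, \Cref{trfm:prop:qmap} must be strong enough to give equality of RDF graphs and not merely equality after applying $\rdftopgname$.

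I would first verify the round-trip claim $\pgtordf{\rdftopg{G'_o}} = G'_o$ directly from the form of the mapped templates. Every constructed node carries $\mnode$ or $\medge$, and every edge gets both its $\mnte$ and $\metn$ triple. This is exactly what the generic patterns in \Cref{trfm:fig:gciqlboth} produce.
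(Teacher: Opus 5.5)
Your proposal is correct and takes the same route as the paper, whose proof of this theorem is just ``follows directly from the propositions'': you chain \Cref{trfm:prop:smap} (down to RDF), \Cref{trfm:prop:cwa} (optionally with \Cref{trfm:prop:semext}) and \Cref{trfm:prop:qmap} plus \Cref{trfm:prop:smap} again (back up), exactly as intended. Your Step 3 worry needs no separate round-trip argument: the appendix's \Cref{trfm:lemma:partialsound} already states the RDF-level equality $\pgtordf{\sparqleval{q}{G}} = \sparqleval{\sgcoretoeccq{q}}{\pgtordf{G}}$, not merely equality after applying $\rdftopgname$.
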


\begin{proofsketch}
  This follows directly from the propositions introduced in this section. 
\end{proofsketch}

\section{Implementation and Evaluation}
\label{trfm:app:impl}

We implemented the algorithm described in this paper, 
including both the mapping and inference components, in Scala.
To this end, we built on our previous implementation of~\cite{DBLP:conf/www/Seifer0LS24} 
(see \cite{darus-3977_2024}), 
which is available under a free software license.
Our fork of this implementation is available on GitHub\footnote{\url{https://github.com/softlang/s2s}}~\cite{DARUSV2} as well.

We implemented parsers, as well as our formally specified mappings, 
for both G-CORE to SPARQL \construct and ProGS to SHACL;
modified the inference algorithm of the implementation~\cite{darus-3977_2024} with our extended variant as specified,
constructing different sets of description logic axioms that are then passed 
to the HermiT~\cite{hermit} reasoner for inference;
implemented unit and end-to-end test cases that cover G-CORE and the extended SPARQL features;
and validated the algorithm based on query execution over randomly generated graphs.

\subsection{Experimental Setup and Results}

To validate the correctness of our implementation and gather empirical evidence, in addition
to the proofs provided by this work, on the correctness of our method, we generated random
problem instances, consisting of a set of input shapes and queries.
We generated random queries first and then sampled shapes based on their vocabularies, 
extended with additional random components that do not occur in queries.
For each sample input, we applied our algorithm to obtain a set of output shapes; to this end,
we generated a set of candidate shapes and checked each for entailment.
We only kept samples that produced at least one valid output shape.
We covered four different scenarios labelled corresponding to the nature of the shapes involved. 
While \emph{small} and \emph{large} mainly refer to the number of shapes,
the \emph{wide} sample includes conjunction and disjunction, whereas the \emph{deep}
sample includes more deeply nested shapes.

\Cref{trfm:fig:generators} shows the distribution of a few key properties across 
the experiments we performed, including the basic structure of queries and shapes, 
as well as the candidate shapes that were considered.
For additional details, we refer to our extended version or the documentation
of the implementation itself.
The upper half of the tables describes queries, while the lower half corresponds to shapes.
For both \emph{wide} and \emph{large} we report only data on shapes, since the queries
are mostly identical to the \emph{deep} scenario.
Regarding queries, we indicate the number of basic node or edge atomic clauses in the pattern
and template, the number of set, remove, or where clauses, and the distinct number of labels
and keys.
Regarding shapes, we include the number of sampled input shapes, their structure (number of basic
components and clauses in conjunction or disjunction), the number of candidates over the
vocabulary tested as potential output shapes, and the inferred set of output shapes,
\ie, the subset of candidates that is entailed by the constructed knowledge base.
Our scenarios were designed to cover various interactions of query structures
and shapes and facilitate our validation approach (random sample graphs), 
not necessarily to correspond to real-world data, where a baseline is also not available.

\begin{figure}[t]
    \centering
    \begin{subfigure}[t]{0.48\textwidth}
        \centering
        \begin{tabular}{lrr}
            \toprule
            Type & Avg. & Med. \\
            \midrule
            MATCH & 1.00 & 1.00 \\
            CONSTRUCT & 1.00 & 1.00 \\
            WHEN Clauses & 3.15 & 3.00 \\
            SET Clauses & 1.59 & 2.00 \\
            REMOVE Clauses & 0.99 & 1.00 \\
            Distinct Labels & 3.67 & 4.00 \\
            Distinct Keys & 1.31 & 1.00 \\
            \midrule
            Input Shapes & 1.00 & 1.00 \\
            Components & 2.16 & 2.00 \\
            Clauses & 1.00 & 1.00 \\
            Candidates & 256.24 & 240.00 \\
            Output Shapes & 7.54 & 2.00 \\
            \bottomrule
        \end{tabular}
        \caption{Sample \emph{small} (25,000)}
        \label{fig:gen_small}
    \end{subfigure}
    \hfill
    \begin{subfigure}[t]{0.48\textwidth}
        \centering
        \begin{tabular}{lrr}
            \toprule
            Type & Avg. & Med. \\
            \midrule
            MATCH & 1.21 & 1.00 \\
            CONSTRUCT & 2.21 & 2.00 \\
            WHEN Clauses & 3.20 & 3.00 \\
            SET Clauses & 1.57 & 2.00 \\
            REMOVE Clauses & 0.99 & 1.00 \\
            Distinct Labels & 3.64 & 4.00 \\
            Distinct Keys & 1.33 & 1.00 \\
            \midrule
            Input Shapes & 2.63 & 3.00 \\
            Components & 2.96 & 3.00 \\
            Clauses & 1.00 & 1.00 \\
            Candidates & 408.25 & 370.00 \\
            Output Shapes & 7.53 & 2.00 \\
            \bottomrule
        \end{tabular}
        \caption{Sample \emph{deep} (25,000)}
        \label{fig:gen_deep}
    \end{subfigure}
    \vspace{1em}
    \begin{subfigure}[t]{0.48\textwidth}
        \centering
        \begin{tabular}{lrr}
            \toprule
            Type & Avg. & Med. \\
            \midrule
            Input Shapes & 2.61 & 3.00 \\
            Components & 4.95 & 5.00 \\
            Clauses & 1.86 & 2.00 \\
            Candidates & 510.25 & 468.00 \\
            Output Shapes & 9.85 & 2.00 \\
            \bottomrule
        \end{tabular}
        \caption{Sample \emph{wide} (25,000)}
        \label{fig:gen_wide}
    \end{subfigure}
    \hfill
    \begin{subfigure}[t]{0.48\textwidth}
        \centering
        \begin{tabular}{lrr}
            \toprule
            Type & Avg. & Med. \\
            \midrule
            Input Shapes & 5.07 & 5.00 \\
            Components & 2.16 & 2.00 \\
            Clauses & 1.00 & 1.00 \\
            Candidates & 610.32 & 576.00 \\
            Output Shapes & 11.18 & 2.00 \\
            \bottomrule
        \end{tabular}
        \caption{Sample \emph{large} (25,000)}
        \label{fig:gen_large}
    \end{subfigure}
    \caption{Overview of statistical measures across different generator configurations.}
    \label{trfm:fig:generators}
\end{figure}

Our validation framework operates fully on the mapped SHACL and SPARQL level, 
taking the input SHACL shapes mapped from ProGS as the basis for 
generating a random RDF graph valid regarding these shapes.
The generation approach is rather brute force, to not bias the sample graphs with
additional information from shapes or queries, taking only information from the vocabulary
of the problem instance while removing any parts that violate input shapes.
We next applied the (mapped-from \gcorename) SPARQL \construct query to this sample graph
to obtain the corresponding output graph.
Finally, we validated the output graph with the inferred output shapes.
For each sample, we report either the class of a generation-related issue (leading to vacuous satisfaction
of output shapes), output graph validation failure (shape violations), or success (non-vacuously valid graphs).

In $100,000$ samples we observed \emph{no} validation failures, providing evidence towards
the correctness of our method and implementation.
A total of $37,959$ samples were full successes, 
meaning that the output graph was not only valid regarding the output shapes but also included
all their targets.
In $12,858$ cases, the output graph only included a subset of the targets present in output shapes,
while in $266$ cases, the input graph included only a subset of all input shape targets;
we consider these partial successes.
The remainder ($48,917$ samples) were other structural issues, where in most cases result graphs 
were either entirely empty ($26,591$) or did not include any of the required target nodes ($15,560$),
while in fewer cases we were unable to produce a suitable input graph ($6,622$), 
or the query exceeded the allowed execution time of $60$ seconds ($144$).
These results did not vary significantly between the different classes.

Note that in all of these cases, modulo the few timeouts where we do not know the result, 
the output shapes were still vacuously satisfied.
Since our validation attempts each sample generation up to $100$ times, we hypothesize that the reason 
for these cases mostly lies in the structure of the sample itself: 
in these cases, graphs conforming to input shapes
are unlikely to be matched by the query, since both do not align well.
Again, we decided against optimizing for these scenarios to avoid overfitting generated
samples to \emph{preferred} cases, which could bias results.

We validated the implementation of the 
validation itself with $1,000$ samples over the same four classes, where output shapes ($7$ per sample) 
were drawn randomly from the set of candidate shapes instead of utilizing the inference procedure; here, 
we obtain $76.7\%$ validation failures and only $0.2\%$ complete successes, 
indicating a very high probability for detecting failures.
Our complete results, including additional metadata for each sample, results for the individual classes of samples, 
all details regarding the structure and generation of samples, and the tools to replicate these and similar
experiments, are provided with our implementation.

\subsection{Runtime Performance}

While the original project, on which our implementation is based, includes tooling to demonstrate 
feasibility in terms of runtime performance (\cf~\cite{DBLP:conf/www/Seifer0LS24}), the results would be
difficult to compare between the different classes of queries and shapes used in our implementation.
Additionally, we lack datasets for realistic properties of \gcorename or ProGS queries and shapes.
Instead, we therefore report only a preliminary indication of the feasibility of runtime performance,
within the same order of magnitude as the prior results,
by measuring the inference time (testing all shape candidates reported in~\Cref{trfm:fig:generators}) 
for all $100,000$ samples generated for the correctness validation above.
Here, we observe average (median) inference times per sample of 
$13.94$ms ($11.00$ms) for \emph{small},
$34.81$ms ($25.00$ms) for \emph{deep},
$42.23$ms ($29.50$ms) for \emph{wide}, and
$51.68$ms ($36.00$ms) for \emph{large} samples on commodity hardware (Ryzen 7 9700X),
without any dedicated Scala optimization.

\section{Related Work}
\label{trfm:sec:related}

Mappings between query languages and related (graph) data models have been considered for various languages 
(\cf~\cite{DBLP:journals/corr/abs-1910-03118}), including mappings among graph query languages (\eg,~\cite{DBLP:conf/grades/ThakkarPLA18}) as well as
between graph query languages and SQL (\eg,~\cite{DBLP:journals/ws/Rodriguez-MuroR15,DBLP:journals/dke/ChebotkoLF09,DBLP:conf/grades/SteerALCVV17}).
Closely related to our mapping problem, SPARQL \select queries have been mapped to Cypher~\cite{DBLP:conf/dexa/ZhaoGSHW23,DBLP:conf/comad/AgrawalSM22}.
However, in this direction, the divide between RDF and PG is less challenging, given that PG can directly represent RDF graphs.
Reification of PG as RDF (\eg,~\cite{DBLP:conf/semweb/NguyenYTLBB19,DBLP:journals/access/TomaszukAT20}) 
has been discussed under consideration of querying performance~\cite{DBLP:conf/grades/KhayatbashiFH22}, 
though only under manually defined, ad hoc reformulation of Cypher queries in SPARQL \select.

At the data model level, the lack of interoperability between RDF and PG is a known issue~\cite{DBLP:conf/amw/AnglesTT19}, 
and mappings from RDF to PG have been considered as well, \eg,~\cite{DBLP:journals/access/AnglesTT20}.
In~\cite{DBLP:journals/pacmmod/RabbaniLBH24}, the authors consider both RDF graphs and SHACL shapes, transforming them to PG and PG-Schema, respectively.
RDF-star~\cite{DBLP:conf/amw/Hartig17} is an extension to RDF for the representation of metadata as nested triples, allowing for similar features as PG,
thus working towards bridging the formal gap between these data models.
Indeed, both PG mappings to RDF-star~\cite{DBLP:conf/i-semantics/Hartig19,DBLP:conf/grades/KhayatbashiFH22}, 
and the reverse~\cite{DBLP:conf/semweb/AbuodaDKH22} have been defined.
In particular, this also includes execution of SPARQL-star queries over PG~\cite{DBLP:conf/i-semantics/Hartig19}.

In contrast, in our work, we leverage reification in pure RDF for the sake of utilizing DL via SHACL,
and we are concerned with bridging the semantic differences between graph construction in \gcorename,
with its implicit labels and properties,
and SPARQL \construct, with its explicit triple construction,
which are not addressed in previous work.

Inference of shapes or schemas can be approached essentially in two ways: 

\emph{Option 1}~---~inference from instance data, \eg, concrete graphs or tables; 

\emph{Option 2}~---~inference from queries and other functions constructing data.

\noindent
The first option has been considered for RDF using various algorithms based on logical, statistical, or ML-driven inference
for constructing SHACL or ShEx~\cite{DBLP:conf/i-semantics/PrudhommeauxGS14} shapes from sample graph instances~\cite{DBLP:conf/semweb/SpahiuMP18,DBLP:journals/kbs/Fernandez-Alvarez22,DBLP:journals/pvldb/RabbaniLH23,DBLP:conf/sac/Mihindukulasooriya18,DBLP:journals/semweb/OmranTMH23,DBLP:conf/icdt/GrozLSW22,DBLP:conf/semweb/BonevaDFG19}.
Similar approaches have been considered recently for PG and their schema languages~\cite{DBLP:conf/edbt/LbathBH21,DBLP:conf/edbt/BonifatiDM22,DBLP:conf/icdt/GrozLSW22}.

In our work, we follow the second option and do not consider any instance data
but instead infer shapes valid in the context of any possible graph that queries might operate on in the future. 
However, inference from concrete graphs may complement our approach when initial input shapes for queries should be determined. 
Our approach leverages a prior method~\cite{DBLP:conf/www/Seifer0LS24}, as discussed throughout the paper, but we cover PG and G-CORE through mappings; 
in due course, we also extend the subset of SPARQL considered in~\cite{DBLP:conf/www/Seifer0LS24} with generic copy patterns.

Related work on the inference of SHACL shapes from direct mappings~\cite{DBLP:conf/semweb/ThapaG21} 
and RML rules~\cite{DBLP:conf/kcap/DelvaSOALD21} also follows the second option, 
but these approaches lack the arbitrary construction through query patterns we support, 
in addition to explicit input constraints.
Views over relational databases can be considered equivalent mechanisms to composable graph queries;
SQL schemas have been inferred for such views in~\cite{DBLP:journals/tods/KlugP82,DBLP:journals/pvldb/FanMHLW08,DBLP:conf/sigmod/Stonebraker75,DBLP:journals/tods/JacobsAK82}
though suffer from infeasibility of first-order formulas (as compared to DL) or restricted semantics of 
functional or join dependencies, which are orthogonal to constructs supported by ProGS in our case.
\cite{DBLP:conf/pods/BonevaGHMS23} solves a similar problem, focusing on basic cardinality constraints 
(as graph schemas) and formally defined graph transformations (conjunctive path queries) 
over a concrete query language.
They, too, rely on description logics as an encoding mechanism.

\section{Conclusion}
\label{trfm:sec:conclusion}

In this paper, we have presented an algorithm for inferring ProGS shapes 
validating all possible property graphs constructed by queries specified in a subset of G-CORE.
Our algorithm encodes constraints inferred from the underlying query, 
as well as any known shapes that hold on the inputs of the query.
Internally, we rely on the description logic $\DLogics$ 
by constructing a knowledge base that entails a sound set of shapes 
validating all output graphs of this query.

Since this DL encoding requires a reification step 
for first-class edges with labels and properties, 
we choose a clean, intermediate abstraction layer between property graphs and $\DLogics$, 
namely the SPARQL \construct query language and SHACL shapes, 
over property graphs reified in RDF.
This also allows us to reuse an underlying algorithm defined for these languages, 
which we must extend to support a larger subset of SPARQL, 
including the features required by our mapping.
We prove the soundness of this shape inference approach 
and the semantic equivalence of the shapes and queries constructed by our mappings.

As future work, a larger, non-conjunctive subset of G-CORE could be supported by extending 
both the mapping and the capabilities of the underlying inference approach on SPARQL queries,
supporting operators such as UNION in SPARQL.
Such extensions may approach a limit when considering additional generic 
or type-level query constructs.
For example, while we support a specific set of generic patterns in SPARQL, 
arbitrary constraints on type-level variables 
(\ie variables that bind concepts) cannot be easily encoded in DL.

The query mapping itself also poses interesting questions:
While in our work, we do not intend to execute the SPARQL \construct queries, 
the mappings could inform implementations of G-CORE over SPARQL triple stores.
To this end, the RDF reification would have to be investigated empirically 
regarding runtime and space performance and possibly adapted, 
given that our mapping favors a formal ease of use over efficiency concerns.

\bibliography{paper}

\newpage
\appendix

\section[Appendix A -- Example]{Concrete Example}
\label{trfm:app:concret}

\Cref{trfm:lst:gone} and \Cref{trfm:lst:stwo} show how the example from \Cref{trfm:fig:ex:querymapping}
could be expressed in concrete G-CORE and SPARQL syntax, respectively.
We also present the full experimental data from our empirical validation (\Cref{trfm:app:impl})
in 
\Cref{trfm:tab:datafullone},
\Cref{trfm:tab:datafulltwo},
\Cref{trfm:tab:datafullthree}, and
\Cref{trfm:tab:datafullfour}.

\begin{figure}[H]
\begin{lstlisting}[
    caption={The G-CORE query from \Cref{trfm:fig:ex:querymapping} expressed using concrete G-CORE syntax.
    In this case, all constraints, as well as set and remove clauses are expressed in separate blocks.
    Alternatively, in the MATCH clause, \eg, \texttt{(y)} could also be expresses as \texttt{(y:Person)},
    resembling more closely the abstract syntax we introduce in \Cref{trfm:def:g:syntax}.},
    label=trfm:lst:gone,language=gcore,frame=tlrb]
CONSTRUCT (x)-[e]->(y)
  SET y:POI
  REMOVE y:Person
MATCH (x)-[e]->(y)
WHERE x.name = "Smith"
  AND y:Person
  AND e:obs
\end{lstlisting}
\end{figure}

\begin{figure}[H]
\begin{lstlisting}[
    caption={The SPARQL query from \Cref{trfm:fig:ex:querymapping} expressed using concrete SPARQL syntax.
    \emph{Fresh} (unique) variables introduced for generic copying operations are prefixed with underscores.
    Note the use of filter expressions on these variables.},
    label=trfm:lst:stwo,language=sparql,frame=tlrb]{sparql}
CONSTRUCT {
    ?x a m:node . ?x ?_1 ?_2 .
    ?x m:nte ?e . ?e a m:edge . ?e ?_3 ?_4 . 
    ?e m:etn ?y . ?y a m:node . ?y ?_5 ?_6 . ?y a ln:POI
} WHERE {
    ?x a m:node . ?x ?_1 ?_2 . ?x kn:name v:Smith . 
    ?x m:nte ?e . ?e a m:edge . ?e ?_3 ?_4 . ?e a le:obs . 
    ?e m:etn ?y . ?y a m:node . ?y ?_5 ?_6 . ?y a ln:Person .
    FILTER ( ?_1 != m:nte ) .
    FILTER ( ?_3 != m:etn ) .
    FILTER ( ?_5 != m:nte ) .
    FILTER ( ?_6 != ln:Person )
}
\end{lstlisting}
\end{figure}

\begin{table}[htbp]
\centering
\caption{gen\_small (25000 samples) --- Time: 13.94s (Median: 11.00s)}
\label{trfm:tab:datafullone}
\begin{tabular}{lrrrr}
\toprule
Type & Min & Max & Average & Median \\
\midrule
Node Variables & 1.00 & 4.00 & 2.27 & 2.00 \\
Edge Variables & 0.00 & 2.00 & 1.50 & 2.00 \\
Atoms MATCH & 1.00 & 1.00 & 1.00 & 1.00 \\
Atoms CONSTRUCT & 1.00 & 1.00 & 1.00 & 1.00 \\
WHEN Clauses & 0.00 & 5.00 & 3.15 & 3.00 \\
SET Clauses & 0.00 & 3.00 & 1.59 & 2.00 \\
REMOVE Clauses & 0.00 & 2.00 & 0.99 & 1.00 \\
Distinct Labels & 0.00 & 9.00 & 3.67 & 4.00 \\
Distinct Keys & 0.00 & 5.00 & 1.31 & 1.00 \\
Input Shapes & 1.00 & 1.00 & 1.00 & 1.00 \\
Output Shapes & 1.00 & 186.00 & 7.54 & 2.00 \\
Cand. Shapes & 20.00 & 862.00 & 256.24 & 240.00 \\
Shape Comp. & 1.00 & 3.00 & 2.16 & 2.00 \\
Shape Clauses & 1.00 & 1.00 & 1.00 & 1.00 \\
Shape Negation & 0.00 & 1.00 & 0.07 & 0.00 \\
\bottomrule
\end{tabular}
\end{table}

\begin{table}[htbp]
\centering
\caption{gen\_deep (25000 samples) --- Time: 34.81s (Median: 25.00s)}
\label{trfm:tab:datafulltwo}
\begin{tabular}{lrrrr}
\toprule
Type & Min & Max & Average & Median \\
\midrule
Node Variables & 1.00 & 5.00 & 3.52 & 4.00 \\
Edge Variables & 0.00 & 6.00 & 2.54 & 2.00 \\
Atoms MATCH & 1.00 & 3.00 & 1.21 & 1.00 \\
Atoms CONSTRUCT & 1.00 & 3.00 & 2.21 & 2.00 \\
WHEN Clauses & 0.00 & 5.00 & 3.20 & 3.00 \\
SET Clauses & 0.00 & 3.00 & 1.57 & 2.00 \\
REMOVE Clauses & 0.00 & 2.00 & 0.99 & 1.00 \\
Distinct Labels & 0.00 & 9.00 & 3.64 & 4.00 \\
Distinct Keys & 0.00 & 5.00 & 1.33 & 1.00 \\
Input Shapes & 1.00 & 4.00 & 2.63 & 3.00 \\
Output Shapes & 1.00 & 262.00 & 7.53 & 2.00 \\
Cand. Shapes & 20.00 & 1740.00 & 408.25 & 370.00 \\
Shape Comp. & 1.00 & 4.00 & 2.96 & 3.00 \\
Shape Clauses & 1.00 & 1.00 & 1.00 & 1.00 \\
Shape Negation & 0.00 & 4.00 & 0.28 & 0.00 \\
\bottomrule
\end{tabular}
\end{table}

\begin{table}[htbp]
\centering
\caption{gen\_wide (25000 samples) --- Time: 42.23s (Median: 29.50s)}
\label{trfm:tab:datafullthree}
\begin{tabular}{lrrrr}
\toprule
Type & Min & Max & Average & Median \\
\midrule
Node Variables & 1.00 & 5.00 & 3.52 & 4.00 \\
Edge Variables & 0.00 & 6.00 & 2.53 & 2.00 \\
Atoms MATCH & 1.00 & 3.00 & 1.21 & 1.00 \\
Atoms CONSTRUCT & 1.00 & 3.00 & 2.20 & 2.00 \\
WHEN Clauses & 0.00 & 5.00 & 3.18 & 3.00 \\
SET Clauses & 0.00 & 3.00 & 1.56 & 2.00 \\
REMOVE Clauses & 0.00 & 2.00 & 0.98 & 1.00 \\
Distinct Labels & 0.00 & 9.00 & 3.61 & 4.00 \\
Distinct Keys & 0.00 & 5.00 & 1.33 & 1.00 \\
Input Shapes & 1.00 & 4.00 & 2.61 & 3.00 \\
Output Shapes & 1.00 & 256.00 & 9.85 & 2.00 \\
Cand. Shapes & 26.00 & 1884.00 & 510.25 & 468.00 \\
Shape Comp. & 1.00 & 7.00 & 4.95 & 5.00 \\
Shape Clauses & 1.00 & 2.00 & 1.86 & 2.00 \\
Shape Negation & 0.00 & 4.00 & 0.39 & 0.00 \\
\bottomrule
\end{tabular}
\end{table}

\begin{table}[htbp]
\centering
\caption{gen\_large (25000 samples) --- Time: 51.68s (Median: 36.00s)}
\label{trfm:tab:datafullfour}
\begin{tabular}{lrrrr}
\toprule
Type & Min & Max & Average & Median \\
\midrule
Node Variables & 1.00 & 5.00 & 3.54 & 4.00 \\
Edge Variables & 0.00 & 6.00 & 2.55 & 3.00 \\
Atoms MATCH & 1.00 & 3.00 & 1.22 & 1.00 \\
Atoms CONSTRUCT & 1.00 & 3.00 & 2.21 & 2.00 \\
WHEN Clauses & 0.00 & 5.00 & 3.16 & 3.00 \\
SET Clauses & 0.00 & 3.00 & 1.54 & 2.00 \\
REMOVE Clauses & 0.00 & 2.00 & 0.98 & 1.00 \\
Distinct Labels & 0.00 & 9.00 & 3.59 & 4.00 \\
Distinct Keys & 0.00 & 5.00 & 1.32 & 1.00 \\
Input Shapes & 3.00 & 7.00 & 5.07 & 5.00 \\
Output Shapes & 1.00 & 258.00 & 11.18 & 2.00 \\
Cand. Shapes & 56.00 & 2502.00 & 610.32 & 576.00 \\
Shape Comp. & 1.00 & 3.00 & 2.16 & 2.00 \\
Shape Clauses & 1.00 & 1.00 & 1.00 & 1.00 \\
Shape Negation & 0.00 & 4.00 & 0.32 & 0.00 \\
\bottomrule
\end{tabular}
\vspace{-0.6cm}
\end{table}

\section[Appendix B -- Proofs]{Proofs}
\label{trfm:app:proofs}

This section includes proofs for all propositions in the main paper, 
starting with some preliminary definitions used throughout these proofs.

\subsection{Preliminaries}
\label{trfm:proof:prelim}

We first summarize the validation semantics of $\DLogics$-based SHACL shapes we introduced 
in~\cite{DBLP:conf/www/Seifer0LS24} (based on work by Bogaerts \etal~\cite{DBLP:conf/lpnmr/BogaertsJB22}), 
including also the semantics of $\DLogics$, for the sake of completeness.

\begin{definition}[Summary of $\DLogics$ Semantics (\cf~\cite{DBLP:conf/dlog/2003handbook})]
  An $\DLogics$ \emph{knowledge base} $\mathcal{K}$ is a pair $(\TBox,\ABox)$,
  where $\TBox$ is a finite set of axioms and $\ABox$ is a finite set of assertions.
  In a slight abuse of notation, given an ABox $\ABox$, we write $\ABox$ to refer to the knowledge base 
  $(\emptyset, \ABox)$, and given a TBox $\TBox$ we write $\TBox$ to refer to $(\TBox, \emptyset)$.
  An \emph{interpretation} $\Int$ is a pair $(\Delta^\Int, \cdot^\Int)$ consisting of a set 
  $\Delta^\Int$, called the \emph{domain}, and
  a function $\cdot^\Int$ such that we have
  for each individual name $a \in \IndividualNames$, an element $a^\Int \in \Delta^\Int$;
  for each concept name $A \in \ConceptNames$, a subset $A^\Int \subseteq \Delta^\Int$; and
  for each role name $p \in \RoleNames$, a relation $p^\Int \subseteq \Delta^\Int \times \Delta^\Int$.
  The function $\cdot^\Int$ is extended to concept descriptions as follows: 

  \begin{equation*}
    \begin{aligned}[t]
      \bot^\Int &= \emptyset, \\
      \top^\Int &= \Delta^\Int,\\
      \{a\}^\Int &= \{a^\Int\},\\
      (C \sqcap D)^\Int &= C^\Int \cap D^\Int,\\
      (C \sqcup D)^\Int &= C^\Int \cup D^\Int,\\
      (\neg C)^\Int &= \top^\Int \setminus C^\Int\\
      (\exists p.C)^\Int
      &=
        \{ d \in \Delta^\Int \mid (d,e) \in p^\Int\ \text{with}\ e \in C^\Int  \},\\ 
      (\exists p^-.C)^\Int
      &=
        \{ d \in \Delta^\Int \mid (e,d) \in p^\Int\ \text{with}\ e \in C^\Int\},\\
      (\forall p.C)^\Int
      &=
        \{ d \in \Delta^\Int \mid \text{for all } e \in \Delta^\Int,
        \\ & \qquad \qquad \qquad \text{if } (d,e) \in p^\Int \text{ then } e \in C^\Int \},\\
      (\forall p^-.C)^\Int
      &=
        \{ d \in \Delta^\Int \mid \text{for all } e \in \Delta^\Int,
        \\ & \qquad \qquad \qquad \text{if } (e,d) \in p^\Int \text{ then } e \in C^\Int \}.
    \end{aligned}
  \end{equation*}
   
  An interpretation $\Int$ is a \emph{model} of a knowledge base $\KB = (\TBox,\ABox)$ 
  if and only if $C^\Int \subseteq D^\Int$ for every axiom $C \sqsubseteq D$ in $\TBox$; 
  $p^\Int \subseteq r^\Int$ for every axiom $p \sqsubseteq r$ in $\TBox$; 
  $a^\Int \in C^\Int$ for every assertion $\atomC{a}{C}$ in $\ABox$;
  and $(a^\Int, b^\Int) \in p^\Int$ for every assertion $\atomP{a}{b}{p}$ in $\ABox$. 
  Given two knowledge bases, $\KB_1$ and $\KB_2$, $\KB_1$ \emph{entails} $\KB_2$, 
  denoted $\KB_1 \models \KB_2$, if and only if every model $\Int$ of $\KB_1$ is also a model of $\KB_2$.
\end{definition}

For the validation semantics, we interpret RDF graphs as $\DLogics$ A-Boxes $G$, and define SHACL validation based
on the validation knowledge base $(\TBox_{G}, G)$, relying on \Cref{trfm:def:validation-rdf-semantics}.
We say that a graph $G$ is \emph{proof-valid} regarding a set of $\DLogics$ axioms (\eg, a set of shapes) 
if and only if this set is consistent with this validation knowledge base of $G$ (\Cref{trfm:def:alchoi-shape-semantics}).

\begin{definition}[RDF Graph Validation Semantics]%
  \label{trfm:def:validation-rdf-semantics}
  The $\DLogics$ axioms $\TBox_G$ of an RDF graph $G$ are the TBox consisting of the:
  \begin{enumerate}
  \item \emph{Domain Closure Assumption (DCA)}:
    $\top \equiv \bigsqcup_{a \in \IndividualNames}\{a\}$.
  \item \emph{Unique Name Assumption (UNA)}:
   $\{a\} \sqcap \{b\} \equiv \bot$, for each pair of distinct individual names $a, b \in \IndividualNames$.
  \item \emph{Closed-World Assumption (CWA)}:
    \begin{itemize}
      \item $A \equiv \bigsqcup_{\atomC{a}{A}\in G}\{a\}$, for each concept name $A \in \ConceptNames$,
      \item $\exists p.\{a\} \equiv \bigsqcup_{\atomP{b}{a}{p}\in G} \{b\}$, and
      \item $\exists p^-.\{a\} \equiv \bigsqcup_{\atomP{a}{b}{p}\in G} \{b\}$, 
        for each role name $p \in \RoleNames$, and for each individual name $a \in \IndividualNames$.
    \end{itemize}
  \end{enumerate}
\end{definition}

\begin{definition}[Semantics of $\DLogics$ Shapes]
  \label{trfm:def:alchoi-shape-semantics}
  A graph $G$ is \emph{valid} regarding a set $\allshapes$ of $\DLogics$ shapes, denoted $\shaclvalid{G}{\allshapes}$, if and only if $G$ is proof-valid according to $\allshapes$;
  that is, if and only if $S$ is consistent with this validation knowledge base of $G$.
\end{definition}

We next formalize extended graphs, that were informally introduced in \Cref{trfm:sec:extend}. 
An extended graph unifies four different graphs: The input graph (which is given), the intermediate graph defined
by renaming concept and role names matched by the query, the output graph constructed by the query (again modulo renaming),
as well as the graph defined by variables and their bindings.
Note, that the notion of extended graphs is only formally required in proofs; they are not explicitly constructed in our method.
\begin{definition}[Extended Graph]\label{trfm:def:extended-graph}
  Given an RDF graph $\graphin$ and a query $\eccqformal$, we define
  \begin{enumerate}
    \item the \emph{intermediate graph} $\graphmed \coloneqq \bigcup_{\mu \in \eccqpatterneval{P,F}{\graphin}} \mu(P)$,
    \item the \emph{variable concept graph} $\graphvar$ containing an assertion $\atomC{a}{\vconcept{x}}$ if and only if there exists a valuation 
      $\mu \in \eccqpatterneval{P,F}{\graphin}$ and $\mu(x)=a$,
    \item the \emph{output graph} $\graphout \coloneqq \eccqeval{q}{\graphin}$,
    \item and the \emph{extended graph} $\graphext \coloneqq \graphin \cup \dot{G}_{\mathrm{med}} \cup \graphvar \cup \ddot{G}_{\mathrm{out}}$.
\end{enumerate}
\end{definition}

\subsection[Shape Mapping]{Shape Mapping (\Cref{trfm:prop:smap})} 
\label{trfm:proof:smap}

We prove, that the mapping for shapes, $\sprogstoshacl{s}$, preserves the validation semantics of ProGS.

\begin{proof}
We consider shape targets and constraints separately.
To this end, we first show that if and only if a node $n$ (or edge $e$) 
is a target of the shape $s$ in a property-graph $G$,
then the corresponding node $n$ in $\pgtordf{G}$ with the name $\toiri{n}$ 
is a target of the shape $\sprogstoshacl{s}$.
We differentiate between the possible target queries of a shape $s$:
\begin{enumerate}
  \item $n$.
        The target query $n$ targets exactly one node in $G$, namely $n$, which is \emph{uniquely} mapped to $\toiri{n}$ in $\pgtordf{G}$, per \Cref{trfm:def:graphmap}.
        The target query $\sprogstoshacl{n} = \{\toiri{n}\}$ targets all individuals in the extension of $\{\toiri{n}\}$ in $\pgtordf{G}$, which by definition is $\toiri{n}$ if and only if $\toiri{n}$ is in $\pgtordf{G}$, which is the case if and only if $n$ was in $G$.
        Thus, if $n$ is a target in $G$, then (and only then) is $\toiri{n}$ a target in $\pgtordf{G}$.

  \item $l_N$.
        This targets all nodes $n \in N$ such that $l_{N} \in \lambda(n)$ in $G$ (\Cref{trfm:fig:nodetargetqueries}).
        The corresponding RDF graph $\pgtordf{G}$ includes, by construction, exactly one triple
        $\triple{\toiri{n}}{\rdftype}{\toiri{l_{N}}}$
        for each $n \in N$, and no other occurrences of
        the concept name $\toiri{l_{N}}$ (\Cref{trfm:def:graphmap}).
        The target query $\sprogstoshacl{l_{N}}$ is mapped to $\toiri{l_{N}}$ (\Cref{trfm:def:shapemap}).
        By definition, this targets exactly the individuals in the extension of $\toiri{l_{N}}$; in the constructed graph this includes exactly the subjects of
        triples of the form $\triple{\toiri{n}}{\rdftype}{\toiri{l_{N}}}$, and thus exactly the mapped nodes $\toiri{n}$ for all $n$ identified above.

  \item $k_N$.
        This targets all nodes $n \in N$ where $\sigma(n, k_{N}) \ne \emptyset$ (\Cref{trfm:fig:nodetargetqueries}).
        The corresponding RDF graph $\pgtordf{G}$ includes, by construction, the triple
        $\triple{\toiri{n}}{\toiri{k_N}}{\toiri{v}}$ if and only if $\sigma(n, k_{N}) = \{v\}$,
        and for each $n \in N$, and no other occurrences of
        the role name $\toiri{k_{N}}$ (\Cref{trfm:def:graphmap}).
        (Note, that by definition we consider only singleton sets of values.)
        The target query maps to $\sprogstoshacl{k_{N}} = \exists \toiri{k_{N}}.\top$ (\Cref{trfm:def:shapemap}).
        By definition, this target query includes all nodes in the extension of $\exists \toiri{k_{N}}.\top$,
        and with analogous reasoning to the previous case exactly the mapped nodes $\toiri{n}$ for all $n$ identified above.

  \item The remaining cases for $e$, $l_E$, and $k_E$ are equivalent to the previous cases.
\end{enumerate}

Next, we consider the constraints.
That is, we first show that when a node $n$ (or edge $e$) satisfies the constraint of shapes $s$ 
in a property-graph $G$, then there is a corresponding node $n$ in $\pgtordf{G}$ with the name $\toiri{n}$ 
that satisfies the constraint of $\sprogstoshacl{s}$.
We show this by assuming that this corresponding node does not exist; 
we cover all cases for constraint components, by induction. 
We start with the base cases.

\begin{enumerate}
  \item $\top$.
        This constraint is vacuously satisfied for any target node.
        The mapped constraint $\sprogstoshacl{\top} = \top$ is also always vacuously satisfied for any target node.

    \item $n$.
        This constraint is satisfied for a target node $n_{t}$ in $G$ if and only if $n_{t} = n$.
        Assume that the node $\toiri{n_{t}}$ in $\pgtordf{G}$ would not satisfy the constraint $\sprogstoshacl{n} = \{\toiri{n}\}$.
        This would imply that $\toiri{n_{t}} \ne \toiri{n}$.
        However, we know that $n_{t} = n$ from which $\toiri{n_{t}} = \toiri{n}$ follows directly.

    \item $l_{N}$.
        This constraint is satisfied for a target node $n_{t}$ if and only if $l_{N} \in \lambda(n_{t})$ in $G$.
        Assume that the node $\toiri{n_{t}}$ in $\pgtordf{G}$ would not satisfy the constraint $\sprogstoshacl{l_{N}} = \toiri{l_{N}}$.
        Then, the individual $\toiri{n_{t}}$ must not be in the extension of $\toiri{l_{N}}$.
        However, by construction, the graph $\pgtordf{G}$ contains the triple $\triple{\toiri{n_{t}}}{\rdftype}{\toiri{l_{N}}}$ and thus $\toiri{n}$ is in the extension of $\toiri{l_{N}}$.
        This contradicts the previous assumption.

        Indeed, the same can be shown for the inverse direction:
        Consider an individual $\toiri{n_{t}}$ that is in the extension of $\toiri{l_{N}}$ and thereby satisfying this constraint in $\pgtordf{G}$.
        Now assume that $\pgtordf{G}$ is constructed from the graph $G$ where the corresponding node $n_{t}$ does not satisfy the corresponding constraint $l_{N}$, \ie it holds that $l_{N} \not\in \lambda(n_{t})$.
        If this is the case, then $\pgtordf{G}$ does not contain the triple $\triple{\toiri{n_{t}}}{\rdftype}{\toiri{l_{N}}}$, since the only way to construct that triple in the mapping is if $l_{N} \in \lambda(n_{t})$.
        However, if the graph does not contain this triple, then $\toiri{n_{t}}$ cannot be in the extension of $\toiri{l_{N}}$, which contradicts the previous assumption.

    \item $\exists k_{N}.(= v)$.
        This constraint is satisfied for a target node $n_{t}$ if and only if $\sigma(n_{t}, k_{N}) = \{v\}$.
        Assume that the individual $\toiri{n_{t}}$ in $\pgtordf{G}$ would not satisfy $\sprogstoshacl{\exists k_{N}.\top} = \toiri{\exists k_{N}.\top}$.
        Then, $\toiri{n_{t}}$ must not be in the extension of $\exists \toiri{k_{N}}.\toiri{v}$.
        However, by construction, the graph $\pgtordf{G}$ contains the triple $\triple{\toiri{n_{t}}}{\toiri{k_{N}}}{\toiri{v}}$ if and only if $\sigma(n_{t}, k_{N}) = \{v\}$.
        Since this is the case, we can follow that $\toiri{n_{t}}$ is indeed in the extension of $\exists \toiri{k_{N}}.\toiri{v}$.
        This is a contradiction, so our previous assumption must be false.

        Indeed, the same can be shown for the inverse direction:
        Consider an individual $\toiri{n_{t}}$ that is in the extension of $\exists \toiri{k_{N}}.\toiri{v}$, thereby satisfying the constraint in $\pgtordf{G}$.
        Now assume that $\pgtordf{G}$ is constructed from the graph $G$ where the corresponding node $n_{t}$ does not satisfy the corresponding constraint $\exists k_{N}.(= v)$, \ie it holds that $v \not\in \sigma(n_{t}, k_{N})$.
        If this is the case, then $\pgtordf{G}$ does not contain the triple $\triple{\toiri{n_{t}}}{\toiri{k_{N}}}{\toiri{v}}$, since the only way to construct that triple in the mapping is if $\sigma(n_{t}, k_{N}) = \{v\}$.
        However, if the graph does not contain this triple, then $\toiri{n_{t}}$ cannot be in the extension of $\toiri{\exists \toiri{k_{N}}.\toiri{v}}$, which contradicts the previous assumption.

    \item $\exists k_{N}.\top$.
        The proof for this case is very similar to the previous case, with the only difference that we only care about $\sigma(n_{t}, k_{N})$ being non-empty.

    \item $\neg \phi_{N}$.
        We assume that the condition under investigation holds for $\phi_{N}$.
        That is, for any target node $n_{t}$ in $G$ that it satisfies $\phi_{N}$ if and only if,
        $\toiri{n_{t}}$ satisfies $\sprogstoshacl{\phi_{N}}$ in $\pgtordf{G}$.
        The negated cases follow directly.

    \item $\phi_{N}^{1} \wedge \phi_{N}^{2}$.
        We assume that the condition under investigation holds for $\phi_{N}^{1}$ and $\phi_{N}^{2}$.
        That is, for any target node $n_{t}$ in $G$ that it satisfies $\phi_{N}^{{1}}$ if and only if,
        $\toiri{n_{t}}$ satisfies $\sprogstoshacl{\phi_{N}^{{1}}}$ in $\pgtordf{G}$ (and the same for $\phi_{N}^{2}$, respectively).
        The conjunction follows directly.

    \item We omit equivalent cases for edges (\eg, $\phi_{E}$) since their proofs work analogously.

    \item $\existsright \phi_{E}$.
        This constraint is satisfied if and only if for the target node $n_{t}$ exists $\rho(e) = (n_{t}, n')$, and $e$ satisfies the constraint $\phi_{E}$.
        Assume that the individual $\toiri{n_{t}}$ in $\pgtordf{G}$ would not satisfy the constraint $\sprogstoshacl{\existsright \phi_{E}} = \exists \pref{m}{nte} . \sprogstoshacl{\phi_{E}}$.
        By construction, we know that $\pgtordf{G}$ contains the triple $\triple{\toiri{n_{t}}}{\pref{m}{nte}}{\toiri{e}}$.
        By our induction hypothesis, we can assume that $\toiri{e}$ satisfies the constraint $\sprogstoshacl{\phi_{E}}$ in $\pgtordf{G}$, since we know that $e$ satisfies $\phi_{E}$ in $G$.
        Because of both of these facts, we can follow that $\toiri{n_{t}}$ is in the extension of $\exists \pref{m}{nte}.\sprogstoshacl{\phi_{E}}$, and therefore satisfies this constraint.
        This contradicts the previous assumption.

        For the inverse direction, the same argument applies:
        Consider an individual $\toiri{n_{t}}$ that is in the extension of $\exists \pref{m}{nte}.\sprogstoshacl{\phi_{E}}$, thereby satisfying the constraint in $\pgtordf{G}$.
        Now assume that $\pgtordf{G}$ is constructed from the graph $G$ where the corresponding node $n_{t}$ does not satisfy the corresponding constraint $\existsright \phi_{E}$, \ie it holds that either $\rho(e) = (n_{t}, n')$ is not in $\rho$, or $e$ does not satisfy $\phi_{E}$.
        By the induction hypothesis, we know that $e$ must satisfy $\phi_{E}$.
        Thus, the edge $\rho(e) = (n_{t}, n')$ must not exist.
        If this is the case, then $\pgtordf{G}$ does not contain the triple $\triple{\toiri{n_{t}}}{\pref{m}{nte}}{\toiri{e}}$, 
        since the only way to construct that triple in the mapping is if there is an $e$ where $\rho(e) = (n_{t}, n')$.
        However, if the graph does not contain this triple, then $\toiri{n_{t}}$ cannot be in the extension 
        of $\exists \pref{m}{nte}.\sprogstoshacl{\phi_{E}}$, which contradicts the previous assumption.

    \item $\existsleft \phi_{E} $.
        This case works exactly analogously to the previous case (only in reverse).

    \item $\Rightarrow \phi_{N}$.
        This case works analogously to the previous cases 
        (except we are considering node constraints instead of edge constraints).

    \item $\Leftarrow \phi_{N}$.
        Again, this case works analogously to the previous case (only in reverse).
\end{enumerate}
\end{proof}

\subsection[Query Mapping]{Query Mapping (\Cref{trfm:prop:qmap})} %
\label{trfm:proof:qmap}

We prove \Cref{trfm:prop:qmap} by defining and proving the following two lemmas.

\begin{lemma}[Invertable Graph Mapping]
   \label{trfm:lemma:inverse}
   Given a PG $G$, then the mapping to an RDF graph can be inverted such that $G = \rdftopg{\pgtordf{G}}$.
\end{lemma}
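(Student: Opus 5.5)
The approach is to define the inverse $\rdftopgname$ explicitly on the image of $\pgtordfname$ and check component-wise that it recovers $(N,E,\rho,\lambda,\sigma)$. Everything rests on two facts. First, $\toiriname$ is injective and its images of node IDs, edge IDs, labels, keys and values are pairwise distinct; we treat this as the intended reading of ``unique IRIs''. Second, $\mnode,\medge,\mnte,\metn$ are fresh meta names outside the image of $\toiriname$. Given an RDF graph $G' = \pgtordf{G}$, we read off the components as follows.
\begin{itemize}
\item $N' = \{n \mid \triple{\toiri{n}}{\rdftype}{\mnode} \in G'\}$.
\item $E' = \{e \mid \triple{\toiri{e}}{\rdftype}{\medge}\in G'\}$.
\item $\rho'(e) = (n_1,n_2)$ whenever $\triple{\toiri{n_1}}{\mnte}{\toiri{e}}, \triple{\toiri{e}}{\metn}{\toiri{n_2}} \in G'$.
\item $\lambda'(u) = \{l \mid \triple{\toiri{u}}{\rdftype}{\toiri{l}}\in G'\}$.
\item $\sigma'(u,k) = \{v \mid \triple{\toiri{u}}{\toiri{k}}{\toiri{v}}\in G'\}$.
\end{itemize}
The inverses of $\toiriname$ used here are well defined by injectivity.

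Next we show each component agrees with $G$ by inspecting \Cref{trfm:def:graphmap}. The meta concept $\mnode$ is emitted exactly once per $n\in N$, so $N'=N$. For the edge clause we read the final triple as $\triple{\toiri{e}}{\rdftype}{\medge}$, which is what \Cref{trfm:ex:maprdf} shows; the displayed $\toiri{n}$ there is a typo. Under that reading $E'=E$.

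For $\rho$, we show that every $e \in E$ has exactly one incoming $\mnte$ triple and one outgoing $\metn$ triple. Existence holds because $\pgtordfe{e}$ emits them, using that $\rho$ is total. Uniqueness holds because the only producer of a triple with object $\toiri{e}$ and predicate $\mnte$ is $\pgtordfe{e}$ itself, and $\toiriname$ is injective. So $\rho'$ is a well-defined total function equal to $\rho$.

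For labels, a triple $\triple{\toiri{u}}{\rdftype}{\toiri{l}}$ arises only from $l\in\lambda(u)$. It cannot be confused with the $\mnode$ or $\medge$ typing triples, since those meta names are not of the form $\toiri{l}$. Property triples are likewise separated from the reification triples because $\mnte,\metn$ are not of the form $\toiri{k}$. This gives $\lambda'=\lambda$ and $\sigma'(u,k)=\sigma(u,k)$ for all $u,k$, including the cases where $\sigma(u,k)=\emptyset$.

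The main obstacle is the disjointness and injectivity bookkeeping. One issue is ensuring that node and edge IRIs, label IRIs and key IRIs never collide. Another is that labels from $L_N$ and $L_E$, both mapped into $\ConceptNames$, cannot be misattributed. This is harmless because attribution is by subject, which is typed node or edge. We would state these requirements on $\toiriname$ as explicit assumptions at the start, since the paper leaves $\toiriname$ abstract, and then the remaining equalities are routine set comparisons.
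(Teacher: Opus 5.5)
Your explicit read-off of $N',E',\rho',\lambda',\sigma'$ follows the same route as the paper. The paper's proof just asserts that \Cref{trfm:def:graphmap} can be inverted by matching the triples that build nodes, edges, labels and properties. Your additions are a faithful elaboration of that sentence: the injectivity and freshness assumptions on $\toiriname$ and the meta names, and the correction of the $\medge$ typo to $\triple{\toiri{e}}{\rdftype}{\medge}$.

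One step fails under the paper's literal definitions, and the missing hypothesis is about $G$, not about $\toiriname$. It is the claim $\sigma'(u,k)=\sigma(u,k)$ for \emph{all} $u,k$.
\begin{itemize}
\item \Cref{trfm:def:graphmap} emits $\triple{\toiri{n}}{\toiri{k}}{\toiri{v}}$ only for $k\in K_N$, and $\triple{\toiri{e}}{\toiri{k}}{\toiri{v}}$ only for $k\in K_E$.
\item \Cref{trfm:def:pg}, however, allows $\sigma(u,k)\neq\emptyset$ for any $(u,k)\in(N\cup E)\times K$.
\item Suppose a node $n$ has $\sigma(n,k)\neq\emptyset$ for some $k\in K_E\setminus K_N$, or an edge carries a key outside $K_E$. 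Then those values are silently dropped by $\pgtordfname$, your $\sigma'(n,k)$ is $\emptyset$, and $G\neq\rdftopg{\pgtordf{G}}$.
\end{itemize}
Your separation argument shows that every property triple in $\pgtordf{G}$ comes from $\sigma$. It does not show the converse, that every value in $\sigma$ yields a triple, and that converse is exactly what fails here.

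The repair is to add a well-typedness hypothesis: $\sigma(n,k)=\emptyset$ for $n\in N$ and $k\notin K_N$, and $\sigma(e,k)=\emptyset$ for $e\in E$ and $k\notin K_E$. If you read $l_N\in\lambda(n)$ in the mapping as ranging over $L_N$ only, you need the analogous condition for labels as well. The paper assumes this tacitly, and its one-line proof glosses over it too. With that hypothesis your proof is complete.
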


\begin{proof}
  The mapping algorithm defined in \Cref{trfm:def:graphmap} can be trivially inverted 
  to construct a property graph from an RDF graph
  by matching on triples included in the RDF graph required to construct nodes, edges, labels, and properties.
  The inverted algorithm is partial in the domain of RDF graphs, as there might be triples that are not matched;
  this suffices for \Cref{trfm:lemma:inverse}.
\end{proof}

\begin{lemma} [Soundness of Query Mapping Modulo~\Cref{trfm:lemma:inverse}]
   \label{trfm:lemma:partialsound}
   Given a PG $G$ and a \gcname $q$, 
   then $\pgtordf{\sparqleval{q}{G}} = \sparqleval{\sgcoretoeccq{q}}{\pgtordf{G}}$.
\end{lemma}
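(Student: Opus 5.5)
The proof splits, following the structure of $\sgcoretoeccq{q} = \iqltoeccq{\sgcoretoiql{q}}$, into a \emph{matching} part and a \emph{constructing} part, each handled component-wise via IQL.

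\textbf{Step 1: relating bindings.} For a \gcname valuation $\mu$ (over node and edge variables), define the set of \eccqname valuations $\hat{\mu}$ that \emph{correspond} to $\mu$. Such a $\hat{\mu}$ satisfies $\hat{\mu}(x) = \toiri{\mu(x)}$ on ordinary variables. On the generic variables $\cvar{x}$, $\rvar{x}$, $\ivar{x}$ it ranges over exactly the type assertions and outgoing triples of $\toiri{\mu(x)}$ in $\pgtordf{G}$, minus the filtered names. These triples include $\mnode$/$\medge$ and the reification roles $\mnte$, $\metn$; the latter are excluded by the filters shown in \Cref{trfm:fig:ex:querymapping}.

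We then prove $\eccqpatterneval{P,F}{\pgtordf{G}} = \bigcup_{\mu \in \gcmeval{\gcpattern}{G}} \{\hat{\mu} \mid \hat{\mu} \text{ corresponds to } \mu\}$. The proof goes by induction over pattern components. A node pattern $\gcoreNodeF{x}{\gcwhere}$ maps to $\operatorname{M_N}$ and then to $\atomGenN{x}$ plus one triple pattern per where clause. By \Cref{trfm:def:graphmap}, $\toiri{n}$ carries $\mnode$ iff $n\in N$, and carries $\toiri{l}$ (resp.\ a $\toiri{k}$-triple) iff $l\in\lambda(n)$ (resp.\ $v\in\sigma(n,k)$). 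This mirrors \Cref{trfm:def:g:semantics:match}, as in the case analysis for \Cref{trfm:prop:smap}. An edge pattern additionally yields $\atomPin{x}{e}$ and $\atomPout{e}{y}$, which hold in $\pgtordf{G}$ iff $\rho(e)=(x,y)$. Since both semantics define the full pattern as the join over components, and corresponding valuations agree on shared ordinary variables, the join commutes with correspondence. The generic variables are unique per ordinary variable, so they never cause spurious incompatibilities.

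\textbf{Step 2: construction.} Fix $\mu$ and its corresponding $\hat{\mu}$'s. We show that $\pgtordf{\gcceval{\gctemplatecomponent}{\mu, G}}$ equals the union over corresponding $\hat{\mu}$ of the instantiated $\operatorname{C_N}$/$\operatorname{C_E}$ templates. This requires verifying three things.
\begin{enumerate}
  \item $\mnode$, $\medge$ and the two reification triples are produced explicitly.
  \item Ranging $\hat{\mu}$ over all bindings of $\cvar{x}$ filtered by $\atomCNot{x}{l}$ for each $\gcremove l$ reproduces $\lambda(\mu(x))\setminus\{l \mid \gcremove l\}$, and the explicit atoms for $\gcset l$ add the set labels. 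This matches $\gcseval{\gcsr_L}{\cdot}$.
  \item Properties behave analogously via $\rvar{x}$, $\ivar{x}$ and $\atomPNot{x}{\toiri{k}}$.
\end{enumerate}
Template variables absent from the pattern receive fresh identities on both sides. We identify these via the choice of $\toiriname$, using the freshness clause of \Cref{trfm:def:eccq:semantics}: one fresh IRI per variable and per valuation. We must note that distinct $\hat{\mu}$ over the same $\mu$ must yield the \emph{same} fresh IRI. This is handled by indexing freshness by the restriction of $\hat{\mu}$ to ordinary variables, or by observing that such variables never cooccur with generic copies. Because $\pgtordfname$ distributes over the PG union (labels and property sets merge as set unions, and $\rho$ is consistent), taking unions over $\gctemplate$ and $\Omega$ gives the lemma.

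\textbf{Main obstacle.} I expect the main obstacle to be the property case of Step 2 together with the interplay of $\gcset k=v$ and existing values. Under $\sigma'$, setting $k$ must \emph{drop} the old values of $k$, yet the \eccqname template only adds triples. I would first check whether the IQL mapping in \Cref{trfm:fig:gciqlboth} adds a filter $\atomPNot{x}{\toiri{k}}$ for every set key. If it does not, I would restrict the lemma (or its proof) to the single-valued reading noted in the proof of \Cref{trfm:prop:smap} and argue the case separately. A second delicate point is the fresh-IRI identification, handled as described in Step 2.
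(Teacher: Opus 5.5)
There is a genuine gap. Your route is the paper's: you split into matching and constructing via IQL, relate each \gcname valuation $\mu$ to a set of \eccqname valuations $\hat{\mu}$, induct over where/set/remove clauses, push the relation through the join, and take unions over template components. In places you are more careful than the paper.

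The gap is that Step~2 silently needs every $\mu \in \gcmeval{\gcpattern}{G}$ to have \emph{at least one} corresponding $\hat{\mu}$. Otherwise the union over corresponding $\hat{\mu}$ is empty while $\pgtordf{\gcceval{\gctemplatecomponent}{\mu, G}}$ is not. Your Step~1 equality holds even when that set is empty, so nothing you prove excludes this case, and it does occur. The paper asserts the step in its base case: the generic patterns are ``trivially satisfied'' because $\toiri{n}$ is always typed $\mnode$. That is fine for $\atomGenc{x}$ via $\cvar{x} = \mnode$, but not for $\atomGenp{x}$. Valuations map into $\IndividualNames \cup \ConceptNames \cup \RoleNames$, so $\rvar{x}$ cannot bind $\rdftype$. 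The filter $\atomPNot{x}{\mnte}$ (resp.\ $\atomPNot{e}{\metn}$) then discards the only role triples that are not properties. Take $q = \{\gcoreNodeS{x}{\emptyset}\} \Leftarrow \{\gcoreNodeF{x}{\emptyset}\}$ on a single node $n$ without properties. Then $\pgtordf{\gceval{q}{G}} = \{\triple{\toiri{n}}{\rdftype}{\mnode}\}$, yet the mapped pattern has no solution on $\pgtordf{G}$. Likewise, node $\idname{2}$ of $\pgexa{1}$ has no properties, so $\sgcoretoeccq{\gcexa{2}}$ returns the empty graph on $\pgtordf{\pgexa{1}}$.

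Suppose instead you let $\rvar{x}$ bind $\rdftype$, as \texttt{?x ?\_1 ?\_2} does in \Cref{trfm:lst:stwo}. Nonemptiness then holds, but the second item of your Step~2 breaks. $\atomGenp{x}$ re-copies $\triple{\toiri{n}}{\rdftype}{\toiri{l}}$ for each removed label $l$, because $\atomCNot{x}{\toiri{l}}$ constrains only $\cvar{x}$ and \Cref{trfm:def:eccq:syntax} offers no filter on $\ivar{x}$. Either way one of your two steps fails. You must first change the encoding, e.g.\ a single generic pattern with label filters on the object as in the concrete listing, or optional generic patterns. Alternatively, restrict the statement to inputs where every matched node and edge keeps a property the query does not remove.

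Two of your fallbacks also fail, and one justification is missing.
\begin{itemize}
\item \emph{Set keys.} Restricting to single-valued properties does not rescue $\gcset k = v$. If $\sigma(n,k) = \{v_0\}$ with $v_0 \neq v$, \gcname yields $\{v\}$, but the \eccqname output contains both $\triple{\toiri{n}}{\toiri{k}}{\toiri{v_0}}$ and $\triple{\toiri{n}}{\toiri{k}}{\toiri{v}}$. The repair is a filter $\atomPNot{x}{\toiri{k}}$ for every set key, i.e.\ set keys must also enter $R_K$ of $\operatorname{M_N}$. The paper's proof just calls this case analogous to $\gcset l$.
\item \emph{Fresh IRIs.} Noting that template-only variables never co-occur with generic copies does not help. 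The number of $\hat{\mu}$ per $\mu$ is determined by the generic variables of \emph{all} pattern variables. It exceeds one as soon as a matched node carries a label that is not removed, since $\cvar{x}$ can then be $\mnode$ or that label. Under \Cref{trfm:def:eccq:semantics}, a template-only variable thus gets one fresh IRI per $\hat{\mu}$. Only your first option works: index freshness by the restriction of $\hat{\mu}$ to node and edge variables. That modifies the \eccqname semantics; the paper simply assumes the fresh identifiers coincide.
\item \emph{Edge endpoints.} Your claim that ``$\rho$ is consistent'' in the final union needs an argument. The template components $\gcoreEdgeS{x}{y}{z}{\emptyset}$ and $\gcoreEdgeS{y}{x}{z}{\emptyset}$ build the same edge with swapped endpoints. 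The \propgraph union keeps one orientation, while the RDF union keeps both.
\end{itemize}
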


\begin{proof}
  We need to show that mapping the result of a \gcname query is equivalent to mapping 
  both the input graph and the query itself.
  We will show the lemma by showing that a \gcname query is equivalent to an 
  IQL (\Cref{trfm:def:iql}) query, given the defined mapping (\Cref{trfm:fig:gciqlboth}), 
  referring to the semantics of \gcname directly, 
  and the semantics of IQL indirectly through the mapping to \eccqname queries (\Cref{trfm:fig:gciqlboth}).

  To this end, we consider the two parts of the queries (referred to here as \emph{matching} 
  and \emph{constructing}) separately.
  We start by proving that the \emph{matching} parts of both query languages produce equivalent results, 
  modulo certain conditions we will define below.
  At the core, the semantics of both languages, as far as the \emph{matching} part of the queries is concerned, 
  differ in that in \gcname we obtain a mapping from variable names to node and edge IDs, 
  whereas in corresponding IQL (and \eccqname) queries we obtain a mapping from (a superset of equivalent) 
  variable names to IRIs, including both the encoded node and edge IDs present in \gcname results, 
  but also all IRIs that correspond to labels and properties, since they are explicitly
  matched by variables, and not implicitly included as for \gcname.

  For the sake of readability, we will refer to, \eg, queries as 
  $q$ (\gcname) and $q'$ (IQL or \eccqname) throughout this proof. 
  Indeed, we apply this notation to \emph{any} objects that can be mapped 
  (such as PG graphs $G$ and RDF graphs $G'$).
  Furthermore, we use IQL and \eccqname somewhat interchangeably; in  particular, 
  we refer to IQL for its syntax and to \eccqname for the semantics of IQL queries, directly.

  \paragraph{Matching}

  We show that the sets of bindings $\Omega$ obtained by a \gcname query $q$ on a 
  PG $G=(N,E,\rho,\lambda,\sigma)$ is equivalent to the set of bindings 
  obtained by the corresponding \eccqname query $q'$ on an RDF graph $G'$.
  The notion of equivalency we use is defined as follows:

  \begin{definition}[Equivalency of Bindings]
    \label{trfm:def:equibind}
    Two sets of valuations $\Omega$ (for \gcname $q$ on $G$) and $\Omega'$ (for \eccqname $q'$ on $G'$)
    are equivalent, written $\Omega \approx \Omega'$, if and only if

    \begin{enumerate}

      \item $\forall \mu \in \Omega, \forall x \in \operatorname{dom}(\mu)$ such that $\mu(x) = n$, then
       \begin{enumerate}
          \item $\exists \mu' \in \Omega' : \mu'(x) = \toiri{n}$,
          \item $\forall l \in L$ if $l \in \lambda(n)$ and $\gcremove l \not\in\gcsr_x$, 
            then $\exists \mu' \in \Omega' : \mu'(c_x) = \toiri{l}$,
          \item $\forall k \in K, \forall v \in V$ if $\sigma(n, k) = \{v\}$ and $\gcremove k \not\in \gcsr_x$, 
            then $\exists \mu' \in \Omega' : \mu'(p_x) = \toiri{k}$ and $\mu'(o_x) = \toiri{v}$.
       \end{enumerate}

      \item $\forall \mu' \in \Omega', \forall x \in \operatorname{dom}(\mu'):$ such that $\mu(x) = \toiri{n}$, 
        then 
       \begin{enumerate}
          \item $\exists \mu \in \Omega : \mu(x) = n$,
          \item $\mu'(c_x) = \toiri{l}$ such that $l \in \lambda(n)$ and $\gcremove l \not\in\gcsr_x$ 
            (or $\mu'(c_x) = \pref{m}{node}$),
          \item $\mu'(p_x) = \toiri{k}$ and $\mu'(o_x) = \toiri{v}$ such that 
            $\sigma(n, k) = \{v\}$ and $\gcremove k \not\in \gcsr_x$ (or $\mu'(p_x) = \pref{m}{node}$),
       \end{enumerate}

      \item $\forall \mu \in \Omega, \forall x, y, z \in \operatorname{dom}(\mu)$ such that 
        $\mu(x) = n_1, \mu(y) = n_2, \mu(z) = e$, then
         \begin{enumerate}
            \item $\exists \mu' \in \Omega' : \mu'(x) = \toiri{n_1}$, $\mu'(y) = \toiri{n_2}$, 
              and $\mu'(z) = \toiri{e}$,
            \item $\forall l \in L$ if $l \in \lambda(e)$ and $\gcremove l \not\in\gcsr_z$, 
              then $\exists \mu' \in \Omega' : \mu'(c_z) = \toiri{l}$,
            \item $\forall k \in K, \forall v \in V$ if $\sigma(e, k) = \{v\}$ and $\gcremove k \not\in \gcsr_z$,
              then $\exists \mu' \in \Omega' : \mu'(p_z) = \toiri{k}$ and $\mu'(o_z) = \toiri{v}$.
         \end{enumerate}

      \item $\forall \mu' \in \Omega', \forall x, y, z \in \operatorname{dom}(\mu'):$ 
        such that $\mu(x) = \toiri{n_1}, \mu(y) = \toiri{n_2}, \mu(z) = \toiri{e}$, then 
        \begin{enumerate}
           \item $\exists \mu \in \Omega$ such that $\mu(x) = n_1, \mu(y) = n_2, \mu(z) = e$,
           \item $\mu'(c_z) = \toiri{l}$ such that $l \in \lambda(e)$ and $\gcremove l \not\in\gcsr_z$ 
             (or $\mu'(c_x) = \pref{m}{node}$),
           \item $\mu'(p_z) = \toiri{k}$ and $\mu'(o_z) = \toiri{v}$ such that $\sigma(e, k) = \{v\}$ 
             and $\gcremove k \not\in \gcsr_z$ (or $\mu'(p_z) = \pref{m}{node}$).
        \end{enumerate}

    \end{enumerate}

    where $c_x$, $p_x$ and $o_x$ are the same fresh variables defined for $x$ in \Cref{trfm:def:eccq:syntax}, 
    and we write, \eg, $n$ and $\toiri{n}$ as implicit conversions between both representations of node 
    (respectively edge) as IDs or IRI.
  \end{definition}

  We will first consider the two patterns $\gcoreNodeF{x_n}{\gcwhere}$ and 
  $\gcoreEdgeF{x_n}{x_n}{x_e}{\gcwhere}$ of \gcname, and then their composition.

  \begin{enumerate}
    \item Case $q_{\gcname} = \gcoreNodeF{x}{\gcwhere}$.
      According to \Cref{trfm:fig:gciqlboth}, this is equivalent to an IQL query pattern of shape 
      $\operatorname{M_N}(x, W_L, W_K, W_V, R_L, R_K)$.
      We need to show equivalency for the bindings $\Omega$ obtained for $\gcoreNodeF{x}{\gcwhere}$ on $G$,
      and $\Omega'$ obtained for the corresponding $\operatorname{M_N}(x, W_L, W_K, W_V, R_L, R_K)$ on $G'$.
      To this end, we show that Case 1 of \Cref{trfm:def:equibind} holds 
      (note, that the precondition of Case 2 is never satisfied, and we thus do not consider this case).
      We show this inductively on the structure of $\gcoreNodeF{x}{\gcwhere}$. 
      We start with the base case where $\gcwhere = \emptyset$, continue with singleton cases, 
      and finally consider union of two sets. 
      \begin{enumerate}
        \item $\gcwhere = \emptyset$.
          Given: $n \in N$ (in $G$) and $\mu(x) = n$.
          Then, by construction of $G'$, $G'$ contains the triple 
          $\triple{\toiri{n}}{\pref{rdf}{type}}{\texttt{node}}$.
          The corresponding \eccqname contains only the triple patterns 
          $\{\atomGenc{x}, \atomGenp{x}, \atomGenN{x}\}$, 
          as well as the filter pattern $\{\atomPNot{x}{\texttt{nte}}\}$.
          We consider the three cases of \Cref{trfm:def:equibind} (1) and (2):
          \begin{itemize}
            \item Case \textbf{1.a} and \textbf{2.a} (\Cref{trfm:def:equibind}).
              By definition of $G'$, $\atomGenN{x}$ is satisfied for $\toiri{n}$;
              Both $\atomGenc{x}$ and $\atomGenp{x}$ are trivially satisfied at least for one $\mu'$ 
              as the graph always contains the triple $\atomGenN{\toiri{n}}$ (and it is only then satisfies).
              Thus, there exists a mapping $\mu' \in \Omega'$ such that $\mu'(x) = \toiri{n}$ exactly in this case.

            \item Case \textbf{1.b} and \textbf{2.b} (\Cref{trfm:def:equibind}).
              For any label $l \in L$ where $l \in \lambda(n)$, by definition $G'$ contains 
              the triple $\triple{\toiri{n}}{\pref{rdf}{type}}{\toiri{l}}$. 
              We first assume that there is no $\gcremove l \in S_x$.
              Then there exists a mapping $\mu' \in \Omega'$ such that 
              $\mu'(x) = \toiri{n}$ and $\mu'(c_x) = \toiri{l}$, since $\atomGenc{x}$ is in the \eccqname.
              (The reverse is true by the same reasoning: If $l \not\in \lambda(n)$, 
              then $G'$ does not contain $\triple{\toiri{n}}{\pref{rdf}{type}}{\toiri{l}}$ by its construction, 
              and the pattern can never be satisfied.)
              If there is a $\gcremove l \in S_x$, then the corresponding \eccqname contains 
              in addition the filter pattern $\atomCNot{x}{\toiri{l}}$.
              According to the semantics of \eccqname (\Cref{trfm:def:eccq:semantics}) 
              we remove exactly the bindings $\mu'$ such that $\mu'(c_x) = \toiri{l}$.
              Thus, such a binding does not exist, as was required.
              (The same reasoning applies for the reverse case.)

            \item Case \textbf{1.c} and \textbf{2.c} (\Cref{trfm:def:equibind}).
              For any property value pair $k \in K, v \in V$ where $\sigma(n, k) = \{v\}$, 
              by definition $G'$ contains the triple $\triple{\toiri{n}}{\toiri{k}}{\toiri{v}}$.
              We first assume that there is no $\gcremove k \in S_x$.
              Then there exists a mapping $\mu' \in \Omega'$ such that $\mu'(x) = \toiri{n}$ 
              and $\mu'(o_{x,k}) = \toiri{v}$, since $\atomGenp{x}$ is in the \eccqname.
              (The reverse is true by the same reasoning: If not $\sigma(n, k) = \{v\}$, 
              then $G'$ does not contain $\triple{\toiri{n}}{\toiri{k}}{\toiri{v}}$ by its construction, 
              and the pattern can never be satisfied.)
              If there is a $\gcremove k \in S_x$, then the corresponding \eccqname contains 
              in addition the filter pattern $\atomPNot{x}{\toiri{k}}$.
              According to the semantics of \eccqname (\Cref{trfm:def:eccq:semantics})
              we remove exactly the bindings $\mu'$ such that $\mu'(p_x) = \toiri{l}$.
              Thus, such a binding does not exist, as was required.
              (The same reasoning applies for the reverse case.)
          \end{itemize}

        \item $\gcwhere = \{\whereLabelRelative{l}\} \cup \gcwhere_r$.
          Given: $n \in N$ (in $G$) and $\mu(x) = n$; equivalency holds for $\gcwhere_r$ (induction hypothesis).
          For any $l \in L$ it holds that if $\{\whereLabelRelative{l}\}$ is satisfied (for $n$), 
          it must be the case that $l \in \lambda(n)$.
          By construction, therefore, $G'$ contains the triple $\triple{\toiri{n}}{\pref{rdf}{type}}{\toiri{l}}$.
          The corresponding \eccqname contains only the additional triple pattern $\{\atomC{x}{\toiri{l}}\}$, 
          which is thus satisfied in $G'$ in at least one $\mu_1'$.
          Since we know that equivalency holds regarding $\gcwhere_r$ (by the induction hypothesis), 
          we know that there exists a $\mu_2'$ for this query where $\mu_2'(x) = \toiri{n}$, 
          and that therefore, $\mu_1' \sim \mu_2'$.
          With the same reasoning as for the previous case, 
          there then exists a mapping $\mu' = \mu_1' \cup \mu_2' \in \Omega'$ where $\mu'(x) = \toiri{n}$.
          For any $l \in L$ where $\{\whereLabelRelative{l}\}$ is not satisfied (for $n$), 
          also $l \not\in \lambda(n)$.
          Then, by construction, $G'$ does not contain the triple
          $\triple{\toiri{n}}{\pref{rdf}{type}}{\toiri{l}}$;
          the inverse of the previous case applies, 
          since $\{\atomC{x}{\toiri{l}}\}$ cannot be satisfied under these conditions.
          Therefore, there does not exist the $\mu'$ in $\Omega'$.

        \item $\gcwhere = \{\wherePropertyEqualsRelative{k}{v}\} \cup \gcwhere_r$.
          Given: $n \in N$ (in $G$) and $\mu(x) = n$; equivalency holds for $\gcwhere_r$ (induction hypothesis).
          For any $k \in K$ and $v \in V$ it holds that if $\{\wherePropertyEqualsRelative{k}{v}\}$ 
          is satisfied (for $n$), it must be the case that $\sigma(n, k) = \{v\}$.
          By construction, therefore, $G'$ contains the triple $\triple{\toiri{n}}{\toiri{k}}{\toiri{v}}$.
          The corresponding \eccqname contains only the additional triple pattern 
          $\atomP{x}{\toiri{v}}{\toiri{k}}$, which is thus satisfied in $G'$ in at least one $\mu_1'$
          where $\mu_1'(x) = \toiri{n}$.
          Since we know that equivalency holds regarding $\gcwhere_r$ (by the induction hypothesis), 
          we know that there exists a $\mu_2'$ for this query where $\mu_2'(x) = \toiri{n}$, 
          and that therefore, $\mu_1' \sim \mu_2'$.
          With the same reasoning as for the previous case, 
          there then exists a mapping $\mu' = \mu_1' \cup \mu_2' \in \Omega'$ such that $\mu'(x) = \toiri{n}$.
          The inverse case works analogously to the second case.

        \item $\gcwhere = \{\wherePropertyRelative{k}\} \cup \gcwhere_r$.
          Given: $n \in N$ (in $G$) and $\mu(x) = n$; equivalency holds for $\gcwhere_r$ (induction hypothesis).
          For any $k \in K$ it holds that if $\{\wherePropertyRelative{k}\}$ is satisfied (for $n$), 
          it must be the case that $\sigma(n, k) = \{v\}$ for some $v \in V$.
          By construction, therefore, $G'$ contains a triple $\triple{\toiri{n}}{\toiri{k}}{\toiri{v}}$.
          The corresponding \eccqname contains only the additional triple pattern $\atomP{x}{o_{x,k}}{\toiri{k}}$, 
          which is thus satisfied in $G'$ in at least one $\mu_1'$, 
          where $\mu'(x) = \toiri{n}$ and $\mu'(o_{x,k}) = \toiri{v}$.
          Since we know that equivalency holds regarding $\gcwhere_r$ (by the induction hypothesis), 
          we know that there exists a $\mu_2'$ for this query where $\mu_2'(x) = \toiri{n}$.
          Furthermore, we know that $o_{x,k} \not\in \operatorname{dom}(\mu_2')$ 
          since the pattern is unique by definition.
          Therefore, $\mu_1' \sim \mu_2'$.
          With the same reasoning as for the previous case, 
          there then exists a mapping $\mu' = \mu_1' \cup \mu_2' \in \Omega'$ where $\mu'(x) = \toiri{n}$.
          The inverse case works analogously to the second case.
      \end{enumerate}

    \item Case $q_{\gcname} = \gcoreEdgeF{x}{y}{z}{\gcwhere}$.
      According to \Cref{trfm:fig:gciqlboth}, this is equivalent to an IQL query pattern of shape 
      $\operatorname{M_E}(x, y, z, W_L, W_K, W_V, R_L, R_K)$.
      We need to show equivalency of the set of bindings $\Omega$ obtained for 
      $\gcoreEdgeF{x}{y}{z}{\gcwhere}$ on $G$, and $\Omega'$ obtained for the corresponding
      $\operatorname{M_E}(x, y, z, W_L, W_K, W_V, R_L, R_K)$.
      To this end, we show that Case 2 of \Cref{trfm:def:equibind} holds.
      (Again, the premise of the other case is not satisfied).
      We show this case inductively on the structure of $\gcoreEdgeF{x}{y}{z}{\gcwhere}$. 
      We start with the base case where $\gcwhere = \emptyset$, continue with singleton cases, 
      and finally consider union of two sets. 
        \begin{enumerate}
          \item $\gcwhere = \emptyset$.
            Given: $n_1, n_2 \in N$, $e \in E$ (in $G$), $\rho(e) = (n_1, n_2)$ 
            and $\mu(z) = e$, $\mu(x) = n_1$, and $\mu(y) = n_2$.
            Then, by construction of $G'$, $G'$ contains and the triples
            $\triple{\toiri{e}}{\pref{rdf}{type}}{\texttt{edge}}$,
            $\triple{\toiri{n_1}}{\pref{rdf}{type}}{\texttt{node}}$,
            $\triple{\toiri{n_2}}{\pref{rdf}{type}}{\texttt{node}}$,
            $\triple{\toiri{n_1}}{\texttt{nte}}{\toiri{e}}$, and
            $\triple{\toiri{e}}{\texttt{etn}}{\toiri{n_2}}$.
            The corresponding \eccqname contains only the triple patterns 
            $\{\atomPin{x}{z}, \atomPout{z}{y},\allowbreak\atomGenc{z}, \atomGenp{z}, \atomGenE{z}\}$, 
            as well as the filter pattern $\{\atomPNot{z}{\texttt{etn}}\}$.

            We consider the three cases of \Cref{trfm:def:equibind} (3) and (4):
            \begin{itemize}

              \item Case \textbf{3.a} and \textbf{4.a} (\Cref{trfm:def:equibind}).
                By definition of $G'$, $\atomGenE{z}$ is satisfied for $\toiri{z}$;
                Both $\atomGenc{z}$ and $\atomGenp{z}$ are trivially satisfied at least for one $\mu'$ 
                as the graph always contains the triple $\atomGenN{\toiri{e}}$.
                Similarly, both $\atomPin{x}{z}$ and $\atomPout{z}{y}$ are satisfied by the definition of $G'$
                (see above) for at least one $\mu'$.
                This $\mu'$ satisfies $\mu'(x) = \toiri{n_1}$, $\mu'(y) = \toiri{n_2}$ and $\mu'(z) = \toiri{e}$,
                given the triples included in $G'$ (by definition) shown above.

              \item Case \textbf{3.b} and \textbf{4.b} (\Cref{trfm:def:equibind}).
                This case is equivalent to Case 1.b for 1.

              \item Case \textbf{3.c} and \textbf{4.c} (\Cref{trfm:def:equibind}).
                This case is equivalent to Case 1.c for 1.
            \end{itemize}
          \item $\gcwhere = \{\whereLabelRelative{l}\} \cup \gcwhere_r$. Analogous to 1.a.ii.
          \item $\gcwhere = \{\wherePropertyEqualsRelative{k}{v}\} \cup \gcwhere_r$. Analogous to 1.a.iii.
          \item $\gcwhere = \{\wherePropertyRelative{k}\} \cup \gcwhere_r$. Analogous to 1.a.iv.
        \end{enumerate}

    \item Case $q_{\gcname} = \{\gcpattern_1 , \gcpattern_2 \}$. 
      For this case, we need to show that the equivalency of two sets of bindings holds under the join operation.
      That is, when $\Omega_1 \approx \Omega'_1$ and $\Omega_2 \approx \Omega'_2$ 
      then $(\Omega_1 \bowtie \Omega_2) \approx (\Omega'_1 \bowtie \Omega'_2)$.
      This follows directly from the definition of $\bowtie$: 
      By the definition of equivalency, both $\Omega$ and $\Omega'$ 
      must share corresponding bindings for common variables,
      while $\Omega'$ might have additional variables.
      Thus, two mappings $\mu_1$ and $\mu_2$ are compatible if and only if $\mu_1'$ and $\mu_2'$ are compatible.

  \end{enumerate}

  \paragraph{Constructing}

  We show that the graph $G_o = (N_o, E_o, \rho_o, \lambda_o, \sigma_o)$ 
  constructed by a \gcname query $q$ on a PG $G=(N,E,\rho,\lambda,\sigma)$ is equivalent to the graph $G'_o$
  obtained by the corresponding \eccqname query $q'$ on an RDF graph $G'$.
  To this end, we already showed that the set of bindings $\Omega$ obtained by the \emph{matching} 
  part of $q$ is equivalent (w.r.t. \Cref{trfm:def:equibind}) 
  to the set of bindings $\Omega'$ obtained by the \emph{matching} part of $q'$ (\ie $\Omega \approx \Omega'$).
  In the following, w.l.o.g., we assume that this is the case;
  similarly, we assume that any fresh IRIs and node or edge IDs generated by either \eccqname
  or \gcname queries, respectively, are the same modulo the graph mapping relation.
  To this end, we refer to them as $a_{\mu(x)}$ or $a_{\mu'(x)}$ where $\mu$ and $\mu'$ are 
  \gcname and \eccqname mappings from equivalent bindings $\Omega$ and $\Omega'$:
  Thus, $\toiri{a}_{\mu(x)} = a_{\mu'(x)}$ holds.

  We will first consider the two patterns $\gcoreNodeF{x_n}{\gcwhere}$ 
  and $\gcoreEdgeF{x_n}{x_n}{x_e}{\gcwhere}$ of \gcname, and then their composition.

  \begin{enumerate}
    \item Case $q_{\gcname} = \gcoreNodeS{x}{\gcsr}$. 
      We prove this construction by inductions on the structure of $\gcsr$.
      Without loss of generality, we assume that $\gcsr$ never contains both 
      $\gcset l$ and $\gcremove l$ (or $\gcset k = v$ and $\gcremove k$);
      indeed, these cases behave like the removal cases.
      For each case, we show that both constructing an RDF output graph $G_o'$ 
      and mapping $G_o$ to an RDF graph $G_o''$ produces the same set of triples.
      \begin{enumerate}
        \item $\gcsr = \emptyset$.
          Here we construct the graph $G_o$ consisting of nodes $n = \mu(x)$ for all bindings $\mu \in \Omega$,
          which is defined as 
          $\cup_{\mu \in \Omega} \{\{\mu(x)\}, \emptyset, \emptyset, \lambda_{\mu(x)}, \sigma_{\mu(x)}\}$
          (\Cref{trfm:def:g:semantics}).
          The query maps to $\operatorname{C_N}(x, A_L, A_K, V)$,
          the semantics of which are defined by the \eccqname construction patterns 
          $\{\atomGenc{x}, \atomGenp{x}, \atomGenN{x}\}$ if $x \in V$ or $\{\atomGenN{x}\}$ otherwise,
          as a union over all $\mu' \in \Omega'$ with $\Omega' \approx \Omega$.
          If $x \not\in V$, then $\{\atomGenN{x}\}$ constructs $a_{\mu'(x)}$, for which a corresponding 
          $a_{\mu(x)}$ is constructed, given the assumption made above.
          Otherwise, we know for each $\mu(x) = n$ in $\Omega$ there exists a $\mu' \in \Omega'$ 
          where $\mu'(x) = \toiri{n}$; 
          the pattern $\atomGenN{x}$ constructs therefore the triple $\atomGenN{x}$ in $G_o'$.
          Mapping $G_o$ to $G_o''$ also includes the triple $\atomGenN{x}$ 
          by definition of the mapping (\Cref{trfm:def:graphmap}).
          The remainder is the same for either case (where we use $n$ for either $n = a_{\mu(x)}$
          or $n = \mu(x)$).
          Secondly, we know that for each $l \in \lambda(n)$ there exists a $\mu' \in \Omega'$ 
          where $\mu'(c_x) = \toiri{l}$.
          The pattern $\atomGenc{x}$ constructs the respective triple 
          $\triple{\toiri{n}}{\pref{rdf}{type}}{\toiri{l}}$ in the output graph $G_o'$.
          Mapping $G_o$ to $G_o''$ also includes this triple, 
          since $\lambda_{\mu(x)}$ above includes $l$, which is mapped to exactly the required triple.
          Finally, we know that for each $k \in K, v \in \sigma(n, k)$ there exists a $\mu' \in \Omega'$ 
          where $\mu'(p_x) = \toiri{k}$ and $\mu'(o_x) = \toiri{v}$;
          thus, pattern $\atomGenp{x}$ constructs the triple 
          $\triple{\toiri{n}}{\toiri{k}}{\toiri{v}}$ for this $\mu'$ in $G_o'$.
          Mapping $G_o$ to $G_o''$ also includes this triple, since $\sigma{\mu(x), k}$ is $v$, 
          which is mapped to the required triple.
          No other triples are included in either construction.
          Therefore, $G_o'$ is equal to $G_o''$.
          
        \item $\gcsr = \{\gcset l\} \cup \gcsr_r$.
          We assume, by the induction hypothesis, that the equivalency of the output graphs 
          $G_o'$ and $G_o''$ holds for $\gcsr_r$.
          The additional set clause $\gcset l$ modifies the graph $G_o$ by adding $l$ to $\lambda(n)$ 
          for all $n \in N$ such that $\mu(x) = n$, for some $\mu \in \Omega$ (which may also be a fresh 
          identifier of the form $a_{\mu(x)}$).
          The equivalent \eccqname $q'$ includes an additional pattern of the form $\atomC{x}{\toiri{l}}$ 
          (and equivalent bindings for $x$).
          By the semantics of \eccqname we know that $G_o'$ therefore includes 
          $\triple{\toiri{n}}{\pref{rdf}{type}}{\toiri{l}}$ for all $\mu'(x) = \toiri{n}$.
          By the definition of our graph mapping, we know that $G_o''$ must include this triple, too, 
          since $l \in \lambda(n)$ for the respective $n$.
          Thus, $G_o' = G_o''$.

        \item $\gcsr = \{\gcremove l\} \cup \gcsr_r$.
          We assume, by the induction hypothesis, that the equivalency of the output graphs 
          $G_o'$ and $G_o''$ holds for $\gcsr_r$.
          The additional remove clause $\gcremove l$ modifies the graph $G_o$ by exclusion of 
          $l$ in $\lambda(n)$ for all $n \in N$ where $\mu(x) = n$, for some $\mu \in \Omega$.
          If such a remove clause exists, $A_L$ does not include the pattern $\gcset l$, 
          by definition, for these $x$.
          Thus, the respective triples are never constructed (see also the base case),
          and $G_o'$ does not include $\triple{\toiri{n}}{\pref{rdf}{type}}{\toiri{l}}$.
          Similarly, $G_o''$ does not include this triple, since $l$ is not in $\lambda(n)$.
          Therefore, $G_o' = G_o''$.

        \item $\gcsr = \{\gcset k = v\} \cup \gcsr_r$.
          This case is equivalent to the case with $\gcset l$.

        \item $\gcsr = \{\gcremove k\} \cup \gcsr_r$.
          This case is equivalent to the case with $\gcremove l$.
      \end{enumerate}

    \item Case $q_{\gcname} = \gcoreEdgeS{x}{y}{z}{\gcsr}$.
      We can again prove this construction by inductions on the structure of $\gcsr$.
      Without loss of generality, we assume that $\gcsr$ never contains both $\gcset l$ 
      and $\gcremove l$ (or $\gcset k = v$ and $\gcremove k$);
      indeed, these cases behave like the removal cases.
      For each case, we show that both constructing an RDF output graph $G_o'$ and mapping $G_o$ 
      to an RDF graph $G_o''$ produces the same set of triples.
      \begin{enumerate}
        \item $\gcsr = \emptyset$.
          Here we construct the graph $G_o$ consisting of nodes $n_1 = \mu(x)$, $n_2 = \mu(y)$, 
          and $e = \mu(z)$ for all bindings $\mu \in \Omega$, defined as
          $\cup_{\mu \in \Omega} \{\{v, u\}, \{e\}, \{e \shortmaparrow (v,u)\}, \lambda_{\mu(x)}, \sigma_{\mu(x)}\}$ (\Cref{trfm:def:g:semantics}).
          The query maps to $\operatorname{C_E}(x, y, z, A_L, A_K, V)$,
          the semantics of which are defined by the \eccqname construction patterns of the form
          $\{\atomPin{x}{z}, \allowbreak\atomPout{z}{y}, \allowbreak\atomGenc{z}, \allowbreak\atomGenp{z}, \allowbreak\atomGenE{z}\}$  if $z \in V$ or $\{\atomPin{x}{z}, \allowbreak\atomPout{z}{y}, \allowbreak\atomGenE{z}\}$ otherwise,
          as a union over all $\mu' \in \Omega'$ with $\Omega' \approx \Omega$.
          If $z \not\in V$, then $\{\atomPin{x}{z}, \allowbreak\atomPout{z}{y}, \allowbreak\atomGenE{z}\}$ 
          constructs $a_{\mu'(z)}$, for which a corresponding 
          $a_{\mu(z)}$ is constructed, given the assumption made above.
          If $z \in V$, then $\mu(z) = e$ in $\Omega$ and there exists a $\mu' \in \Omega'$ 
          where $\mu'(z) = \toiri{e}$.
          The remainder is the same for either case (where we use $e$ for either $e = a_{\mu(z)}$
          or $e = \mu(z)$).
          First, we know for each $\mu(x) = n_1$ and $\mu(y) = n_2$ in $\Omega$ 
          there exists a $\mu' \in \Omega'$ 
          where $\mu'(x) = \toiri{n_1}$ and $\mu'(y) = \toiri{n_2}$.
          The triples 
            $\triple{\toiri{n_1}}{\texttt{nte}}{\toiri{e}}$, and
            $\triple{\toiri{e}}{\texttt{etn}}{\toiri{n_2}}$ are therefore triples in $G_o'$.
          Mapping $G_o$ to $G_o''$ also includes $\atomPin{x}{z}$ and $\atomPout{z}{y}$ by 
          definition of the mapping (\Cref{trfm:def:graphmap}),
          since $\rho(e) = (n_1, n_2)$ in $G_o$.
          Similarly, $\triple{\toiri{e}}{\pref{rdf}{type}}{\texttt{edge}}$ is in $G_o'$ 
          since $\atomGenE{z}$ is in $q'$.
          The triple is also in $G_o''$ by its construction, since $e \in E_o$ (edges of $G_o$).
          The remainder of this case, namely the inclusion of labels $l \in L$ and properties 
          $k \in K$ is exactly analogous to case 1.a),
          with only very minor changes (\ie $e$ instead of $n$).
          No other triples are included in either construction.
          Therefore, $G_o'$ is equal to $G_o''$.

        \item \emph{Remaining Cases.}
          Since there are not further differences between edges and nodes not already covered by the first case,
          the remaining cases are exactly analogous (modulo minor notational differences) 
          to the node cases in Case 1 of this proof.
      \end{enumerate}

    \item Case $q_{\gcname} = \{\gctemplate_1 , \gctemplate_2 \}$. 
      Finally, we show that when $G_{o,1}' = G_{o,1}''$ and $G_{o,2}' = G_{o,2}''$, then also $G_o' = G_o''$.
      $G_o'$ is simply the union of triples in both $G_{o,1}'$ and $G_{o,2}'$ (\Cref{trfm:def:eccq:semantics}).
      $G_o''$ is the graph resulting from mapping $G_o$ to RDF.
      $G_o$ is the (graph) union of the graphs $G_{o,1}$ and $G_{o,2}$ defined $(N_1 \cup N_2, E_1 \cup E_2, \rho, \lambda, \sigma)$ 
      where $\forall e \in E_1 \cup E_2 : \rho(e) = \rho_1(e)\ \textrm{if}\ e \in E_1\ \textrm{else}\ \rho_2(e)$,
      $\forall x \in N_1 \cup N_2 \cup E_1 \cup E_2, k \in K: \lambda(x) = \lambda_1(x) \cup \lambda_2(x) $, 
      and $\sigma(x, k) = \sigma_1(x,k) \cup \sigma_2(x,k)$.
      We consider the individual components of $G_o$.
      \begin{enumerate}
        \item $N_1 \cup N_2$.
          Then $G_o''$ contains $\triple{\toiri{n}}{\pref{rdf}{type}}{\texttt{node}}$ for $n \in N_1 \cup N_2$.
          Given the equivalency of the graphs $G_{o,1}' = G_{o,1}''$ and $G_{o,2}' = G_{o,2}''$,
          $G_o'$ clearly also contains these triples.

        \item $E_1 \cup E_2$.
          Then $G_o''$ contains $\triple{\toiri{e}}{\pref{rdf}{type}}{\texttt{edge}}$ for $e \in E_1 \cup E_2$.
          Given the equivalency of the graphs $G_{o,1}' = G_{o,1}''$ and $G_{o,2}' = G_{o,2}''$,
          $G_o'$ clearly also contains these triples.

        \item $\rho$.
          Then $G_o''$ contains $\triple{\toiri{n_1}}{\texttt{nte}}{\toiri{e}}$, 
          and $\triple{\toiri{e}}{\texttt{etn}}{\toiri{n_2}}$ for $e \in E_1 \cup E_2$ (and the respective $n_1, n_2$).
          Given the equivalency of the graphs $G_{o,1}' = G_{o,1}''$ and $G_{o,2}' = G_{o,2}''$,
          $G_o'$ clearly also contains these triples.

        \item $\lambda$.
          Then $G_o''$ contains $\triple{\toiri{u}}{\pref{rdf}{type}}{\toiri{l}}$ 
          for $u \in N_1 \cup E_1 \cup N_2 \cup E_2$.
          Given the equivalency of the graphs $G_{o,1}' = G_{o,1}''$ and $G_{o,2}' = G_{o,2}''$,
          $G_o'$ clearly also contains these triples.

        \item $\sigma$.
          Then $G_o''$ contains $\triple{\toiri{u}}{\toiri{k}}{\toiri{v}}$ for $u \in N_1 \cup E_1 \cup N_2 \cup E_2$.
          Given the equivalency of the graphs $G_{o,1}' = G_{o,1}''$ and $G_{o,2}' = G_{o,2}''$,
          $G_o'$ clearly also contains these triples.
      \end{enumerate}
  \end{enumerate}
\end{proof}

Finally, the proof for \Cref{trfm:prop:qmap} follows directly from both lemmas.

\begin{proof}
  The proof follows directly from \Cref{trfm:lemma:inverse} and \Cref{trfm:lemma:partialsound}.
\end{proof}

\subsection[Shape Entailment]{Shape Entailment (\Cref{trfm:prop:cwa})} %
\label{trfm:proof:infer}

Following the proofs in~\cite{DBLP:conf/www/Seifer0LS24}, 
it suffices to show that the axioms in $\Sigma$ are indeed valid on all extended graphs.
We can adapt the respective proposition from~\cite{DBLP:conf/www/Seifer0LS24} as given in \Cref{trfm:prop:cwamod}.
Note that \Cref{trfm:prop:smap} and \Cref{trfm:prop:qmap} guarantee that this proposition suffices for 
showing the soundness of our method for property graphs, \simpleprogs and \gcname.

\begin{proposition}[Modification of \Cref{trfm:prop:cwa}]
   \label{trfm:prop:cwamod}
   For every extended graph $\graphext$ of a \eccqname $q$ and a set of $\DLogics$ shapes $\shapesin$
   where $\shaclvalid{\graphin}{\shapesin}$,
   $\shaclvalid{\graphext}{\operatorname{infer}(q, S)}$.
\end{proposition}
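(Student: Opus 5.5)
We show that the validation knowledge base $(\TBox_{\graphext}, \graphext)$ has a model satisfying every axiom of $\operatorname{infer}(q,\shapesin)$. Under the closed-world, domain-closure and unique-name assumptions of \Cref{trfm:def:validation-rdf-semantics}, this knowledge base essentially fixes a single canonical interpretation $\Int_{\mathrm{ext}}$. Its domain is $\IndividualNames$ and every name is interpreted exactly by its triples in $\graphext$. So proof-validity reduces to checking, axiom by axiom, that $C^{\Int_{\mathrm{ext}}} \subseteq D^{\Int_{\mathrm{ext}}}$. Because the four components of $\graphext$ use disjoint vocabularies (plain, single-dotted, variable concepts, double-dotted), each concept or role name is interpreted using only the component it belongs to. This lets us reason about each namespace separately. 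We go through the rules of \Cref{trfm:def:cwa} in order, together with $\operatorname{Cc}(q)$ from \Cref{trfm:def:utilcc}.

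Rule 1 holds because $\shapesin$ mentions only undotted names, whose extensions in $\graphext$ coincide with those in $\graphin$, and $\shaclvalid{\graphin}{\shapesin}$ is assumed. Rule 2 and rules 5 to 7 involve no generic patterns, so we take them over verbatim from the corresponding proofs in~\cite{DBLP:conf/www/Seifer0LS24}, with the following observations:
\begin{itemize}
\item $\vconcept{x}^{\Int_{\mathrm{ext}}} = \{\mu(x) \mid \mu \in \eccqpatterneval{P,F}{\graphin}\}$, by construction of $\graphvar$.
\item Filters only shrink the set of valuations, so the $\sqsubseteq$ direction of rule 2 remains sound.
\item The $\sqsupseteq$ direction of rule 2 additionally needs acyclicity. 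Generic variables $\cvar{x},\rvar{x},\ivar{x}$ occur nowhere else, so they never create cycles. Under acyclicity, each local witness extends to a full valuation, as in the original proof. Here we must additionally check that filters on generic variables never eliminate the \emph{only} witness for an ordinary variable.
\end{itemize}
If this last point fails in general, we expect to need either to restrict the $\sqsupseteq$ axiom to variables without filtered generic patterns, or to argue that a non-filtered binding always exists. The $\mnode$/$\medge$ typing triples are the candidate witnesses for this.

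The main new work concerns $\operatorname{Cc}(q)$ and rules 3 and 4. For $\operatorname{Cc}(q)$, the component concepts are fresh names defined by equivalences, so we interpret them by their defining right-hand sides; these axioms then hold trivially. For rule 3, take $a \in \ddot{A}^{\Int_{\mathrm{ext}}}$, i.e.\ $\triple{a}{\rdftype}{A} \in \graphout$. By \Cref{trfm:def:eccq:semantics}, this triple is $\mu(t)$ for some $t \in H$, and there are two cases:
\begin{itemize}
\item If $t = \atomC{u}{A}$, then $a \in \Concept{u}$.
\item If $t = \atomGenc{x}$, then $\mu(\cvar{x}) = A$. Here $\atomGenc{x} \in P \cap H$, so $\triple{a}{\rdftype}{A} \in \graphin$ and hence $a \in A \sqcap \vconcept{x}$. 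Moreover $\mu \vdash F$ rules out $(\atomCNot{x}{A}) \in F$, so $a \in \vconcept{x}^A$.
\end{itemize}
Conversely, each disjunct yields such a triple. For $\vconcept{x}^A$ this needs $A \in \mathcal{V}$, which holds since $A \in \voc(H)$. The delicate direction is that $\vconcept{x}^A$ is defined as all of $\vconcept{x} \sqcap A$, whereas the generic pattern copies $A$ only for valuations in which $\mu(x)=a$ survives the filters. Since the filter mentioning $A$ is excluded and the generic variable is independent of all other variables, any valuation binding $x$ to $a$ can be modified at $\cvar{x}$ alone, which establishes this direction.

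Rule 4 follows the same pattern for roles, using $\vconcept{x}^{p}$ and $\vconcept{x}^{p,o}$ as the subject and object witnesses of generically copied $p$-triples. We interpret $\vconcept{x}^{p,o}$ as the set of objects $b$ with $\triple{a}{p}{b}\in\graphin$ for some $a\in\vconcept{x}$, subject to the same filter condition. We expect rule 4 to be the main obstacle. Its second and fifth axioms mix $\exists\ddot{p}.\Concept{v}$ with the disjunct over generic variables, and the side condition $\atomP{x}{y}{p}\notin P$ in \Cref{trfm:def:utilcc} must be used to avoid double counting. The last axiom, $\exists\ddot p^-.\top \equiv \dots$, ranges over $P$ and omits a generic disjunct. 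This looks suspicious and may require adding $\vconcept{x}^{p,o}$ there, or showing that it is subsumed. We would first try the interpretation of $\vconcept{x}^{p,o}$ above and verify each of the six inclusions by tracing a triple in $\graphout$ back to its generating template atom.
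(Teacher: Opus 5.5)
Your strategy is the paper's. You check each rule of \Cref{trfm:def:cwa} on the canonical model of the validation knowledge base of $\graphext$, obtain rule~1 from vocabulary disjointness (\Cref{trfm:prop:sts}), reuse~\cite{DBLP:conf/www/Seifer0LS24} for rules~2 and 5--7, and prove rules~3 and~4 by tracing output triples back to template atoms. Your re-binding of $\cvar{x}$ for $\vconcept{x}^A \sqsubseteq \ddot{A}$ is more explicit than the paper's argument. However, the proposal stops at exactly the two hard points, and these cannot be closed as stated, because both of your suspicions are justified.

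\emph{Rule~2.} Your claim that generic variables never create cycles is wrong for role generics. By the definition of $\vcg$, the atom $\atomGenp{x}$ connects $x$, $\rvar{x}$, $\ivar{x}$ pairwise, so no $\sqsupseteq$ axiom is generated for such $x$, and filters on $\rvar{x}$ cannot break it. Filters on $\cvar{x}$ do break it. Take $\eccqpattern = \{\atomC{x}{A}, \atomGenc{x}\}$, $\eccqtemplate = \{\atomGenc{x}\}$, $\eccqfilter = \{\atomCNot{x}{A}\}$, no input shapes, and $\graphin = \{\triple{a}{\rdftype}{A}\}$. Then $\vcg(\eccqpattern \cup \eccqtemplate)$ is a single edge, so $\vconcept{x} \sqsupseteq A$ is generated. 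Yet the only valuation is filtered out, leaving $\vconcept{x}$ empty while $A = \{a\}$.

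\emph{Rule~4.} Take $\eccqpattern = \{\atomGenp{x}, \atomP{x}{y}{p}, \atomC{y}{B}\}$, $\eccqtemplate = \{\atomGenp{x}, \atomP{y}{x}{p}\}$, $\eccqfilter = \emptyset$ and $\graphin = \{\triple{a}{p}{c}, \triple{c}{\rdftype}{B}, \triple{a}{p}{d}\}$. Then $\graphout = \{\triple{a}{p}{c}, \triple{a}{p}{d}, \triple{c}{p}{a}\}$, $\vconcept{x} = \{a\}$, $\vconcept{y} = \{c\}$, and $\vconcept{x}^p \equiv \bot$ because $\atomP{x}{y}{p} \in \eccqpattern$. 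So $a \in \exists\ddot{p}.\top$ lies in no disjunct of the third axiom. Likewise $d \in \exists\ddot{p}^-.\top$ lies in no disjunct of the sixth axiom, whether it ranges over $\eccqpattern$ or $\eccqtemplate$. In particular, the side condition $\atomP{x}{y}{p} \notin \eccqpattern$ does not prevent double counting (overlaps are harmless in a union); it destroys coverage.

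The paper's proof does not supply these steps either. \Cref{trfm:lem:cwatwo} drops generic atoms ``w.l.o.g.'' via \Cref{trfm:lem:base}, whose proof ignores $\eccqfilter$. \Cref{trfm:lem:cwasix} treats only the first three axioms and uses $\vconcept{x}^p \equiv \vconcept{x} \sqcap \exists p.\vconcept{x}^{p,o}$ without checking the side condition. A correct version needs, at the least, the following.
\begin{itemize}
\item An extra hypothesis for rule~2: every binding of $x$ has a witness for $\cvar{x}$ that survives $\eccqfilter$. This holds for queries of the form $\sgcoretoeccq{q'}$ thanks to $\mnode$ and $\medge$, as you suggest. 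The paper tacitly restricts to such queries in its preliminary assumptions.
\item Corrected axioms for rule~4: drop the side condition from \Cref{trfm:def:utilcc}, let the sixth axiom range over $\eccqtemplate$, and add a disjunct $\bigsqcup_{x} \vconcept{x}^{p,o}$ to it.
\end{itemize}
Finally, the component concepts must actually live in $\graphext$. Under the closed-world assumption, a concept name without assertions in $\graphext$ is forced to be empty, so you cannot simply ``interpret'' $\vconcept{x}^A$, $\vconcept{x}^p$, $\vconcept{x}^{p,o}$ by their definitions. Add their assertions, with your intended extensions, to $\graphvar$. This is harmless by vocabulary disjointness, and indispensable for $\vconcept{x}^{p,o}$, which has no defining axiom at all.
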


We show for each of the seven steps of the inference algorithm separately that \Cref{trfm:prop:cwamod} holds,
defining \Cref{trfm:lem:cwaone} through \Cref{trfm:lem:cwaeleven}.
We refer to the set of axioms inferred in each of the seven steps of
\Cref{trfm:def:cwa} as $\Sigma_{1}$ through $\Sigma_{7}$ as a shorthand.
For each proof, we have the following preliminary assumptions:

\begin{enumerate}
  \item A set of \simpleprogs shapes $S'$, which can be mapped via $\sprogstoshacl{S'}$ 
  to $\DLogics$ axioms (representing SHACL shapes) $S$; these are the input shapes to the problem.
  Any input graph $G'$ is assumed to be valid regarding $S'$.
  Thus, \Cref{trfm:prop:smap} ensures that for any of these input graphs $G'$ also $G = \pgtordf{G'}$ 
  is valid regarding $S = \sprogstoshacl{S'}$.

  \item A \gcname query $q'$, which can be mapped to a \eccqname $q = \eccqformal = \sgcoretoeccq{q'}$; 
    this is the input query of the problem.

  \item Finally, we also build on Proposition $3_{\sts}$ as was defined 
    and proven in~\cite{DBLP:conf/www/Seifer0LS24}.
    We repeat this proposition below (\Cref{trfm:prop:sts}) and adapt it to our notation, 
    but refer to~\cite{DBLP:conf/www/Seifer0LS24} for its proof, which applies here as well.
    NB: The proof is simple and relies on the property that extended graphs are defined as
    the union of four different graphs that do not share a vocabulary for 
    concept names or role names.
\end{enumerate}

\begin{definition}[Modification of Proposition $3_{\sts}$~\cite{DBLP:conf/www/Seifer0LS24}]\label{trfm:prop:sts}
  Given an RDF graph $\graphin$ and an \eccqname $q$,
  let the graphs $\graphmed$, $\graphout$, and $\graphext$ be defined according to \Cref{trfm:def:extended-graph}.
  For every $\DLogics$ axiom $\varphi$ that does not include names with dots,
  the following equivalences hold:
  \begin{enumerate}
  \item
    $\shaclvalid{\graphin}{\{\varphi\}}$, if and only if $\shaclvalid{\graphext}{\{\varphi\}}$.
  \item
    $\shaclvalid{\graphmed}{\{\varphi\}}$, if and only if $\shaclvalid{\graphext}{\{\dot{\varphi}\}}$.
  \item
    $\shaclvalid{\graphout}{\{\varphi\}}$, if and only if $\shaclvalid{\graphext}{\{\ddot{\varphi}\}}$.
  \end{enumerate}
\end{definition}

In the remainder of this section, we prove the individual parts of \Cref{trfm:def:cwa}.
Note that some proofs are directly based on, and others in part inspired by, \cite{DBLP:conf/www/Seifer0LS24}
(or, more precisely, the extended version of that paper~\cite{seifer2024shapes}).
We first start with the auxiliary \Cref{trfm:lem:base}, which shows that generic patterns, 
that is atomic patterns of the form $\atomGenc{x}$ and $\atomGenp{x}$ 
do not restrict variables under certain conditions.

\begin{lemma}
   \label{trfm:lem:base}
   For a \eccqname $q$ with variable $x$, inclusion of the pattern $\atomGenc{x}$ can never change the 
   set of bindings for variable $x$, regardless of the input graph, as long as the query pattern also includes 
   at least one pattern of the form $\atomC{x}{A}$ for any concept name $A$.
   Note that this is always true for queries utilizing our mapping.
   The equivalent holds for $\atomGenp{x}$ and $\atomP{x}{y}{p}$ or $\atomP{x}{a}{p}$.
\end{lemma}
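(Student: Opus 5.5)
The plan is to compare two queries directly from the semantics of \eccqname (\Cref{trfm:def:eccq:semantics}). Let $P$ be a query pattern that contains some $\atomC{x}{A}$, and let $P' = P \cup \{\atomGenc{x}\}$. Since $\cvar{x}$ is fresh, it occurs nowhere in $P$. What has to be shown is that projecting $\eccqpatterneval{P',F}{G}$ onto $\operatorname{vars}(P)$, and in particular onto $x$, gives exactly $\eccqpatterneval{P,F}{G}$. The argument has two inclusions.

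\emph{Soundness direction.} Take any $\mu' \in \eccqpatterneval{P',F}{G}$. By the definition of the join, $\mu'$ is $\mu \cup \nu$ with $\mu \in \bowtie_{t\in P}\eccqtripleeval{t}{G}$ and $\nu \in \eccqtripleeval{\atomGenc{x}}{G}$. So the restriction of $\mu'$ to $\operatorname{vars}(P)$ is already a solution of $P$. The filters in $F$ that only mention variables of $P$ are evaluated identically on $\mu$ and on $\mu'$. Hence adding the pattern can only remove bindings for $x$, never create new ones.

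\emph{Completeness direction.} Take any $\mu \in \bowtie_{t\in P}\eccqtripleeval{t}{G}$. Because $\atomC{x}{A}\in P$, the triple $\triple{\mu(x)}{\rdftype}{A}$ lies in $G$. Therefore $\nu = \{x \mapsto \mu(x), \cvar{x}\mapsto A\}$ belongs to $\eccqtripleeval{\atomGenc{x}}{G}$. The valuation $\nu$ is compatible with $\mu$, since $\cvar{x}\notin\operatorname{dom}(\mu)$, so $\mu\cup\nu$ lies in the join for $P'$ and restricts back to $\mu$. The role case works the same way: a pattern $\atomP{x}{y}{p}$ or $\atomP{x}{a}{p}$ in $P$ yields the triple $\triple{\mu(x)}{p}{\mu(y)}$. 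That triple gives the witness $\rvar{x}\mapsto p$, $\ivar{x}\mapsto\mu(y)$ for $\atomGenp{x}$. Finally, the join with several such generic patterns is handled by applying the argument once per pattern, using associativity of $\bowtie$.

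\emph{Main obstacle: filters.} Filters on $\cvar{x}$ or $\rvar{x}$ could discard the chosen witness; for example, $\atomCNot{x}{A}$ with $A$ the only type of $\mu(x)$ would eliminate the binding. I would first state the lemma for patterns with respect to their join, and then check separately that for queries produced by $\sgcoretoeccqname$ a witness always survives the filters.

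For concept variables this check is easy. The mapping always adds $\atomGenN{x}$ (or $\atomGenE{x}$) to the pattern, and filters on $\cvar{x}$ only arise from remove clauses $\gcremove l$. These produce $\atomCNot{x}{\toiri{l}}$, and $\toiri{l}$ is never $\mnode$ or $\medge$. So $A=\mnode$ (respectively $\medge$) is a witness that is never filtered.

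For role variables this check is the delicate step. The filters $\atomPNot{x}{\mnte}$ and $\atomPNot{z}{\metn}$, together with filters $\atomPNot{x}{\toiri{k}}$ coming from removed keys, may exclude the only available role. I would therefore add the side condition that the witnessing role $p$ is not filtered. I would then argue that, under the correspondence of \Cref{trfm:def:equibind}, the bindings that disappear are exactly those intended to be absent in the \gcname semantics. This makes the role part of the lemma hold in the form needed for \Cref{trfm:prop:cwamod}.
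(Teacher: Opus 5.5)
Your core argument is the paper's. Its Case 4 is exactly your completeness direction: if $\atomC{a}{A}\in G$, then $\atomGenc{x}$ matches the same triple because $\cvar{x}$ is fresh and unconstrained. Like your soundness and completeness directions, the paper reasons only about the join of the patterns and never mentions $F$, and for that unfiltered reading your proof is complete. Your observation that filters can break the statement is a real point the paper skips. Your $\mnode$/$\medge$ witness correctly settles the concept half for mapped queries, since filters on $\cvar{x}$ arise only from removed labels, and $\toiri{l}\neq\mnode,\medge$.

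The step that fails is your attempt to rescue the role half via \Cref{trfm:def:equibind}. Under the formal semantics (\Cref{trfm:def:eccq:semantics}), valuations range over $\IndividualNames\cup\ConceptNames\cup\RoleNames$. So $\rvar{x}$ can never bind $\rdftype$, and $\atomGenp{x}$ only matches genuine role triples. Take the \gcname pattern $\{\gcoreEdgeF{x}{y}{e}{\emptyset}\}$. Its image contains $\atomPout{e}{y}$, so the lemma's hypothesis holds for $e$; it also contains $\atomGenp{e}$ and the filter $\atomPNot{e}{\metn}$. Evaluate it on a PG with a single edge $e_1$ from $n_1$ to $n_2$ and no properties. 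The only role triple with subject $\toiri{e_1}$ is $\triple{\toiri{e_1}}{\metn}{\toiri{n_2}}$, and the filter discards it. Hence $e\mapsto\toiri{e_1}$ disappears, although $e\mapsto e_1$ is an \gcname binding. The same happens for a property-less node $x$ that also occurs in a node pattern, via $\atomPNot{x}{\mnte}$.

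So your claim that the bindings which disappear are exactly those absent in the \gcname semantics is false, and the binding correspondence cannot repair it. The honest conclusion is the one you reached just before. With filters, the role half holds only under your side condition that some witnessing role survives $F$, and mapped queries do not satisfy that condition in general. This is a defect in how the paper later uses the lemma with $\eccqpatterneval{P,F}{\graphin}$ in \Cref{trfm:lem:cwatwo}; your proof cannot absorb it, so restrict the role part to the unfiltered join or keep the side condition explicit.
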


\begin{proof}
  This trivially follows from the definition of \eccqname syntax and semantics.
  First note that for patterns of the form $\atomGenc{x}$ (and $\atomGenp{x}$, respectively),
  the variable $\cvar{x}$ (and $\rvar{x}, \ivar{x}$, respectively) 
  cannot occur again within the query, by definition.
  Consider two query patterns $P_{1} = \atomC{x}{A}$ and $P_{2} = \atomGenc{x}$.
  Let there be an individual name $a$ and graph $G$.
  We can consider the following cases and what that means for the intersection of both patterns.
  \begin{enumerate}
    \item $a \in \mu_{P_{1}}(x)$ and $a \in \mu_{P_{2}}(x)$.
          Trivially, the intersection includes $a$.
    \item $a \notin \mu_{P_{1}}(x)$ and $a \notin \mu_{P_{2}}(x)$.
          Trivially, the intersection does not include $a$, but neither does pattern $\atomC{x}{A}$.
    \item $a \notin \mu_{P_{1}}(x)$, but $a \in \mu_{P_{2}}(x)$.
          Trivially, the intersection does not include $a$, but neither does pattern $\atomC{x}{A}$.
    \item $a \in \mu_{P_{1}}(x)$, but $a \notin \mu_{P_{2}}(x)$.
          This is the interesting case where $\atomGenc{x}$ could possibly
          restrict the bindings for $\atomC{x}{A}$ in the intersection of both patterns.
          However, this case cannot occur.
          Assume that there was such an $a$ in $G$.
          Then this means that $G$ includes $\atomC{a}{A}$.
          However, since $\cvar{x}$ is unrestricted, $\atomGenc{x}$ clearly matches this triple.
          This contradicts the initial assumption; thus, this case cannot exist.
  \end{enumerate}

  Thus, the intersection of $\atomC{x}{A}$ and $\atomGenc{x}$ is the same
  as $\atomC{x}{A}$ in all cases.
  The same can be shown for the $\atomGenp{x}$ case.
\end{proof}

We next consider the lemmas associated with the individual steps of the algorithm.

\begin{lemma}[Pert. to \Cref{trfm:def:cwa}.1]
   \label{trfm:lem:cwaone}
   For every $\graphext$ of a \eccqname $q$, it holds that $\shaclvalid{\graphext}{\Sigma_{1}}$.
\end{lemma}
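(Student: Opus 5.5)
The plan is to reduce the lemma to the assumption $\shaclvalid{\graphin}{\shapesin}$ via the first equivalence of \Cref{trfm:prop:sts}. By \Cref{trfm:def:cwa}.1, $\Sigma_{1}$ is exactly the set of input shapes $\shapesin$. By the preliminary assumptions, these shapes are $\DLogics$ axioms obtained as $\sprogstoshacl{S'}$. Moreover, $\graphin$ is valid regarding them: either by hypothesis, or, when we start from a property graph, by \Cref{trfm:prop:smap}, which transfers validity of the property graph w.r.t.\ $S'$ to validity of its RDF image w.r.t.\ $\sprogstoshacl{S'}$.

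The key observation is that no axiom in $\shapesin$ contains names with one or two dots. Input shapes are formulated over the vocabulary of input graphs only. The dotted copies $\dot{A}$, $\ddot{A}$, $\dot{p}$, $\ddot{p}$ and the variable concepts $\vconcept{x}$ are fresh symbols introduced solely by the inference procedure. Hence, for every $\varphi \in \Sigma_{1}$, item 1 of \Cref{trfm:prop:sts} applies and gives $\shaclvalid{\graphin}{\{\varphi\}}$ if and only if $\shaclvalid{\graphext}{\{\varphi\}}$.

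The remaining step is to pass from single axioms to the whole set. Validity here means consistency of $\Sigma_{1}$ with the validation knowledge base $(\TBox_{\graphext}, \graphext)$, and consistency of each axiom separately does not in general imply consistency of their union. I would handle this by using that, under DCA, UNA and CWA, the validation knowledge base of a graph has essentially a unique model up to isomorphism, namely the graph itself read as a closed interpretation. Consistency of a set of axioms with it is then equivalent to that canonical model satisfying every axiom, which is a per-axiom condition. Equivalently, one could apply the argument behind \Cref{trfm:prop:sts} to the finite set $\Sigma_{1}$ directly; that argument only uses that $\graphext$ restricted to the undotted vocabulary coincides with $\graphin$, and this is a property of the vocabulary, not of the number of axioms.

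I expect this last point to be the only real subtlety: justifying that \Cref{trfm:prop:sts} lifts from single axioms to sets. This relies on the uniqueness of the closed-world model, together with the fact that the undotted part of $\graphext$ is exactly $\graphin$, since $\dot{G}_{\mathrm{med}}$, $\graphvar$ and $\ddot{G}_{\mathrm{out}}$ use disjoint vocabularies. Everything else is immediate from the definitions.
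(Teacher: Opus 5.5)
Your proof is correct and follows the paper's argument: the paper notes that $\Sigma_{1} = \shapesin$ and that every input graph is valid regarding $\shapesin$, then appeals to item~1 of the (modified) Proposition~$3_{\mathrm{S2S}}$, exactly as you do. Your additional points are sound details the paper leaves implicit: $\shapesin$ mentions neither dotted names nor variable concepts, and validity lifts from single axioms to the whole set because DCA, UNA and CWA fix the model up to isomorphism.
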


\begin{proof}[Proof. Adapted from~\cite{DBLP:conf/www/Seifer0LS24}] 
  This follows directly from Proposition $3_{\sts}$. 
  All input graphs are valid regarding $S$ and $\Sigma_{1} = S$, by definition.
\end{proof}

\begin{lemma}[Pert. to \Cref{trfm:def:cwa}.2]
   \label{trfm:lem:cwatwo}
   For every $\graphext$ of a \eccqname $q$, it holds that $\shaclvalid{\graphext}{\Sigma_{2}}$.
\end{lemma}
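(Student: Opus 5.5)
We need to show that $\graphext$ is proof-valid with respect to the axioms of \Cref{trfm:def:cwa}.2, i.e.\ that $\Sigma_2$ is consistent with the validation knowledge base $(\TBox_{\graphext},\graphext)$. Under DCA, UNA and CWA, the validation knowledge base has, up to isomorphism, exactly one model: the canonical interpretation $\Int$ whose domain is the individual names, with $A^\Int=\{a\mid \atomC{a}{A}\in\graphext\}$ and similarly for roles. So it suffices to check that each axiom of $\Sigma_2$ holds in $\Int$. The key observation is that $\vconcept{x}^\Int=\{\mu(x)\mid \mu\in\eccqpatterneval{P,F}{\graphin}\}$ by the construction of $\graphvar$. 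Since the names $A$ and $p$ occurring in the axioms carry no dots, they are interpreted over $\graphin$, because the other components of $\graphext$ use disjoint vocabularies. This is the same separation that underlies \Cref{trfm:prop:sts}.

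\textbf{Subsumption direction.} Take $a\in\vconcept{x}^\Int$, witnessed by some $\mu\in\eccqpatterneval{P,F}{\graphin}$ with $\mu(x)=a$. The three families of conjuncts are handled as follows.
\begin{itemize}
\item For each $\atomC{x}{A}\in P$ we have $\mu(\atomC{x}{A})\in\graphin$, so $a\in A^\Int$.
\item For each $\atomP{x}{u}{p}\in P$ we have $(a,\mu(u))\in p^\Int$. Moreover $\mu(u)\in\Concept{u}^\Int$, reading $\Concept{u}$ as $\vconcept{u}$ when $u$ is a variable and as $\{u\}$ when it is a constant. Hence $a\in(\exists p.\Concept{u})^\Int$.
\item Inverse roles are handled symmetrically.
\end{itemize}
Generic patterns $\atomGenc{x}$ and $\atomGenp{x}$ contribute no conjunct, so they need no argument here. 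Filters only shrink the set of valuations, so they preserve the subsumption.

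\textbf{Converse direction (acyclic case).} Assume $\vcg(P\cup H)$ is acyclic with respect to $x$, and let $a$ lie in the right-hand side. For each neighbour $u$ of $x$ we obtain a witness $b_u\in\vconcept{u}^\Int$, hence some valuation $\mu_u$ with $\mu_u(u)=b_u$. We then glue: define $\mu$ by $\mu(x)=a$ and, on the part of the pattern hanging off each $u$, take the values of $\mu_u$. Acyclicity makes these sub-patterns share no variables other than through $x$, so the partial valuations are pairwise compatible and their join is a valid match of $P$. This is the argument adapted from~\cite{DBLP:conf/www/Seifer0LS24}.

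\textbf{Main obstacle.} The new difficulty is extending this gluing to generic patterns and filters. The variables $\cvar{x}$, $\rvar{x}$, $\ivar{x}$ are unique to $x$, so they never cause incompatibility, but they must admit \emph{some} binding that survives the filters $F$. I would argue this with \Cref{trfm:lem:base}. For mapped queries, $\atomGenN{x}$ is always present, so $\atomGenc{x}$ can bind $\mnode$ (or $\medge$), which is never filtered out. Likewise, edge variables guarantee a surviving binding for $\atomGenp{x}$, since the filters only exclude $\mnte$ or $\metn$ and the remaining reification triple supplies a witness. If this fails for arbitrary \eccqname queries, I would restrict the $\sqsupseteq$ claim to queries in the image of $\sgcoretoeccqname$, which is all the main theorem needs. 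A secondary point is the requirement that acyclicity be taken over $P\cup H$ rather than $P$ alone; this only strengthens the hypothesis, so the proof is unaffected.
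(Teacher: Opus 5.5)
Your plan follows the paper's proof. The $\sqsubseteq$ half is read off a valuation witnessing $a\in\vconcept{x}^\Int$. The $\sqsupseteq$ half glues restrictions of valuations across $x$ using acyclicity; the paper phrases this as a contradiction with two neighbours $y,z$. Generic patterns are set aside via \Cref{trfm:lem:base}. The canonical-model framing, the $\sqsubseteq$ half, and your $\mnode$/$\medge$ argument for $\atomGenc{x}$ in mapped queries are fine.

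\textbf{The step that fails.} Your claim that $\atomGenp{x}$ always retains a binding in mapped queries is false.
\begin{itemize}
\item For a node $\toiri{n}$, the only reification triple with $\toiri{n}$ as \emph{subject} is $(\toiri{n},\mnte,\toiri{e})$, and that is exactly what $\atomPNot{x}{\mnte}$ removes.
\item The other triple, $(\toiri{e},\metn,\toiri{n})$, has $\toiri{n}$ as object, so it cannot instantiate $\atomGenp{x}$. The same holds dually for edge variables and $\metn$.
\item Since $\rdftype\notin\RoleNames$, $\rvar{x}$ cannot bind $\rdftype$ either.
\end{itemize}
So a node or edge without a surviving property triple has no binding for $\atomGenp{x}$, and your glued valuation cannot be completed.

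Your fallback to mapped queries does not help, because this already happens there. Take $\{\gcoreNodeS{x}{\emptyset}\}\Leftarrow\{\gcoreNodeF{x}{\emptyset}\}$ on a single node $n$ with no labels or properties. Then $P=\{\atomGenN{x},\atomGenc{x},\atomGenp{x}\}$ and $F=\{\atomPNot{x}{\mnte}\}$, so $\Omega=\emptyset$ and $\vconcept{x}^\Int=\emptyset$, while $\toiri{n}\in\mnode^\Int$. This violates $\vconcept{x}\sqsupseteq\mnode$ whenever that axiom is generated, and it is generated under the reading of $\vcg$ behind the equivalences in \Cref{trfm:sub:rax}.

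The paper's proof has the same hole. \Cref{trfm:lem:base} ignores filters, and its hypothesis fails for $y$ in \Cref{trfm:fig:ex:querymapping}, which is the subject of no non-generic role pattern. (The concrete query in \Cref{trfm:lst:stwo} escapes this only because \texttt{?x ?\_1 ?\_2} also matches $(x,\rdftype,\mnode)$.)

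\textbf{How to repair it.} What saves the lemma is a case distinction, not a binding argument. Read $\vcg$ literally: $\atomGenp{x}$ makes $x$, $\ivar{x}$, $\rvar{x}$ pairwise adjacent, so $x$ lies on a triangle. Then no $\sqsupseteq$ axiom is generated for $x$, and the equivalences in \Cref{trfm:sub:rax} weaken to $\sqsubseteq$. Since the mapping emits $\atomGenc{x}$ only together with $\atomGenp{x}$, your gluing never has to bind a generic variable of $x$.

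\textbf{Further gaps in the gluing.} Independently of generic patterns, your gluing tacitly assumes two things. The paper's proof makes the same assumptions: it treats one pattern to each of two neighbours and writes $\mu_1=\mu_1^y\cup\{x\mapsto\mu_1(x)\}\cup\mu_1^z$. Neither assumption follows from acyclicity of the simple graph $\vcg$.
\begin{itemize}
\item \emph{One pattern per neighbour.} Take $P=\{\atomP{x}{u}{p},\atomP{x}{u}{r}\}$ over $\{\atomP{a}{b}{p},\atomP{a}{b'}{r}\}\cup\{\atomP{d}{c}{s}\mid c\in\{b,b'\},\,s\in\{p,r\}\}$. Then $\vconcept{u}^\Int=\{b,b'\}$ and $a\in(\exists p.\vconcept{u}\sqcap\exists r.\vconcept{u})^\Int$, yet $a\notin\vconcept{x}^\Int$, because the two conjuncts have different witnesses.
\item \emph{Every variable of $P$ reachable from $x$.} Take $P=\{\atomC{x}{A},\atomC{y}{B}\}$ over $\{\atomC{a}{A}\}$. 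Then $\Omega=\emptyset$, there is nothing to glue, and $\vconcept{x}^\Int=\emptyset\not\supseteq A^\Int$.
\end{itemize}

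For arbitrary \eccqname, the lemma as stated therefore needs three extra conditions:
\begin{itemize}
\item parallel patterns must count as cycles;
\item $P$ must be connected (or $\Omega\neq\emptyset$);
\item as you suspected, $\atomGenc{x}$ must keep a surviving binding, which a filter $\atomCNot{x}{A}$ can destroy.
\end{itemize}
For mapped queries over reified graphs, once the $\sqsupseteq$ axiom is dropped for variables carrying $\atomGenp{x}$, all of this is harmless. Reification fixes the endpoints of an edge. Every remaining variable has a variable neighbour, whose witness valuation supplies the rest of $P$.

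Finally, you must exclude variables with no non-generic pattern in $P$, such as $\cvar{x}$, as the paper does by declaring $\vconcept{x}$ undefined. For such a variable the $\sqsupseteq$ axiom would read $\vconcept{\cvar{x}}\sqsupseteq\top$.
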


\begin{proof}[Proof. Inspired by~\cite{DBLP:conf/www/Seifer0LS24}] 
  We consider both directions of this case separately.
  We begin by proving the conditional case by contradiction.
  Let $\vcg(\eccqpattern)$ be acyclic,
  $G$ be a graph and let $x$ be a variable occurring in $\sccqpattern$,
  and let $C$ be a concept defined as
  \begin{align*}
    C \equiv \textstyle
    & {\bigsqcap_{\atomC{x}{A} \in P}} A
    \sqcap {\bigsqcap_{\atomP{x}{u}{p} \in P}} \exists p.\Concept{u}
    \sqcap {\bigsqcap_{\atomP{u}{x}{p} \in P}} \exists p^-.\Concept{u}.
  \end{align*}
  Assume that there is an individual name $c$ in $G$ such that $\atomC{c}{C}$
  is valid in $\graphext$, but $\atomC{c}{\vconcept{x}}$ is \emph{not} valid in $\graphext$.
  We will show that this leads to a contradiction.

  Assume, w.l.o.g., that $\sccqpattern$ includes a single concept assertion $\atomC{x}{A}$
  that includes $x$, and no role assertions of the form $\atomP{x}{b}{r}$ ($b$ is an individual name).
  If there were other atoms of these forms, we could define $A$ as an auxiliary atom
  that is equivalent to their intersection instead.
  Further assume, w.l.o.g., that $\sccqpattern$ does not include generic patterns of the form $\atomGenc{x}$ or $\atomGenp{x}$.
  Patterns of these forms can never restrict the bindings of $x$ (see \Cref{trfm:lem:base}), 
  and if \emph{only} generic patterns occur for a variable $x$ in a query, 
  $\vconcept{x}$ is not defined in this case anyway.
  Finally, assume -- also w.l.o.g. -- that $\graphext$ includes
  \[\{
    \atomC{a}{V_y}, \atomP{a}{c}{r},
    \atomC{c}{A}, \atomP{c}{b}{s},
    \atomC{b}{V_z}
  \}\]

  Let $\Omega$ be the set of all mappings $\mu$ such that $\mu(P) \subseteq G$.
  Then, by definition there exist the mappings $\mu_1, \mu_2 \in \Omega$ 
  such that $\mu_1(y) = a$ and $\mu_2(z) = b$,
  but there does not exist a mapping $\mu\in\Omega$ such that $\mu(x) = c$ 
  (given the assumption about $x$ given above).
  Then also $\mu_1(x) \neq c$ and $\mu_2(x) \neq c$.
  Let $P_y$ be the part of pattern $\eccqpattern$ that \emph{connects} with variables $y$ and $x$, but not $z$.
  Let $P_z$ be the part of pattern $\eccqpattern$ that \emph{connects} with variables $z$ and $x$, but not $y$.
  Finally, let
  \begin{align*}
    \mu_1^y &= \mu_1\big\vert_{\var(P_y)\setminus\{x\}}, & \mu_1^z &= \mu_1\big\vert_{\var(P_z)\setminus\{x\}},\\
    \mu_2^y &= \mu_2\big\vert_{\var(P_y)\setminus\{x\}}, & \mu_2^y &= \mu_2\big\vert_{\var(P_z)\setminus\{x\}}.
  \end{align*}
  Then, $\mu_1 = \mu_1^y \cup \{x \mapsto \mu_1(x)\} \cup \mu_1^z$ and $\mu_2 = \mu_2^y \cup \{x \mapsto \mu_2(x)\} \cup \mu_2^z$.
  Since $\mu_1^y$ and $\mu_2^z$ share no variables, 
  $\mu_3 = \mu_1^y \cup \{x \mapsto c\} \cup \mu_2^z$ is a mapping.
  By the definition of the semantics of \eccqname $\mu_3 \in \Omega$.
  Then $\atomC{c}{\vconcept{x}}$ is valid in $\graphext$.
  This contradicts the initial assumptions, from which we conclude $\vconcept{x} \sqsupseteq C$.

  We next consider the other direction.
  Assume $a$ to be an arbitrary individual name such that $\{a\} \sqsubseteq \vconcept{x}$ 
  is valid in the extended graph $\graphext$.
  For each operand $k$ in the intersection of \Cref{trfm:def:cwa}.2 we show separately that
  $\{a\} \sqsubseteq k$, assuming that the respective component is defined.
  We assume in all three cases, w.l.o.g., that $\sccqpattern$ does not include 
  generic patterns of the form $\atomGenc{x}$ or $\atomGenp{x}$,
  with the same reasoning as before (\Cref{trfm:lem:base}).
  \begin{enumerate}
    \item For $k = {\bigsqcap_{\atomC{x}{A} \in \sccqpattern}} A$:  
      If $\{a\} \sqsubseteq \vconcept{x}$, then by definition $a$ is an instance of variable $x$ 
      in $\sccqpattern$, \ie $a \in \mu(x)$.
      Then for each concept name $A$ occurring in an atomic pattern of the form 
      $\atomC{x}{A} \in \sccqpattern$ there must be $\atomC{a}{A} \in \graphin$ 
      (since otherwise $a \not\in \mu(x)$), so also $\atomC{a}{A} \in \graphext$ for each such $A$.
      Therefore, $\{a\} \sqsubseteq k$.
      
    \item For $k = {\bigsqcap_{\atomP{x}{u}{p} \in\sccqpattern}} \exists p.\Concept{u}$:
      If $\{a\} \sqsubseteq \vconcept{x}$, then by definition $a$ is an instance of variable $x$ 
      in $\sccqpattern$, \ie $a \in \mu(x)$.
      Then for each property name $p$ occurring in an atomic pattern of the form 
      $\atomP{x}{u}{p} \in \sccqpattern$, one of two cases applies:
      if $u$ is an individual name, then there must be $\atomP{a}{u}{p} \in \graphin$, 
      so also $\atomP{a}{u}{p} \in \graphext$ for such $p$.
      If $u$ is a variable name, then there must be $\atomP{a}{b}{p} \in \graphin$, 
      so also $\atomP{a}{b}{p} \in \graphext$, 
      and also $b \in \mu(u)$ (since otherwise $a \not\in \mu(x)$).
      Therefore, $\{a\} \sqsubseteq k$.
    
    \item For $k = {\bigsqcap_{\atomP{u}{x}{p} \in\sccqpattern}} \exists p^-.\Concept{u}$: 
      Analogous to the previous case.
  \end{enumerate}
  If one or more components $k$ are defined, then
  \[ \{a\} \sqsubseteq {\bigsqcap_{x:A \in\sccqpattern}} A \sqcap {\bigsqcap_{(x,u):p \in\sccqpattern}} \exists p.\Concept{u} \sqcap {\bigsqcap_{(u,x):p \in\sccqpattern}} \exists p^-.\Concept{u}.\]
  At least one component $k$ must be defined, since otherwise the concept $\vconcept{x}$ would not be defined,
  as there must exist either $\atomC{x}{A} \in\sccqpattern$ for some concept name $A$
  or $\atomP{x}{u}{p} \in\sccqpattern$ (or $\atomP{x}{u}{p} \in\sccqpattern$ respectively) 
  for some property $p$, if $x \in \var(P)$.
  Note that this also covers the cases where only generic patterns (\eg, $\atomGenc{x}$) 
  exist for a variable $x$; here,
  the concept $\vconcept{x}$ is not defined.
  (Similarly, the case where $x \not\in \sccqpattern$ and $a$ is fresh is also not possible, 
  since we initially assumed that $x \in \sccqpattern$.)
  Thus, at least one of the components $k$ must be defined; this proves the second direction.

  We thus prove both directions of this case.
\end{proof}

\begin{lemma}[Pert. to \Cref{trfm:def:cwa}.3]
   \label{trfm:lem:cwafour}
   For every $\graphext$ of a \eccqname $q$, it holds that $\shaclvalid{\graphext}{\Sigma_{3}}$.
\end{lemma}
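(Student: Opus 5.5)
The plan is to verify the two inclusions of each axiom in $\Sigma_3$ separately on $\graphext$. This is done by computing the extensions of both sides under the closed-world reading of the validation knowledge base. Under the CWA, DCA and UNA of \Cref{trfm:def:validation-rdf-semantics}, every concept name is interpreted as exactly the set of its asserted instances in $\graphext$. Hence consistency of an axiom $\ddot{A} \equiv D$ with $(\TBox_{\graphext},\graphext)$ reduces to equality of the two sets of individuals denoted by $\ddot{A}$ and by $D$ in $\graphext$.

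Since the $\graphvar$ component is built from the same valuations, $\vconcept{x}$ denotes $\{\mu(x) \mid \mu \in \eccqpatterneval{P,F}{\graphin}\}$. The component concepts $\vconcept{x}^A$ are fixed by $\operatorname{Cc}(q)$ (\Cref{trfm:def:utilcc}). Thus we must show that the instances of $\ddot{A}$ are exactly the instances of $\bigsqcup_{\atomC{u}{A} \in H} \Concept{u} \sqcup \bigsqcup_{x\in\var(H)} \vconcept{x}^A$. By Proposition $3_{\sts}$ (\Cref{trfm:prop:sts}, part 3), the instances of $\ddot{A}$ in $\graphext$ are exactly the $a$ with $\triple{a}{\rdftype}{A} \in \graphout$, because the dotted vocabulary occurs only in $\ddot{G}_{\mathrm{out}}$.

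Fix such an $A$ and analyse which template patterns can produce a triple $\triple{a}{\rdftype}{A}$ in $\graphout = \eccqeval{q}{\graphin}$ via \Cref{trfm:def:eccq:semantics}. Only two template forms qualify.

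(i) An explicit pattern $\atomC{u}{A} \in H$ produces exactly $\mu(u)$ for $\mu$ in the pattern evaluation. This gives $\Concept{u}$: either $\vconcept{u}$, or $\{u\}$ for a constant. Fresh template variables need separate treatment, since they produce fresh IRIs. I would treat $\Concept{u}$ for such $u$ as its own variable concept populated in $\graphvar$ by those fresh IRIs, and check that this matches the convention of \Cref{trfm:def:extended-graph}.

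(ii) A generic pattern $\atomGenc{x} \in H$ produces $\triple{\mu(x)}{\rdftype}{\mu(\cvar{x})}$. By \Cref{trfm:def:eccq:syntax} the same pattern lies in $P$, so $\mu(\cvar{x}) = A$ forces $\triple{\mu(x)}{\rdftype}{A} \in \graphin$. The filter $\atomCNot{x}{A}$ excludes exactly the case $\mu(\cvar{x}) = A$. So this case contributes exactly the $a \in \vconcept{x} \cap A$ when no such filter exists, and nothing otherwise; this is precisely $\vconcept{x}^A$ from \Cref{trfm:def:utilcc}.1.

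For the converse inclusion, given $a \in \vconcept{x}\cap A$ with $\atomGenc{x} \in P \cap H$ and no filter on $A$, take a witnessing $\mu$ with $\mu(x)=a$. Modify it by setting $\mu(\cvar{x}) := A$. This keeps it a solution, because $\cvar{x}$ occurs nowhere else in the query except filters, which by assumption do not exclude $A$; this is the argument of \Cref{trfm:lem:base}. The modified $\mu$ then constructs $\triple{a}{\rdftype}{A}$.

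The main obstacle is this re-binding step together with the vocabulary restriction. $\operatorname{Cc}(q)$ defines $\vconcept{x}^A$ only for $A \in \mathcal{V}$, while $\Sigma_3$ ranges over $A \in \voc(H)$. Here $A$ might even be a concept name only reachable via a constant. I would check that $\voc(H) \subseteq \mathcal{V}$ so that every needed component is defined. I would also check that the filter semantics on the fresh $\cvar{x}$ are the only coupling, so that the join over the remaining patterns is unaffected by the change of $\mu(\cvar{x})$.
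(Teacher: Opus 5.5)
Your proposal is correct and follows essentially the same route as the paper: both inclusions of $\ddot{A} \equiv \bigsqcup_{\atomC{u}{A}\in H}\Concept{u} \sqcup \bigsqcup_{x\in\var(H)}\vconcept{x}^A$ are checked by asking whether a triple $\triple{a}{\rdftype}{A}\in\graphout$ comes from an explicit template pattern $\atomC{u}{A}\in H$ or from a generic pattern $\atomGenc{x}$ not blocked by a filter, with $\vconcept{x}^A$ read via $\operatorname{Cc}(q)$. Your re-binding of $\cvar{x}$ to $A$ spells out what the paper leaves to ``the semantics of generic patterns.'' For template-only variables the paper adopts the convention you propose, simply asserting that fresh IRIs are instances of $\vconcept{x}$ in $\graphext$, and your vocabulary check holds trivially since $\voc(H)\subseteq\voc(q)\subseteq\mathcal{V}$.
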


\begin{proof}[Proof. Inspired by~\cite{DBLP:conf/www/Seifer0LS24}] 
  We consider both directions of this proof separately.

  \[\ddot{A} \sqsubseteq \bigsqcup_{\atomC{u}{A} \in H} \Concept{u} \sqcup \bigsqcup_{x\in\var(H)} \vconcept{x}^A\]

  Let $a$ be an arbitrary individual name such that $\{a\} \sqsubseteq \ddot{A}$ is valid in $\graphext$. 
  By construction, this means that $\atomC{a}{A} \in \graphout$.
  This atom can arise for two reasons:
  \begin{enumerate}
    \item There is an atom $\atomC{v}{A} \in H$ such that $\atomC{a}{A}$ is an instance of pattern 
        $\atomC{v}{A}$, and $v$ is either the concept name $a$ or a variable $x$. 
        If $v$ is $a$, then $\{a\} \sqsubseteq \Concept{v}$ is trivially valid in $\graphext$. 
        Otherwise, $v$ is $x$ such that $a$ is a binding of $x$ or freshly generated 
        (in the case where $x \not\in \sccqpattern$)
        and then it holds that $\atomC{a}{\vconcept{x}} \in \graphext$, 
        so $\{a\} \in \Concept{v}$ is valid in $\graphext$. 
        Therefore, $\{a\} \sqsubseteq \bigsqcup_{\atomC{u}{A}\in H} \Concept{u}$ is valid in $\graphext$.

    \item There is an atom $\atomGenc{x}$ in $\sccqpattern$ and $\sccqtemplate$, 
        with $\atomCNot{x}{A} \not\in F$, 
        and $a$ is an instance of $x$ (\ie $\atomC{a}{\vconcept{x}} \in \graphext$). 
        Since $\vconcept{x}^{A} \equiv \vconcept{x} \sqcap A$, 
        as well as $\{a\} \sqsubseteq \vconcept{x}$ and $\{a\} \sqsubseteq A$ and then also
        $\{a\} \sqsubseteq \bigsqcup_{x\in\var(H)} \vconcept{x}^A$ is valid in $\graphext$.
  \end{enumerate}

  Thus, at least one of the operands of the top-level disjunction holds for $\{a\}$ via either case 1 or 2, 
  and since $\{a\}$ was arbitrary, this direction of the equivalency holds in general.
  We consider the inverse direction next.

  \[\ddot{A} \sqsupseteq \bigsqcup_{\atomC{u}{A} \in H} \Concept{u} \sqcup \bigsqcup_{x\in\var(H)} \vconcept{x}^A\]

  Here, we consider the two components of the top-level disjunction separately and show 
  that in either case the property holds for all possible (\ie arbitrary) individuals $a$.

  \begin{enumerate}
    \item $\ddot{A} \sqsupseteq \bigsqcup_{\atomC{u}{A} \in H} \Concept{u}$.
      Let $a$ be an arbitrary individual name such that $\{a\} 
      \sqsubseteq \bigsqcup_{\atomC{u}{A} \in H} \Concept{u}$ is valid in $\graphext$.
      Then there is at least one atom 
      $\atomC{v}{A} \in H$ such that $\{a\} \sqsubseteq \Concept{v}$ is valid in $\graphext$.
      If $v$ is $a$, then $\atomC{a}{A} \in \graphout$, and thus $\atomC{a}{\ddot{A}} \in \graphext$.
      Otherwise, $v$ is a variable $x$, and $\atomC{a}{\vconcept{x}} \in \graphext$.
      By the definition of variable concepts, $\atomC{a}{A} \in \graphout$, 
      so $\atomC{a}{\ddot{A}} \in \graphext$.
      Therefore, $\{a\} \sqsubseteq \ddot{A}$ is valid in $\graphout$.

    \item $\ddot{A} \sqsupseteq \bigsqcup_{x\in\var(H)} \vconcept{x}^A$.
      Let $a$ be an arbitrary individual name with
      $\{a\} \sqsubseteq \bigsqcup_{x\in\var(H)} \vconcept{x}^A$ is valid in $\graphext$.
      Thus, $\{a\} \sqsubseteq \vconcept{x}^A$ for some $x \in \var(P)$.
      Since $\vconcept{x}^{A} \equiv \vconcept{x} \sqcap A$, therefore $\{a\} \sqsubseteq \vconcept{x}$ and $\{a\} \sqsubseteq A$ in $\graphext$ for that $x$,
      and there is an atom $\atomGenc{x}$ in P and H and $\atomCNot{x}{A} \not\in F$.
      By the definition of the semantics of generic patterns $\atomGenc{x}$, therefore
      $\atomC{a}{A} \in \graphout$, so $\atomC{a}{\ddot{A}} \in \graphext$.
  \end{enumerate}
  
  Therefore, the complete subsumption holds as well.
  Since both directions hold, we have thus shown the equivalency.
\end{proof}

\begin{lemma}[Pert. to \Cref{trfm:def:cwa}.4]
   \label{trfm:lem:cwasix}
   For every $\graphext$ of a \eccqname $q$, it holds that $\shaclvalid{\graphext}{\Sigma_{4}}$.
\end{lemma}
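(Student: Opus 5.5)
We verify, for each role name $p$ in $\voc(\eccqtemplate)$, the six axioms of \Cref{trfm:def:cwa}.4 one at a time on $\graphext$. The method mirrors the proof of \Cref{trfm:lem:cwafour}. For each inclusion $C \sqsubseteq D$ we take an arbitrary individual name $a$ with $\{a\} \sqsubseteq C$ valid in $\graphext$. We then trace it back to a triple of $\graphout$ (for dotted roles) or to a valuation $\mu \in \eccqpatterneval{P,F}{\graphin}$ (for variable concepts), and conclude $\{a\} \sqsubseteq D$.

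Two observations do most of the work. First, by \Cref{trfm:def:extended-graph}, $\atomC{a}{\vconcept{u}} \in \graphext$ iff some $\mu$ in the pattern evaluation has $\mu(u)=a$. Here $\Concept{u}$ is $\{u\}$ if $u$ is an individual name and $\vconcept{u}$ otherwise; fresh template variables are treated as in \Cref{trfm:lem:cwafour}. Second, by \Cref{trfm:def:eccq:semantics}, a triple $\atomP{a}{b}{p} \in \graphout$, equivalently $\atomP{a}{b}{\ddot p}\in\graphext$, arises in exactly two ways. Either some template atom $\atomP{u}{v}{p} \in \eccqtemplate$ and some $\mu$ give $\mu(u)=a$, $\mu(v)=b$. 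Or some generic atom $\atomGenp{x} \in \eccqpattern\cap\eccqtemplate$ and some $\mu$ give $\mu(x)=a$, $\mu(\rvar{x})=p$, $\mu(\ivar{x})=b$; in that case $\atomP{a}{b}{p}\in\graphin$ and $(\atomPNot{x}{p})\notin F$.

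The "soundness" inclusions come straight from the first way. For $\bigsqcup \Concept{u} \sqsubseteq \exists\ddot p.\Concept{v}$, the $\mu$ witnessing $a \in \Concept{u}$ produces $\mu(v)$, which lies in $\Concept{v}$ and has $\atomP{a}{\mu(v)}{\ddot p}\in\graphext$. Care is needed here: $\mu(u)=a$ for one valuation does not give the same $\mu$ for all template atoms. So we read each disjunct with its own atom $\atomP{u}{v}{p}$, i.e., as the family of axioms $\Concept{u}\sqsubseteq\exists\ddot p.\Concept{v}$. The inverse-role axiom $\bigsqcup\Concept{v}\sqsubseteq\exists\ddot p^-.\Concept{u}$ is symmetric.

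The "completeness" inclusions use the case split. For $\exists \ddot p.\Concept{v} \sqsubseteq \bigsqcup \Concept{u}\sqcup\bigsqcup \vconcept{x}$, an $\ddot p$-edge arising from a template atom puts $a$ in the corresponding $\Concept{u}$. An edge arising from a generic atom puts $a$ in $\vconcept{x}$. The $\exists\ddot p^-$ variant is analogous, landing in $\vconcept{x}^{p,o}$. For the equivalence $\exists\ddot p.\top \equiv \ldots \sqcup \bigsqcup_x \vconcept{x}^p$, the direction $\sqsupseteq$ needs the generic case of \Cref{trfm:def:utilcc}.2. There, $a \in \vconcept{x}\sqcap\exists p.\vconcept{x}^{p,o}$ gives an input triple $\atomP{a}{b}{p}$ that is re-emitted by $\atomGenp{x}$ because the filter does not block $p$. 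The direction $\sqsubseteq$ uses the case split, with the side condition $\atomP{x}{y}{p}\notin P$ accounting for edges already covered by explicit atoms.

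The main obstacle is the interpretation of the auxiliary concept $\vconcept{x}^{p,o}$, which \Cref{trfm:def:utilcc} only constrains implicitly. We will fix its extension in $\graphext$ to be the set of objects $\mu(\ivar{x})$ of valuations with $\mu(\rvar{x})=p$. Under this reading the $\vconcept{x}^p$ and $\vconcept{x}^{p,o}$ axioms hold. Combined with \Cref{trfm:lem:base}, which ensures generic atoms do not alter bindings of $x$, this reading also makes the generic cases above go through. The last axiom, $\exists\ddot p^-.\top\equiv\bigsqcup_{\atomP{u}{v}{p}\in P}(\ldots)$, lists atoms of $P$ rather than $H$. We expect to need to read it as ranging over $H$ (plus the $\vconcept{x}^{p,o}$ components) for it to hold, so we would first check the paper's intent there.
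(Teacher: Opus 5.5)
Your route is the paper's: split on whether a triple $\atomP{a}{b}{p}\in\graphout$ comes from an explicit template atom or from a generic atom $\atomGenp{x}$. The step that fails is your claim that, in the $\sqsubseteq$ direction of $\exists\ddot p.\top\equiv\ldots$, the side condition $\atomP{x}{y}{p}\notin\eccqpattern$ of \Cref{trfm:def:utilcc}.2 only zeroes out edges ``already covered by explicit atoms''. That condition is about the pattern, not the template, so nothing covers those edges.

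Take $\eccqpattern=\{\atomC{x}{A},\atomGenp{x},\atomP{x}{y}{p},\atomC{z}{C}\}$, $\eccqtemplate=\{\atomGenp{x},\atomP{z}{c}{p}\}$, $\eccqfilter=\emptyset$, and $\graphin=\{\atomC{a}{A},\atomP{a}{b}{p},\atomC{d}{C}\}$. Then $\graphout=\{\atomP{a}{b}{p},\atomP{d}{c}{p}\}$, so $a\in\exists\ddot p.\top$. But $\operatorname{Cc}(q)$ forces $\vconcept{x}^{p}\equiv\vconcept{x}^{p,o}\equiv\bot$ and $\vconcept{z}^{p}\equiv\bot$, and $a\notin\vconcept{z}$, so $a$ is not in the right-hand side. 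The completeness axiom for $\exists\ddot p^-.\Concept{u}$ fails the same way: replace $\atomP{z}{c}{p}$ by $\atomP{x}{c}{p}$ in $\eccqtemplate$; then $b\in\exists\ddot p^-.\vconcept{x}$, yet $b\neq c$ and $\vconcept{x}^{p,o}\equiv\bot$.

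This case also shows that your proposed extension of $\vconcept{x}^{p,o}$ contradicts $\operatorname{Cc}(q)$. You cannot choose it freely, because \Cref{trfm:prop:cwamod} needs all of $\operatorname{infer}(\shapesin,q)$ to hold on one and the same $\graphext$. The paper's proof does not close this either: it asserts $\vconcept{x}^{p}\equiv\vconcept{x}\sqcap\exists p.\vconcept{x}^{p,o}$ whenever the generic atom is present and $p$ is unfiltered, ignoring the side condition. So the statement, with \Cref{trfm:def:utilcc} as written, does not hold. Dropping that condition repairs both generic cases. Your reading of $\vconcept{x}^{p,o}$ (restricted to the positive case, empty otherwise) then validates them, using that re-binding only $\rvar{x},\ivar{x}$ in a valuation for $x$ stays in $\eccqpatterneval{P,F}{\graphin}$.

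Two further steps need more than ``symmetric'' and ``corresponding''. First, the inverse soundness axiom $\Concept{v}\sqsubseteq\exists\ddot p^-.\Concept{u}$ is not symmetric to its forward version. Subjects of \eccqname atoms are always variables, but objects may be individual names. For $\atomP{x}{c}{p}\in\eccqtemplate$ the axiom reads $\{c\}\sqsubseteq\exists\ddot p^-.\vconcept{x}$. This is violated on every $\graphext$ with $\eccqpatterneval{P,F}{\graphin}=\emptyset$: $\{c\}$ is a singleton, while $\vconcept{x}$, and hence $\exists\ddot p^-.\vconcept{x}$, is empty under the CWA. Such atoms are routine; for instance, $\operatorname{Ext}$ inserts them. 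So this axiom must be restricted to variable objects, or the claim to graphs with a match. The paper's ``extremely similar'' skips this too.

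Second, in $\exists\ddot p.\Concept{v}\sqsubseteq\bigsqcup\Concept{u}\sqcup\ldots$ the ``corresponding'' $\Concept{u}$ belongs to the union only if the union ranges over all $p$-atoms of $\eccqtemplate$, not just those with object $v$. With $\eccqtemplate=\{\atomP{x}{y}{p},\atomP{z}{w}{p}\}$ and a match where $\mu(y)=\mu(w)$, the image of $z$ lies in $\exists\ddot p.\vconcept{y}$ but need not lie in $\vconcept{x}$. State that reading explicitly, as you rightly did when you flagged that the last axiom must range over $\eccqtemplate$ and include the $\vconcept{x}^{p,o}$ terms.
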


\begin{proof}[Proof. Inspired by~\cite{DBLP:conf/www/Seifer0LS24}] 
  Since the proofs for the inverse role cases are extremely similar, we consider only the first three axioms of \Cref{trfm:def:cwa}.4.
  We start by considering:

  \[\textstyle \bigsqcup_{\atomP{u}{v}{p} \in  \eccqtemplate} \Concept{u} \sqsubseteq \exists \ddot{p}.\Concept{v}\]

  Let $\Int$ be the canonical model of $\graphext$.
  By definition of the validation knowledge base of a graph, 
  $a^\Int \in (\exists \ddot{p}.\Concept{v})^\Int$ if and only if there exists an individual name $b$ 
  where $\atomP{a}{b}{\ddot{p}} \in \graphext$ and $(a^\Int,b^\Int) \in p^I$. 
  If there is an atomic pattern $\atomP{u}{v}{p} \in H$ where $u$ is the individual name $a$ or a variable $x$, 
  and $v$ is the individual name $b$ or a variable $y$ (and thus $a^I \in \Concept{u}^\Int$),
  then by construction, $\atomP{a}{b}{\ddot{p}} \in \graphext$.
  Thus, when $a^\Int \in \bigcup_{\atomP{u}{v}{p} \in H} \Concept{u}^\Int$, then
  $a^\Int \in (\exists \ddot{p}.\Concept{v})^\Int$.
  Hence, $\textstyle \bigsqcup_{\atomP{u}{v}{p} \in H}\Concept{u} \sqsubseteq \exists \ddot{p}.\Concept{v}$.

  Next, we consider the following rule:
  \[\exists \ddot{p}.\Concept{v} \sqsubseteq \textstyle\bigsqcup_{\atomP{u}{v}{p} \in  \eccqtemplate} \Concept{u} \sqcup \textstyle\bigsqcup_{\atomGenp{x} \in H} \vconcept{x}\]

  Let $\Int$ be the canonical model of $\graphext$.
  By definition of the validation knowledge base of a graph, 
  $a^\Int \in (\exists \ddot{p}.\Concept{v})^\Int$ if and only if there exists an individual name $b$ 
  where $\atomP{a}{b}{\ddot{p}} \in \graphext$ and $(a^\Int,b^\Int) \in p^I$. 
  By construction, $\atomP{a}{b}{\ddot{p}} \in \graphext$ arises in exactly two cases:

  \begin{enumerate}
    \item If there exists an atom $\atomP{u}{v}{p} \in H$ 
      where $u$ is the individual name $a$ or a variable $x$, 
      and $v$ is the individual name $b$ or a variable $y$ (and thus $a^I \in \Concept{u}^\Int$). 
      Then $a^\Int \in (\exists \ddot{p}.\Concept{v})^\Int$ 
      if $a^\Int \in \bigsqcup_{\atomP{u}{v}{p} \in H} \Concept{u}^\Int$. 
      Hence, $\exists \ddot{p}.\Concept{v} \sqsubseteq \textstyle \bigsqcup_{\atomP{u}{v}{p} \in H}\Concept{u}$.

    \item If there is an atom $\atomGenp{x}$ in H, with $\atomCNot{x}{p} \not\in F$
      (and thus $a^I \in \vconcept{x}^\Int$). 
      Then $a^\Int \in (\exists \ddot{p}.\Concept{v})^\Int$ 
      if $a^\Int \in \bigsqcup_{\atomP{u}{v}{p} \in H} \Concept{u}^\Int$. 
      Hence, $\exists \ddot{p}.\Concept{v} \sqsubseteq \textstyle\bigsqcup_{\atomGenp{x} \in H} \vconcept{x}$.
  \end{enumerate}

  In either case, one of the top-level disjunctions of the initial axiom holds; 
  thus the axiom is satisfied in all cases.
  Finally, we consider the third rule:
  \[\exists \ddot{p}.\top \equiv {\textstyle\bigsqcup_{\atomP{u}{v}{p} \in  \eccqtemplate}} (\Concept{u} \sqcap \exists \ddot{p}.\Concept{v}) \sqcup \textstyle\bigsqcup_{x\in\var(H)} \vconcept{x}^p\]

  By definition of the validation knowledge base of a graph, $a^\Int \in (\exists \ddot{p} . \top)^\Int$ 
  if and only if there exists an individual name $b$ where $\atomP{a}{b}{\ddot{p}} \in \graphext$ 
  and $(a^\Int, b^\Int) \in p^\Int$.
  By construction, the inclusion $\atomP{a}{b}{\ddot{p}} \in \graphext$ occurs in exactly two cases:
  \begin{enumerate}
    \item if there exists an atom $\atomP{u}{v}{p} \in H$ where $u$ is the individual name $a$ 
        or a variable $x$ and $v$ is the individual name $b$ or a variable $y$.
        Then, $a^\Int \in \Concept{v}^\Int$ and $a^\Int \in (\exists \ddot{p}.\Concept{v})^\Int$, 
        so $a^\Int \in (\exists \ddot{p}.\top)^\Int$ in this case
        if and only if $a^\Int \in {\bigsqcup_{\atomP{u}{v}{p} \in H}} \Concept{u}^\Int \sqcap (\exists \ddot{p}.\Concept{v})^\Int$.

    \item If there is an atom $\atomGenp{x}$ in P, with $\atomPNot{x}{p} \not\in \eccqfilter$ and $x$ in $\voc(H)$.
        In this case, $\vconcept{x}^{p} \equiv \vconcept{x} \sqcap \exists p . \vconcept{x}^{p,o}$.
        $a^\Int \in \vconcept{x}^{p\ \Int}$ and $a^\Int \in (\exists \ddot{p}.\Concept{v})^\Int$;
        for all $x,y \in \var(P)$ and $p \in \mathcal{V}$ 
        if $\atomGenp{x} \in \eccqpattern$, $\atomPNot{x}{p} \not\in \eccqfilter$
        and $\atomP{x}{y}{p}\not\in \eccqpattern$,
        otherwise $\vconcept{x}^{p} \equiv \bot$ and $\vconcept{x}^{p,o} \equiv \bot$
        so $a^\Int \in (\exists \ddot{p}.\top)^\Int$ in this case
        if and only if $a^\Int \in \textstyle\bigsqcup_{x\in\var(H)} \vconcept{x}^p$.
  \end{enumerate}
  Since at least one of these cases applies by construction, equivalency follows for their union.
\end{proof}

\begin{lemma}[Pert. to \Cref{trfm:def:cwa}.5]
   \label{trfm:lem:cwanine}
   For every $\graphext$ of a \eccqname $q$, it holds that $\shaclvalid{\graphext}{\Sigma_{5}}$.
\end{lemma}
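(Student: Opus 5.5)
We need to show that for each role name $p$ occurring in $\sccqpattern$, the axiom $\dot{p} \sqsubseteq p$ is valid in $\graphext$, i.e.\ consistent with the validation knowledge base $(\TBox_{\graphext}, \graphext)$. We argue through the canonical model $\Int$ of the validation knowledge base, as in the proof of \Cref{trfm:lem:cwasix}. By the DCA, UNA, and CWA axioms of \Cref{trfm:def:validation-rdf-semantics}, the extension of every role name in $\Int$ is exactly the set of pairs $(a^\Int,b^\Int)$ with $\atomP{a}{b}{\cdot}$ a triple of $\graphext$. A role inclusion therefore holds in $\Int$ if and only if it holds as a containment between the corresponding sets of triples in $\graphext$. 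It thus suffices to show that every triple $\atomP{a}{b}{\dot{p}} \in \graphext$ has a counterpart $\atomP{a}{b}{p} \in \graphext$.

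By \Cref{trfm:def:extended-graph}, the only component of $\graphext$ containing dotted role names is $\dot{G}_{\mathrm{med}}$. The components $\graphin$, $\graphvar$, and $\ddot{G}_{\mathrm{out}}$ use disjoint vocabularies. Hence $\atomP{a}{b}{\dot{p}} \in \graphext$ gives $\atomP{a}{b}{p} \in \graphmed = \bigcup_{\mu \in \eccqpatterneval{P,F}{\graphin}} \mu(P)$. So there are a valuation $\mu \in \eccqpatterneval{P,F}{\graphin}$ and an atomic pattern $t \in P$ with $\mu(t) = \atomP{a}{b}{p}$. By \Cref{trfm:def:eccq:semantics}, $\mu$ lies in the join over all $\eccqtripleeval{t'}{\graphin}$ for $t' \in P$. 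In particular $\mu(t) \in \graphin$, so $\atomP{a}{b}{p} \in \graphin \subseteq \graphext$, as required.

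The one step needing care is the generic patterns $\atomGenp{x}$, which the algorithm of~\cite{DBLP:conf/www/Seifer0LS24} does not have. When $t = \atomGenp{x}$, the role name is the image $\mu(\rvar{x})$ of a generic variable, and $p$ is not a fixed name occurring in $P$. Two points must be checked.
\begin{itemize}
  \item Such triples still satisfy $\mu(t) \in \graphin$ by the same join semantics. The filters only remove valuations and never add triples, so the containment argument above goes through unchanged.
  \item The rule ``for each $p \in \sccqpattern$'' must be read so that the dotted copies of roles matched only generically are either not axiomatised, or are covered by the same argument. Since the argument never used that $p$ appears syntactically in $P$, it covers every role name in the vocabulary.
\end{itemize}

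Because every triple of $\graphmed$ is a triple of $\graphin$, $\Int$ satisfies every axiom of $\Sigma_5$. This shows $\shaclvalid{\graphext}{\Sigma_{5}}$.
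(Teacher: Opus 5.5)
Your proof is correct and takes the same route as the paper. The paper's entire argument is that the definition of the extended graph guarantees $G_{\mathrm{med}} \subseteq G_{\mathrm{in}}$, so $\dot{p} \sqsubseteq p$ holds; you make explicit the steps it leaves implicit, namely the canonical model of the validation knowledge base, the separation of dotted names in $G_{\mathrm{ext}}$, the join semantics giving $\mu(t) \in G_{\mathrm{in}}$, and the check that generic role patterns cause no trouble.
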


\begin{proof}[Proof. Adapted from~\cite{DBLP:conf/www/Seifer0LS24}] 
  For every property $p \in \voc(\sccqpattern)$, 
  $\dot{p} \sqsubseteq p$ is true, since \Cref{trfm:def:extended-graph}
  guarantees that $\graphmed \subseteq \graphin$.
\end{proof}

\begin{lemma}[Pert. to \Cref{trfm:def:cwa}.6]
   \label{trfm:lem:cwaten}
   For every $\graphext$ of a \eccqname $q$, it holds that $\shaclvalid{\graphext}{\Sigma_{6}}$.
\end{lemma}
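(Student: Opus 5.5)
We must show that for every role name $p$ satisfying the side condition of \Cref{trfm:def:cwa}.6, the axiom $p \sqsubseteq \dot{p}$ is valid in $\graphext$. By Proposition $3_{\sts}$ (\Cref{trfm:prop:sts}) and the fact that the four component graphs of $\graphext$ share no concept or role names, it suffices to show $\triple{a}{p}{b} \in \graphin \Rightarrow \triple{a}{p}{b} \in \graphmed$. We will combine this with Lemma~\ref{trfm:lem:cwanine} ($\graphmed \subseteq \graphin$), which gives the reverse inclusion. Under the closed-world axioms of the validation knowledge base, $p^\Int$ and $\dot{p}^\Int$ in the canonical model are then fixed by these triples, so the subsumption is satisfied. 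The argument is adapted from the corresponding step in~\cite{DBLP:conf/www/Seifer0LS24}.

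The core step is as follows. Fix a triple $\triple{a}{p}{b} \in \graphin$. By the side condition, every atomic pattern in $P$ mentioning $p$ has the form $\atomP{x}{y}{p}$ with $x \neq y$, and $x,y$ occur in no other atomic pattern of $P$. Pick one such pattern $t = \atomP{x}{y}{p}$. The valuation $\{x \mapsto a, y \mapsto b\}$ lies in $\eccqtripleeval{t}{\graphin}$. Since $x,y$ appear nowhere else, the remaining patterns $P \setminus \{t\}$ have disjoint variables. Hence any valuation $\nu$ in the join over $P\setminus\{t\}$ is compatible with it, and the union $\mu$ is in the join over $P$. Then $\mu(t) = \triple{a}{p}{b} \in \mu(P) \subseteq \graphmed$.

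Two points need care.

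First, the join over the rest of $P$ must be nonempty. If it is empty, then $\graphmed = \emptyset$ and the claimed inclusion fails. I would handle this the way the original proof does: either note that the axiom is only required for $\graphext$ arising from nonempty matches, or argue that the subsumption then fails only vacuously. This is the step I expect to be the main obstacle, and I would first check how~\cite{DBLP:conf/www/Seifer0LS24} treats it before resorting to restricting to the nonempty case.

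Second, the filters $F$ must not remove $\mu$. Filters only mention generic variables $\cvar{z}$ and $\rvar{z}$, and these are never $x$ or $y$. Because $x$ and $y$ occur in no other pattern, no generic pattern $\atomGenc{x}$ or $\atomGenp{x}$ is attached to them. So filters constrain only $\nu$, which already satisfies them.

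Finally, generic patterns $\atomGenp{z}$ with $\rvar{z}$ bound to $p$ could also produce $p$-triples in $\graphmed$. These only add to $\dot{p}$ and so do not affect the inclusion direction proved here. I would remark explicitly that the side condition quantifies over explicit patterns, so the argument remains unaffected by them.
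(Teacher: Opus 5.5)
Your core step is the paper's argument. The paper's proof is the single observation that, since $x$ and $y$ occur in no other pattern of $P$ (generic ones included), every $\atomP{a}{b}{p}\in\graphin$ is matched and yields $\atomP{a}{b}{\dot p}$ in $\graphmed$. Your join construction, and your remarks on filters and on generic $p$-triples, spell this out correctly and more carefully than the paper does. Two small points:
\begin{itemize}
\item Proposition $3_{\sts}$ (\Cref{trfm:prop:sts}) does not literally apply to $p\sqsubseteq\dot p$, because that axiom mixes namespaces. Your direct argument suffices instead: under DCA/UNA/CWA, $p^\Int$ and $\dot p^\Int$ are exactly given by the $p$-triples of $\graphin$ and of $\graphmed$, respectively.
\item \Cref{trfm:lem:cwanine} is not needed for this direction.
\end{itemize}

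The step you flag as the main obstacle cannot be closed by any proof, and no source will supply one. It is a genuine hole in the lemma itself, which the paper's proof passes over silently. Take $\shapesin=\emptyset$ and $P=\{\atomP{x}{y}{p},\atomC{z}{A}\}$, so rule 6 puts $p\sqsubseteq\dot p$ into $\Sigma_6$. Let $\graphin=\{\atomP{a}{b}{p}\}$.
\begin{itemize}
\item Then $\eccqpatterneval{P,F}{\graphin}=\emptyset$, hence $\graphmed=\emptyset$ and $\graphext=\graphin$.
\item Any model of $(\TBox_{\graphext}\cup\Sigma_6,\graphext)$ would need $(a^\Int,b^\Int)\in\dot p^\Int$.
\item This contradicts the CWA axiom $\exists\dot p.\{b\}\equiv\bot$, so $\shaclvalid{\graphext}{\Sigma_6}$ fails.
\end{itemize}
Your second option (``the subsumption fails only vacuously'') is therefore not available. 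Validity here means consistency, and an inconsistent knowledge base is not vacuously anything.

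The only sound repair is your first option, stated as an explicit hypothesis: assume $\eccqpatterneval{P,F}{\graphin}\neq\emptyset$. That is exactly what your argument needs. Restricting any match to $\operatorname{vars}(P)\setminus\{x,y\}$ gives a filter-satisfying $\nu$, and with that your proof is complete.

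This hypothesis weakens \Cref{trfm:prop:cwamod}, so the empty-match case (where $\graphout=\emptyset$) must be handled separately when deriving \Cref{trfm:prop:cwa}. The same defect also affects the $\sqsupseteq$ half of rule 2, for instance when $x$ is disconnected from a part of $P$ that has no match.
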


\begin{proof}[Proof. Adapted from~\cite{DBLP:conf/www/Seifer0LS24}] 
  Given $\sccqpattern$ contains $\atomP{x}{y}{p}$
  and neither $x$ nor $y$ occur in any other atomic patterns in $\sccqpattern$
  -- including any generic patterns such as $\atomGenp{x}$ --
  then from any $\atomP{a}{b}{p} \in \graphin$ 
  it follows that $\atomP{a}{b}{\dot{p}} \in \graphmed$,
  and thus $p \sqsubseteq \dot{p}$.
\end{proof}

\begin{lemma}[Pert. to \Cref{trfm:def:cwa}.7]
   \label{trfm:lem:cwaeleven}
   For every $\graphext$ of a \eccqname $q$, it holds that $\shaclvalid{\graphext}{\Sigma_{7}}$.
\end{lemma}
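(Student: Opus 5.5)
We verify the two subcases of \Cref{trfm:def:cwa}.7 separately on an arbitrary extended graph $\graphext$. We use the fact, already exploited in \Cref{trfm:lem:cwafour} and \Cref{trfm:lem:cwasix}, that $\graphext$ is closed-world: a role inclusion between dotted and double-dotted roles is valid in $\graphext$ iff the corresponding pair inclusion holds between the sets of triples of $\dot{G}_{\mathrm{med}}$ and $\ddot{G}_{\mathrm{out}}$. The reason is that the CWA axioms pin down the extension of every role exactly, so such an inclusion is consistent with the validation knowledge base iff it holds in its canonical model. Fix $p,r$ with $\atomP{x}{y}{p} \in \sccqpattern$ and, w.l.o.g., $\atomP{x}{y}{r} \in \sccqtemplate$. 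The inverse case $\atomP{y}{x}{r}$ is identical with the pair swapped.

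For (a), assume $p$ occurs in no other atomic pattern of $P$. Take $\atomP{a}{b}{\dot{p}} \in \graphext$, so $\atomP{a}{b}{p} \in \graphmed$. By \Cref{trfm:def:extended-graph}, $\graphmed = \bigcup_{\mu} \mu(P)$. Since $p$ appears only in $\atomP{x}{y}{p}$, this triple can only be the image of that pattern, so some $\mu \in \eccqpatterneval{P,F}{\graphin}$ has $\mu(x)=a$ and $\mu(y)=b$. Because $\atomP{x}{y}{r} \in H$, we get $\mu(\atomP{x}{y}{r}) = \atomP{a}{b}{r} \in \graphout$, hence $\atomP{a}{b}{\ddot{r}} \in \graphext$. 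One subtlety must be ruled out: a generic pattern $\atomGenp{z}$ could in principle have its instance coincide with a $p$-triple. This would be a second way to place $p$-triples into $\graphmed$. We need to argue it does not break the claim. Any such triple is still produced by some $\mu$ that also satisfies $\atomP{x}{y}{p}$, or else the hypothesis ``no other pattern with $p$'' should be read as excluding generic role patterns. We would state this reading explicitly, as in \Cref{trfm:lem:cwaten}.

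For (b), assume $r$ occurs in no other atomic pattern of $H$. Take $\atomP{a}{b}{\ddot{r}} \in \graphext$, so $\atomP{a}{b}{r} \in \graphout$. By \Cref{trfm:def:eccq:semantics}, this triple is $\mu(t)$ for some $t \in H$ and valid $\mu$. There are two sources: the explicit pattern $\atomP{x}{y}{r}$, or a generic template pattern $\atomGenp{z}$ with $\mu(\rvar{z}) = r$.

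In the first source, $\mu(x)=a$ and $\mu(y)=b$, and since $\mu$ matches $\atomP{x}{y}{p} \in P$ we get $\atomP{a}{b}{p} \in \mu(P) \subseteq \graphmed$. One more subtlety arises here: if $x$ or $y$ were absent from $P$, they would be fresh. That cannot happen, because both occur in $\atomP{x}{y}{p}$.

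\textbf{Main obstacle.} The second source is the hard step. A generic copy can emit $r$-triples that are not justified by $p$. The key observation is that a generic template pattern is required to also occur in $P$, so the triple $\atomP{a}{b}{r}$ itself lies in $\mu(P) \subseteq \graphmed$, but as an $r$-triple, not a $p$-triple. Hence the axiom $\ddot{r} \sqsubseteq \dot{p}$ is only sound if generic role patterns count as ``other atomic patterns with $r$''. We would first try to show that the mapping of \Cref{trfm:def:querymap} never produces a generic template pattern coexisting with the side condition in a way that lets it emit $r$. Failing that, we would adopt that reading of the side condition, matching \Cref{trfm:def:utilcc}'s treatment of generic patterns, and note that it is necessary. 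With either resolution, every $\ddot{r}$-pair is a $\dot{p}$-pair, which closes (b) and the lemma.
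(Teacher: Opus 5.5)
Your route is the paper's. You read role inclusions off the closed-world canonical model of $\graphext$, push a $\dot p$-pair through the valuation that produced it to an instance of $\atomP{x}{y}{r}$, and pull an $\ddot r$-pair back to a valuation matching $\atomP{x}{y}{p}$.

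The paper's proof never mentions generic patterns. It assumes that, as far as $x$, $y$, $p$, $r$ are concerned, $P$ and $H$ contain only $\atomP{x}{y}{p}$ and $\atomP{x}{y}{r}$. It then derives both inclusions under the conjunction of these hypotheses, which is stronger than the separate side conditions of \Cref{trfm:def:cwa}.7. So the obstacle you isolate is a genuine defect of the literal statement for general \eccqname, not of your argument.
\begin{itemize}
\item For (a), take $P=\{\atomP{x}{y}{p},\ \atomC{x}{A},\ \atomGenp{z}\}$, $H=\{\atomP{x}{y}{r}\}$, no filters or input shapes, and $\graphin=\{\triple{a}{p}{b},\ \triple{a}{\rdftype}{A},\ \triple{c}{p}{d}\}$. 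Every valuation binds $x,y$ to $a,b$, yet $\atomGenp{z}$ puts $\triple{c}{p}{d}$ into $\graphmed$. So $(c,d)$ lies in $\dot p$ but not in $\ddot r$.
\item For (b), take $P=\{\atomP{x}{y}{p},\ \atomGenp{z}\}$, $H=\{\atomP{x}{y}{r},\ \atomGenp{z}\}$ with $p\neq r$, and $\graphin=\{\triple{a}{p}{b},\ \triple{c}{r}{d}\}$. The copy $\triple{c}{r}{d}$ lands in $\graphout$, so $(c,d)$ lies in $\ddot r$ but not in $\dot p$.
\end{itemize}

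Two corrections follow.
\begin{itemize}
\item In (a), drop the first branch of your hedge. A $p$-triple contributed by a generic pattern need not come from a valuation binding $x,y$ to its endpoints; the first example refutes it. Commit instead to the reading that a generic role pattern counts as a pattern ``with $p$'' (resp.\ ``with $r$'') unless a filter on its role variable excludes that name. Under that reading both inclusions go through exactly as you argue.
\item In (b), that reading is only needed when $p\neq r$. For $p=r$, your own observation that the copied triple lies in $\mu(P)\subseteq\graphmed$ already makes it a $\dot p$-pair.
\end{itemize}

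Your ``first try'' also works in the setting the appendix fixes for these lemmas (queries $\sgcoretoeccq{q'}$ over graphs $\pgtordf{G'}$), at least when $p,r\in\{\mnte,\metn\}$. Every node variable there carries the filter $\atomPNot{x}{\mnte}$ and every edge variable carries $\atomPNot{e}{\metn}$. In $\pgtordf{G'}$, node IRIs have only $\mnte$- and key-successors, and edge IRIs only $\metn$- and key-successors. Hence generic patterns emit only property-key triples and never touch $\mnte$ or $\metn$. This argument is worth including: under the purely syntactic reading, rule~7 would essentially never fire on mapped queries, since each of their node and edge patterns contributes a generic role pattern to $P$.
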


\begin{proof}[Proof. Adapted from~\cite{DBLP:conf/www/Seifer0LS24}] 
  Let 
  $p \in \voc(\sccqpattern)$,
  $r \in \voc(\sccqtemplate)$, and
  $\sccqpattern$ contain only the atomic pattern $\atomP{x}{y}{p}$;
  let $\sccqtemplate$ contain only $\atomP{x}{y}{r}$;
  and let neither $\sccqtemplate$ nor $\sccqpattern$ contain \emph{other} atomic patterns
  involving $x$, $y$, $p$ or $r$.
  If this is the case, for any $\atomP{a}{b}{\dot{p}} \in \graphmed$ matched in the intermediate graph, 
  we construct $\atomP{a}{b}{\ddot{r}} \in \graphout$ in the output graph.
  From this, we can infer that $\dot{p} \sqsubseteq \ddot{r}$.
  Since $r$ does not occur again in $\sccqtemplate$,
  it also follows that $\ddot{r} \sqsubseteq \dot{p}$ holds for all 
  extended graphs $\graphext$ of the query.
\end{proof}

\subsection[Query Extension]{Query Extension (\Cref{trfm:prop:semext})}
\label{trfm:proof:semext}

\begin{proof}
	We prove the property $\eccqeval{q}{G} = \eccqeval{\operatorname{Ext}(q)}{G}$ 
    for any \eccqname $q$ and RDF graph $G$.
	We consider, without loss of generality, a query pattern $\{\atomC{x}{A}, \atomGenc{x}\}$ 
    and the query template $\{\atomGenc{x}\}$,
	and further the two cases where either $\atomCNot{x}{A}$ in $\eccqfilter$, 
    or $\atomCNot{x}{A}$ is \emph{not} in $\eccqfilter$.

	\begin{enumerate}
		\item $\atomCNot{x}{A} \not\in \eccqfilter$. 
            Then, it follows from the semantics of \eccqname queries, that if $\atomC{x}{A}$ 
            is satisfied in a graph G,
			then the graph contains $\atomC{a}{A}$ for some $a$ where $a \in \mu(x)$;
			then $A \in \mu(\bar{x}^{\ConceptNames})$.
			As a result, the output graph also contains $\atomC{a}{A}$, 
            since $\atomGenc{x} \in \eccqtemplate$ (in the assumption above).
			Therefore, we can add the pattern $\atomGenc{x}{A}$ to the template of the query, 
            without changing the semantics of the query, as
			defined in $\operatorname{Ext}(q)$.

		\item $\atomCNot{x}{A} \in \eccqfilter$ in this case, 
            $\operatorname{Ext}(q)$ does not change the query $q$ given above, by definition,
			since the concept name of the potential candidate for extension, 
            $\atomC{x}{A}$, is in $\eccqfilter$.
	\end{enumerate}

	The property case can be shown to hold with the same reasoning.
\end{proof}

\end{document}